\documentclass[a4paper, twoside, 11pt]{article}
\usepackage{geometry}
\usepackage{amsmath, amsthm, amssymb, dsfont, mathrsfs, BOONDOX-cal, upgreek}
\usepackage{graphicx, caption, float, multirow}
\usepackage{chngcntr, etoolbox}
\usepackage[numbers]{natbib}

\usepackage[T1]{fontenc}

\usepackage{hyperref}
\usepackage[dvipsnames]{xcolor}
\hypersetup{colorlinks=true, linkcolor=NavyBlue, citecolor=NavyBlue}

\usepackage[title]{appendix}

\usepackage{fancyhdr}
\usepackage{txfonts}
\usepackage[T1]{fontenc}

\allowdisplaybreaks

\counterwithin*{equation}{section}
\counterwithin*{equation}{subsection}
\renewcommand\theequation{\ifnumgreater{\value{subsection}}{0}{\thesubsection.}{\thesection.}\arabic{equation}}

\newtheoremstyle{theorem}
  {11pt}
  {11pt}
  {\sl}
  {}
  {\bf}
  {. }
  { }
  {}
\theoremstyle{theorem}

\newtheorem{theorem}{Theorem}[section]
\newtheorem{proposition}[theorem]{Proposition}
\newtheorem{lemma}[theorem]{Lemma}

\newtheorem{remark}[theorem]{Remark}
\newtheorem{definition}[theorem]{Definition}

\newtheorem{assumption}[theorem]{Assumption}

\begin{document}

\title{\textbf{Insider and stealth trading with dynamic legal risk}}
\author{Bixing Qiao\thanks{Department of Mathematics, University of Southern California. Email: bqiao@usc.edu} \and Weixuan Xia\thanks{Department of Mathematics, University of Southern California. Email: weixuanx@usc.edu}}
\date{2026}
\maketitle

\begin{abstract}
  The present paper investigates how insiders strategically navigate ongoing legal risk while leveraging stealth trading within a continuous-time Kyle-type framework. Legal enforcement operates concurrently with trading, which dynamic can be adversely obscured by a large surrounding population of noise traders. While surveillance intensity responds directly to the insider's trading intensity, triggering a random prosecution time, the resulting legal sanctions encompass both strategy-focused criminal penalties and profit-dependent civil penalties. Employing a new impact-neutral measure change, equilibrium analysis shows that even after achieving stealth, the insider internalizes regulatory exposure, and enforcement can significantly shape equilibrium trading strategies. The associated limiting equilibria yield a rich set of outcomes, with three key insights for regulatory impact: (i) under committed regulatory scrutiny, the insider trades a time-varying function of the discrepancy between the asset's fundamental value and its market price, and trading may intensify indefinitely near the end of the trading horizon as legal risk recedes; (ii) merely raising penalties as an advantageous selection cost proves ineffective in offsetting declines in regulatory diligence; (iii) criminal penalties remain essential for deterring aggressive insider trading, as they impose critical temporal constraints on trading intensity not achievable through civil penalties alone. \medskip\\
  \textsc{JEL Classifications:} C73; G14; G18 \medskip\\
  \textsc{Keywords:} Insider trading; stealth trading; random prosecution time; hybrid penalties; impact-neutral measure; regulatory impact
\end{abstract}

\newcommand{\dd}{{\rm d}}
\newcommand{\pd}{\partial}
\newcommand{\PP}{\mathbb{P}}
\newcommand{\E}{\mathbb{E}}
\newcommand{\Var}{\mathrm{Var}}
\newcommand{\1}{\mathds{1}}
\newcommand{\supp}{\mathrm{supp}}
\newcommand{\sgn}{\mathrm{sgn}}
\renewcommand{\Omega}{\varOmega}
\renewcommand{\Theta}{\varTheta}
\renewcommand{\Pi}{\varPi}
\renewcommand{\Lambda}{\varLambda}
\renewcommand{\Phi}{\varPhi}
\renewcommand{\Psi}{\varPsi}
\renewcommand{\Gamma}{\varGamma}
\renewcommand{\Delta}{\varDelta}

\def\a{\alpha}
\def\b{\beta}
\def\g{\gamma}
\def\d{\delta}
\def\e{\epsilon}
\def\z{\zeta}
\def\k{\kappa}
\def\l{\lambda}
\def\m{\mu}
\def\n{\nu}
\def\si{\sigma}
\def\t{\tau}
\def\f{\varphi}
\def\th{\theta}
\def\o{\omega}
\def\h{\widehat}
\def\G{\Gamma}
\def\D{\Delta}
\def\Th{\Theta}
\def\L{\Lambda}
\def\Si{\Sigma}
\def\F{\Phi}
\def\tl{\tilde}

\def\cA{{\cal A}}
\def\cB{{\cal B}}
\def\cC{{\cal C}}
\def\cD{{\cal D}}
\def\cE{{\cal E}}
\def\cF{{\cal F}}
\def\cG{{\cal G}}
\def\cH{{\cal H}}
\def\cI{{\cal I}}
\def\cJ{{\cal J}}
\def\cK{{\cal K}}
\def\cL{{\cal L}}
\def\cM{{\cal M}}
\def\cN{{\cal N}}
\def\cO{{\cal O}}
\def\cP{{\cal P}}
\def\cQ{{\cal Q}}
\def\cR{{\cal R}}
\def\cS{{\cal S}}
\def\cT{{\cal T}}
\def\cU{{\cal U}}
\def\cV{{\cal V}}
\def\cW{{\cal W}}
\def\cX{{\cal X}}
\def\cY{{\cal Y}}
\def\cZ{{\cal Z}}

\def\dbB{\mathbb{B}}
\def\dbC{\mathbb{C}}
\def\dbD{\mathbb{D}}
\def\dbE{\mathbb{E}}
\def\dbF{\mathbb{F}}
\def\dbG{\mathbb{G}}
\def\dbH{\mathbb{H}}
\def\dbI{\mathbb{I}}
\def\dbJ{\mathbb{J}}
\def\dbK{\mathbb{K}}
\def\dbL{\mathbb{L}}
\def\dbM{\mathbb{M}}
\def\dbN{\mathbb{N}}
\def\dbP{\mathbb{P}}
\def\dbR{\mathbb{R}}
\def\dbS{\mathbb{S}}
\def\dbT{\mathbb{T}}
\def\dbQ{\mathbb{Q}}
\def\dbZ{\mathbb{Z}}

\medskip

\section{Introduction}\label{S:1}

The celebrated Kyle model, introduced in Kyle's pioneering work in \citeyear{K85}, provides a foundational framework for understanding strategic informed trading and price impact. In its canonical form, the model depicts a strategic game between a privately informed trader and a competitive market maker, with noise traders -- modeled as a single representative agent -- acting as exogenous participants. In particular, the insider exploits the presence of noise traders and strategically trades a risky asset, taking advantage of his private information about its value to maximize expected gains. The resulting equilibrium carries important implications on the balance between profit and concealment, establishing a formal link between market liquidity and information asymmetry.

Following Back's (\citeyear{B92}) seminal extension of the Kyle model to a continuous-time framework, or what is now also known as the Kyle--Back model (see, e.g., Ma, Sun, and Zhou \citeyear{MSZ18}), a substantial body of theoretical and empirical literature has emerged and provided significant generalizations to explore rich market implications of the original model. For instance, these include dynamic asymmetric information that evolves over time (see, e.g., Back and Peterson \citeyear{BP98}, Back, Crotty, and Li \citeyear{BCL18}, \c{C}etin \citeyear{C18}, Ma and Tan \citeyear{MT22}, and Ekren, Mostowski, and \v{Z}itkovi\'{c} \citeyear{EMZ25}), risk-averse preferences of the insider (see Bose and Ekren \citeyear{BE23}, \citeyear{BE24}), a random deadline for publicly announcing the asset value (\text{cf.} Back and Baruch \citeyear{BB04}, Caldentey and Stacchetti \citeyear{CS10}, \c{C}etin \citeyear{C18}, Corcuera, Di Nunno, and Fajardo \citeyear{CDNF19}, and Qiu and Zhou \citeyear{QZ25b}), and the consideration of stochastic liquidity (see, e.g., Collin-Dufresne and Fos \citeyear{C-DF16}, and Collin-Dufresne, Fos, and Muravyev \citeyear{C-DFM21}, and Ekren, Mostowski, and \v{Z}itkovi\'{c} \citeyear{EMZ25}) or long-short memory (\text{cf.} Biagini \text{et al.} \citeyear{BHM-BO12} and Yang, He, and Huang \citeyear{YHH21}).

Notwithstanding the interesting extensions noted above, in the present paper we examine the insider trading problem within the Kyle--Back model framework featuring \textsl{dynamic legal risk}. Notably, Kyle-type models with legal risk have become a topic of considerable interest in recent literature, which account for the (illegal) insider's motive for mitigating legal penalties in addition to maximizing profit. Such legal penalties are paramount to addressing the fundamental difference between illegal insider trading and normal, legal informed trading\footnote{The Securities and Exchange Commission (SEC) clearly distinguishes between illegal insider trading and the legal reporting of transactions by corporate insiders; see \href{https://www.investor.gov/introduction-investing/investing-basics/glossary/insider-trading}{this Fast Answers page}.}
and can be regarded as additional advantageous selection costs associated with insider trading. Existing works have provided useful insights into the optimal design of regulatory actions (DeMarzo, Fishman, and Hagerty \citeyear{DFH98} and Carr\'{e}, Collin-Dufresne, and Gabriel \citeyear{CC-DG22}) and the deterrent effect of legal enforcement on insider trading (Frino \text{et al.} \citeyear{FSWZ13}, Kacperczyk and Pagnotta \citeyear{KP24}, \c{C}etin \citeyear{C25}, and Ma, Xia, and Zhang \citeyear{MXZ25}). In particular, as the first to incorporate legal risk into a continuous-time Kyle model, \c{C}etin (\citeyear{C25}) revealed the suboptimality for the insider to align the market price of the risky asset with its fundamental value by the terminal time.

More specifically, consideration of legal risk introduces an additional agent to the standard model setting, namely an exogenous regulator. While the noise traders remain liquidity providers in the market, the regulator launches investigations into the (illegal) insider's trading behavior with some generally state-dependent probability. Then, depending on his actual trading strategies and illicit profit, the insider is successfully prosecuted with a certain (conditional) probability. In this framework, the insider internalizes legal risk into his objective function by contemplating an endogenous penalty cost (advantageous selection) over his expected gains from trade (Kacperczyk and Pagnotta \citeyear{KP24} and Ma, Xia, and Zhang \citeyear{MXZ25}). Such strategic conduct has profound implications for the insider's optimal strategy, for the threat of prosecution with severe consequences readily precludes arbitrarily large trades.

In illegal insider trading, a recently recognized and important consideration is ``stealth trading,'' a strategy by which the insider conceals his identity within a large quantity of surrounding trades, concentrating on medium-sized trades, leading to the so-called ``camouflage effect.'' In this regard, Ma, Xia, and Zhang (\citeyear{MXZ25}) are the first to establish a Kyle-type model framework to coalesce stealth trading into insider trading for an examination of the intrinsic link between the insider's trade size choices and the liquidity provided by noise traders, quantifying the scale of such trade sizes by a stealth index ($\gamma$). Through comprehensive empirical experiments, the study also confirmed implications of the stealth trading hypothesis (Barclay and Warner \citeyear{BW93} and Chakravarty \citeyear{C01}), trade size clustering (\text{cf.} Chang, Pinegar, and Schachter \citeyear{CPS97}, Alexander and Peterson \citeyear{AC07}, and Fei and Xia \citeyear{FX24}), and diminished price impact due to the legal deterrence (\text{cf.} Patel and Put\c{n}in\v{s} \citeyear{PP21} and Kacperczyk and Pagnotta \citeyear{KP24}).

In the present paper, we shall integrate stealth trading into the analysis of dynamic legal risk as well. Such an operation not only easily justifies the use of a Gaussian process\footnote{In the existing literature, a Brownian motion is often adopted for simplicity (see, e.g., Bose and Ekren \citeyear{BE23} and \c{C}etin \citeyear{C25}). The use of a general Gaussian process presupposes that noise traders trade independently -- without systematic interaction -- while allowing for time-dependent variations in their trades.} for modeling noise traders' order flow (in light of the Central Limit Theorem) but also helps simplify the equilibrium structure significantly. The limiting equilibrium, which exhibits negligible price impact, is well-suited to examine how the insider balances the competing objectives of profit maximization and concealment to avoid detection and prosecution. In this limiting case, somewhat remarkably, the corresponding game problem ``deforms'' into a deterministic control problem, enabling us to establish solutions with standard methods. The basic intuition is that as price impact diminishes, the insider seeks to optimize his legal risk-adjusted expected profit in an intrinsically certain, ``risk-free'' manner, in which penalties are still internalized, albeit with no explicit dependence on the market price of the asset.

On a technical level, our approach makes use of the recently proposed ``weak formulation'' for the Kyle--Back model (Qiao and Zhang \citeyear{QZ25a}), which transforms the net order flow into an exogenous diffusion term by means of a measure change, greatly facilitating subsequent equilibrium analysis and permitting general path-dependence properties. The validity of this formulation relies on the presence of a cost functional, which can be directly connected to legal penalties. In this paper, we shall refer to it as an ``alternative formulation'' due to its equivalence to the standard formulation when the insider's strategies are square-integrable in time -- the latter property governed by legal risk.

Noticeably, \c{C}etin (\citeyear{C25}) restricts legal risk to the terminal horizon, meaning that detection (and prosecution) of the insider can only happen at a single point in time, in spite of continuous trading. While this assumption is inevitable in a one-period framework (as done in Carr\'{e}, Collin-Dufresne, and Gabriel \citeyear{CC-DG22} and Ma, Xia, and Zhang \citeyear{MXZ25}), it presents a notable limitation in continuous time, where regulatory investigation and trading occurs concurrently (e.g., compare the two-period model in Kacperczyk and Pagnotta \citeyear{KP24}; see also Frino \text{et al.} \citeyear{FSWZ13}), adjusting dynamically for each other. Just as an example, if the insider executes an extremely aggressive trade before the terminal time, regulators can be alerted well before public announcement of the asset price, preemptively halting further trades by the insider.

To account for legal risk in a dynamic fashion, in this paper we propose an intensity-based approach, inspired from the credit modeling literature (see, e.g., Gourieroux, Monfort, and Polimenis \citeyear{GMP06}). In essence, this approach specifies an intensity process that governs the regulator's investigation scheme, subject to temporal variation. Comprehensibly, the intensity process should depend on the insider's concurrent trades and should increase as the insider begins to trade more aggressively. Besides, it also varies with the population size of noise traders because the latter directly complicates the investigation process.\footnote{We refer to Ma, Xia, and Zhang (\citeyear{MXZ25}) \text{Sect.} 1 for a detailed discussion on this aspect.} The intensity process structurally enables the introduction of a random prosecution time (a ``default'' time in a sense),\footnote{This random time feature is technically comparable to the random announcement deadline considered in the literature (\c{C}etin \citeyear{C18}, Corcuera, Di Nunno, and Fajardo \citeyear{CDNF19}, etc.), though it is tied to the structure of legal risk in this paper.} at which point the insider is barred from further trading, his private information immediately published, and the asset value fully disclosed.

Moreover, we come up with a partially convex bivariate function to encompass a broad spectrum of legal penalty structures. For example, this significantly extends the quadratic, purely strategy-dependent penalty used in \c{C}etin (\citeyear{C25}) \text{Sect.} 2. The resulting functional form is capable of covering very general penalty structures, as considered in Ma, Xia and Zhang (\citeyear{MXZ25}), including polynomial, or endogenous, profit-dependent penalties (\text{cf.} Shin \citeyear{S96}, Carr\'{e}, Collin-Dufresne, and Gabriel \citeyear{CC-DG22} \text{Sect.} 3.4, and Kacperczyk and Pagnotta \citeyear{KP24} \text{Sect.} II). More specifically, the bivariate function is constructed by applying an aggregation function to a civil penalty component and a criminal penalty component, offering sufficient generality under practical circumstances. Furthermore, with dynamic legal risk, penalty imposition must account for the entire history of trades rather than (necessarily) focusing on behavior at any single time point. To model this, we introduce an instantaneous penalty rate function featuring a concentration parameter gauging the penalty intensity -- more precisely, whether adjudication is distributed evenly over time or concentrated around distinctly aggressive trades.

With the above considerations, the resulting equilibria (limiting or not) are essentially path-dependent, leaning on the insider's entire trade history, which feature poses significant technical challenges. Using the aforementioned alternative formulation, we are able to partially characterize a finite-population equilibrium in terms of a coupled system of backward stochastic differential equations (BSDEs), hence warranting the derivation of necessary conditions. In the limiting case, the decoupling of the same system allows us to view it as deterministic optimization and subsequently analyze it in full capacity. Analysis of the limiting equilibria also recovers the stealth index choice concluded from Ma, Xia and Zhang (\citeyear{MXZ25}) (\text{Sect.} 3 and \text{Sect.} 5), with joint dependence on the detection mechanism and penalty composition.

Most important, the limiting equilibria that we are bound to construct, by leveraging the intrinsic nexus between (illegal) insider trading and stealth trading, accommodate dynamic legal investigations and facilitate an in-depth examination of their impact on informed trading strategies. Economically, it elucidates how specific detection mechanisms and penalty structures can induce time-varying -- and at times locally explosive -- trading behavior, further yielding a sequence of temporally rich outcomes. In more detail, while the stealth level of insiders is quantified in much the same fashion as in Ma, Xia, and Zhang (\citeyear{MXZ25}), in terms of the stealth index ($\gamma$), which is governed by three coefficients (specifically, $\beta$, $\eta$, and $\alpha$) linked to the population-size elasticity of regulatory efficacy and the penalization intensity for large trades, the temporal dimension is in large part dictated by a distinct set of parameters ($p$, $b$, and $c$). These additional parameters capture the historical concentration of penalties (whether regulators prioritize singular aggressive trades or an average measure of misconduct) as well as the penalty multiplier (applied directly to civil penalties and indirectly to criminal ones) as implemented under the \textsc{Insider Trading and Securities Fraud Enforcement Act of 1988} (ITSFEA). Understandably, such nuanced analysis would not be feasible within a static or terminal horizon-based framework, as it necessitates explicit modeling of the temporal variation in legal enforcement mechanisms.

In this aspect, it is also critical to recognize that for an (illegal) insider, legal risk concerns should far outweigh price impact. Indeed, by exploiting considerable noise trading and adequately ``camouflaging,'' the insider can associate trades with significantly reduced price impact (Kacperczyk and Pagnotta \citeyear{KP24} and Ma, Xia, and Zhang \citeyear{MXZ25}); however, legal risk remains an irreducible and non-diversifiable component, as our results demonstrate.\footnote{On the other hand, price impact can well arise from legal insider trading where properly reported transactions (via SEC Form 4), not based on material non-public information, contribute legitimately to price discovery. This scenario lies outside the scope of our study, which focuses on the legal risk associated with illicit trading.} This highlights a central benefit of the mentioned limiting equilibria, along with four key insights. First, under committed investigation effort, the insider optimally employs a time-varying function of the price discrepancy, which is not necessarily a constant multiple of the latter (in contrast with a key finding of \c{C}etin \citeyear{C25}). Second, as the terminal horizon approaches, the diminishing scope for legal action provides the insider with an incentive to maintain -- or even intensify, significantly -- trading aggressiveness, which outcome is comparable to Back (\citeyear{B92}), Caldentey and Stacchetti (\citeyear{CS10}), and \c{C}etin (\citeyear{C18}) in the absence of legal risk. Third, na\"{\i}vely escalating penalties cannot substitute for regulatory diligence in that a decline in enforcement incentivizes more aggressive trading; this is at bottom to raise the cost of advantageous selection without satisfactorily countering the insider's speculative calculus when perceived detection risk is low. Fourth, strategy-focused criminal penalties are much more efficacious in deterring insider trading than civil charges alone, for the latter create scope for varied temporal trade allocations, which will be illustrated by establishing the existence of infinitely many equilibria, whereas the former yields a uniquely constrained equilibrium.

In a nutshell, the present paper contributes to the existing literature on insider trading and private information through three major innovations. First, we develop a Kyle--Back-type model framework for studying insider trading with legal risk, where legal risk can manifest concurrently with insider trading. Such dynamism is paramount to considering profit-dependent penalties in continuous time, significantly overcoming the limitations of strategy-focused penalties typically reserved for criminal cases (under the ITSFEA). In particular, as noted in \c{C}etin (\citeyear{C25}), profit-dependent penalization is infeasible within a static detection framework, which assumes that detection occurs at a single point in time. However, after introducing a dynamic investigation scheme, we shall see that such a penalty structure is not only implementable but also yields significant economic insights -- primarily offering a flexible and responsive deterrent mechanism for insider trading. In this sense, this paper provides a dynamic extension of the one-period and two-period models studied in Carr\'{e}, Collin-Dufresne, and Gabriel (\citeyear{CC-DG22}) and Kacperczyk and Pagnotta (\citeyear{KP24}), respectively, while completely covering the choice of penalty functions therein.

Second, the aforementioned alternative formulation, while standard in the stochastic control literature (including Qiao and Zhang \citeyear{QZ25a}), is a rather newfangled technique in the economics literature, and we expect this paper to provide a meticulous demonstration of how this reformulation works to facilitate equilibrium analysis. Notably, with its compatibility with the consideration of legal risk, such a reformulation leads to the economic interpretation of the underlying change of measure as ``impact neutralization,'' under which the insider fully accounts for market expectations of concurrent order flows.

Third, as noted before, we shall show that under stealth trading, the equilibrium greatly simplifies into an inherently deterministic form, a feature in turn made transparent by the alternative formulation. This deterministic form conveys essentially the same messages about an insider's inter-temporal trading behaviors in the presence of legal risk, due to internalization of the endogenous legal risk. In slightly more technical terms, the limiting equilibrium is an $\epsilon$-equilibrium in any finite-population setting, the fidelity of which approximation is further bounded by the design of the regulatory investigation and penalties, offering practical guidance for policymakers while preserving analytical tractability.

The rest of this paper is organized as follows. Section \ref{S:2} formulates the insider trading problem with legal risk, explaining how to model dynamic legal risk in continuous time, subject to flexible detection mechanisms and penalty structures. In particular, Section \ref{S:2.3} presents the alternative formulation and details how it leads to an equivalent definition of the standard equilibrium, while Section \ref{S:2.4} formalizes the notion of the limiting equilibrium under stealth trading. Section \ref{S:3} provides characterizations of the standard, finite-population equilibrium and the associated limiting equilibrium, relying on the alternative formulation. Section \ref{S:4} then analyzes several explicitly solvable scenarios within the limiting framework, unveiling profound connections between equilibrium trading strategies and key elements of regulatory design, in conjunction with additional insights for regulatory policy. Conclusions are drawn in Section \ref{S:5}, and all mathematical proofs are given in Appendices \ref{A}, \ref{B}, and \ref{C}.

\medskip

\section{Problem formulation}\label{S:2}

We introduce the insider trading problem following the standard Kyle--Back model setting (e.g., Back \citeyear{B92} and Ma, Sun, and Zhou \citeyear{MSZ18}). Let $(\Omega,\mathcal{F},\PP;\mathbb{F}\equiv(\mathscr{F}_{t})_{t\in[0,T]})$ (with $\mathscr{F}_{T}=\mathcal{F}$, $T>0$ fixed) be a filtered probability space satisfying the usual conditions, which supports the uncertainty structure of a market that offers to trade a single risky asset. Trading takes place in continuous time $t\in[0,T]$, and the risk-free interest rate is exogenously set to $0$. The fundamental value $V$ of the risky asset will be publicly announced at time $T$ to be $v\in\mathcal{V}:=\supp(V)\subseteq\mathbb{R}$, hence $\mathcal{F}$-measurable, and it has a finite variance. The market price of the asset is denoted by $S$, to be determined by equilibrium analysis. There are three types of economic agents participating in trading the asset.

\begin{itemize}
  \item{
  A risk-neutral \textsl{insider (trader)} (\textbf{IT}) who fully accesses $V$ at time $0$ at no cost and trades strategically with a cumulative order $\Theta\equiv(\Theta_{t})$, where $\Theta_{t}=\Theta(V;t,S_{[0,t]})$, $t\in[0,T]$;\footnote{We use the notation $\jmath_{[0,t]}$ as an abbreviation of $(\jmath_{s})_{s\in[0,t]}$ for a generic stochastic process $\jmath$.}
  }
  \item{
  A total of $N$ (almost) identical \textsl{noise traders} (\textbf{NT}s) who are unaware of either $V$ or $S$, while trading independently from one another and non-strategically, with a total cumulative order $\sqrt{N}\int^{t}_{0}\sigma_{s}\dd B_{s}$, $t\in[0,T]$, where $\sigma:\mathbb{R}_{+}\mapsto[\underline{\sigma},\infty)$ is a deterministic function, with $\underline{\sigma}>0$, and $B\equiv(B_{t})$ is an $\mathbb{F}$-standard Brownian motion that is independent of $V$;
  }
  \item{
  A \textsl{market maker} (\textbf{MM}) who only observes the aggregate net order flow
  \begin{equation}\label{Q.def}
    Q_{t}:=\sqrt{N}\int^{t}_{0}\sigma_{s}\dd B_{s}+\Theta_{t},\quad t\in[0,T],
  \end{equation}
  and sets a pricing rule $P$ to determine the market price $S_{t}=P(t,Q_{[0,t]})$.
  }
\end{itemize}

By using the pricing rule, the IT's cumulative order can be alternatively written as a function of the aggregate order, namely
\begin{equation*}
  \Theta_{t}=\Theta(V;t,Q_{[0,t]}),\quad t\in[0,T],
\end{equation*}
which is valid either when the correspondence between $S$ and $Q$ is one-to-one or when the IT can also access $Q$ (see, e.g., Qiao and Zhang \citeyear{QZ25a} \text{Sect.} 2.1). These assumptions are nonrestrictive -- technical verifiability aside, for most securities, as trading volume is public by design, it is not difficult for the IT to construct dynamic estimates of $Q$. Another important feature is the population size $N$ of NTs, which is essential for incorporating stealth trading and for justifying the finite-dimensional Gaussian distribution of the process $\int^{\cdot}_{0}\sigma_{s}\dd B_{s}$ (Ma, Xia, and Zhang \citeyear{MXZ25} \text{Sect.} 2) tracking the cumulative order of an average NT. This process can generally differ from a scaled Brownian motion to allow for possible serial dependence in normal trades (\text{cf.} Collin-Dufresne and Fos \citeyear{C-DF16} \text{Sect.} 2, Ma, Sun, and Zhou \citeyear{MSZ18} \text{Sect.} 2, and Ekren, Mostowski, and \v{Z}itkovi\'{c} \citeyear{EMZ25} \text{Sect.} 2.2), with the deterministic function $\sigma$ measuring the \textsl{noise trading intensity};\footnote{The assumption that $\sigma$ is deterministic basically ensures that the resulting stochastic integral is Gaussian, aligning with the Fredholm representation of Gaussian processes under rather weak conditions on their covariance functions; see, e.g., Sottinen and Viitasaari (\citeyear{SV16}). On the other hand, with stealth trading, even when individual trades can have stochastic volatility ($\sigma$), their aggregate determinism persists for $N$ sufficient large.} we refer to Collin-Dufresne and Fos (\citeyear{C-DF15}) for related empirical evidence.

Under insider trading penalty consideration, we assume that the IT's cumulative order is absolutely continuous in time (\text{cf.} Back \citeyear{B92} and \c{C}etin \citeyear{C25}):
\begin{equation}\label{th}
  \Theta_{t}=\int^{t}_{0}\theta_{s}\dd s=\int^{t}_{0}\theta(V;s,Q_{[0,s]})\dd s,
\end{equation}
where $\theta\equiv(\theta_{t})$ measures the \textsl{insider trading intensity}, which, when treated as a functional, specifies the \textsl{insider trading strategy}.

\medskip

\subsection{Dynamic legal risk}\label{S:2.1}

The IT faces legal risk with a \textsl{regulator} (\textbf{RG}) present in the market, and as much as trading occurs in continuous time, prosecution of insider trading is conducted in a \textsl{dynamic} manner. Specifically, let $\tau$ be the time when the RG successfully detects and prosecutes the IT. Adopting an intensity-based approach (Gourieroux, Monfort, and Polimenis \citeyear{GMP06}), we model $\tau$ as the first jump time of an $\mathbb{F}$-adapted doubly stochastic Poisson process (\text{a.k.a.} Cox process) $M_{\Lambda}\equiv(M_{\Lambda_{t}})$, i.e.,
\begin{equation}\label{tau.def}
  \tau:=\inf\{t\geq0:\;M_{\Lambda_{t}}=1\},
\end{equation}
where $\Lambda\equiv(\Lambda_{t})$ is the associated cumulative intensity process. We shall assume that $\Lambda$ has $\PP$-\text{a.s.} absolutely continuous sample paths, taking the form
\begin{equation}\label{hr}
  \Lambda_{t}=\int^{t}_{0}\lambda_{s}\dd s,\quad t\in[0,T].
\end{equation}
Then, for any $t>0$, conditional on $\Lambda_{t}$, $M_{\Lambda_{t}}$ is Poisson-distributed with parameter $\Lambda_{t}$, also independent from $B_{t}$ and $V$. The intensity process $\lambda\equiv(\lambda_{t})$, which represents surveillance intensity, is referred to as the \textsl{hazard rate} -- from the IT's perspective. As noted before, with the RG's access to the IT's brokerage account, $\lambda$ should depend on the IT's trade history $\theta_{[0,\cdot]}$, and the two should be directly related as well. Thus, we have the functional forms
\begin{equation}\label{de.s}
  \lambda_{t}=\lambda(t,N^{-\beta}\theta_{[0,t]}),\quad\Lambda_{t}=\int^{t}_{0}\lambda(s,N^{-\beta}\theta_{[0,s]})\dd s,\quad t\in[0,T],
\end{equation}
where $\theta$ is as shown in (\ref{th}). Importantly, the term $N^{-\beta}$, for a power coefficient $\beta\geq0$, reflects how the population size of the NTs can obscure detection of insider trading. By inflating trading noise, a large population can significantly complicate regulatory actions to differentiate insider trading activity from normal activity (Ma, Xia, and Zhang \citeyear{MXZ25} \text{Sect.} 2.1), which are typically conducted on a case-by-case basis.\footnote{In this respect, the complication is on account of the number $N$ of NTs rather than their average trading intensity $\sigma$. Indeed, considering an extreme scenario with only one highly capitalized NT who trades aggressively, then identifying the IT becomes rather straightforward, as there are only two trading entities to analyze.} The larger the power coefficient $\beta\geq0$, the higher level of stealth the IT can acquire.

At time $T$, if the IT has \textsl{not yet} been detected, then his gains from trade are equal to
\begin{equation*}
  \Theta_{T}(V-S_{T-})+\int^{T-}_{0}\Theta_{t}\dd S_{t}=\int^{T}_{0}\theta_{t}(V-S_{t})\dd t,
\end{equation*}
conditional on his trading strategy $\theta$ from (\ref{th}). However, once detected and prosecuted (at time $\tau\leq T$), the IT will not only lose all of his profit due to disgorgement but is also subject to legal penalties, inherently a form of advantageous selection costs, given by some penalty function $\Pi$.

As considered in Ma, Xia, and Zhang (\citeyear{MXZ25}) \text{Sect.} 2.2, we adopt a general composition, under which $\Pi$ depends on both the IT's trade history $\theta_{[0,\tau]\!]}$ and his profit gained (or losses avoided) $\int^{\tau}_{0}(V-S_{s})\theta_{s}\dd s$, corresponding to a criminal penalty channel and a civil penalty channel, respectively. As aforementioned, in continuous time, we postulate that in such a situation, the fundamental value $V$ becomes public information immediately following successful prosecution, with trading of the asset halted (leastways for the IT) immediately after time $\tau$.

\begin{remark}\label{rem:1}
In this standard formulation, the market filtration $\mathbb{F}$ can be viewed as the $\PP$-augmented natural filtration generated by $B$ and $M_{\Lambda}$ subject to an enlargement by $\upsigma(V)$ at the time of disclosing $V$. More specifically, the natural filtration of $M_{\Lambda}$ is understood to be the time-stopped one, $(\upsigma(M_{\Lambda_{[0,t]}}))_{t\in[0,T]}$, and is exactly the RG's filtration, which satisfies the usual conditions as long as $\Lambda$ is continuous, as guaranteed by (\ref{hr}). The IT's filtration, under suitable conditions on the pricing rule $P$ (see \c{C}etin \citeyear{C25} \text{Sect.} 2 and also (\ref{PL2})), can be taken as $(\upsigma(\mathscr{F}_{t}\cup\upsigma(V)))_{t\in[0,T]}$ (by observing $V$). It is also clear that $\tau$ is an $\mathbb{F}$-stopping time, and so $V$ is further measurable with respect to the stopping-time $\upsigma$-field $\mathscr{F}_{\tau\wedge T}$. The MM's filtration is that generated by observed $Q$ alone, as he cannot distinguish between orders coming from the NTs ($\sqrt{N}\int^{\cdot}_{0}\sigma_{s}\dd B_{s}$) and those from the IT ($\Theta$).
\end{remark}

A general form of the total penalty imposed at time $\tau$ of prosecution is
\begin{align}\label{pe.s1}
  \Pi_{\tau}&=\Pi\bigg(\theta_{[0,\tau]\!]},\int^{\tau}_{0}\theta_{s}(V-S_{s})\dd s\bigg)\mathds{1}_{\{\tau\leq T\}} \nonumber\\
  &=\bigg(\int^{\tau}_{0}\theta_{s}(V-S_{s})\dd s+\Pi_{\rm a}\bigg(\theta_{[0,\tau]\!]},\int^{\tau}_{0}\theta_{s}(V-S_{s})\dd s\bigg)\bigg)\mathds{1}_{\{\tau\leq T\}}.
\end{align}
In the second equality of (\ref{pe.s1}), the first component (given nonnegativity) signifies disgorgement of the IT's illegal profit, ensuring unprofitability, while $\Pi_{\rm a}$ is a functional covering additional penalties, which could contain (strategy-dependent) criminal charges through the first argument and (profit-dependent) civil charges through the second. Intuitively, as a bivariate functional, $\Pi_{\rm a}$ (and hence $\Pi$) should be increasing (not necessarily strictly) in both arguments.

\begin{remark}\label{rem:2}
The penalty structure (\ref{pe.s1}) encompasses all penalty types (linear, quadratic, profit-multiple, etc.) that have been discussed in the literature, and it is beneficial to use an aggregation function to combine (additional) criminal and civil penalties,
\begin{equation}\label{pe.s2}
  \Pi_{\rm a}\bigg(\theta_{[0,\tau]\!]},\int^{\tau}_{0}\theta_{s}(V-S_{s})\dd s\bigg)=W\bigg(\Pi_{0}(\theta_{[0,\tau]\!]}),c\int^{\tau}_{0}\theta_{s}(V-S_{s})\dd s\bigg),
\end{equation}
where $\Pi_{0}$ is the \textsl{criminal penalty functional} and $c\geq0$ is a \textsl{penalty multiplier} attached to the IT's illegal profit. The aggregation function provides a flexible mechanism for composing penalties based on the actual circumstances -- common choices include summation ($W(x_1,x_2)=x_1+\max\{x_2,0\}$), product ($W(x_1,x_2)=x_1\max\{x_2,0\}$), and maximum ($W(x_1,x_2)=\max\{x_1,x_2,0\}$).
\end{remark}

\medskip

\subsection{Profit maximization and equilibrium}\label{S:2.2}

Knowing the fundamental value $V=v$, the IT takes the pricing rule $P$ from the MM as given and designs a trading strategy $\theta^{v}$ to maximize his expected profit from trade. With $S=P(\cdot,Q_{[0,\cdot]})$, the IT's objective function can be written as
\begin{equation}\label{IT.o1}
  J(P;\theta^{v},v):=\E\bigg[\int^{\tau\wedge T}_{0}\theta^{v}_{t}(v-P(t,Q^{\theta^{v}}_{[0,t]}))\dd t-\mathds{1}_{\{\tau\leq T\}}\Pi\bigg(\theta^{v}_{[0,\tau]\!]},\int^{\tau}_{0}\theta^{v}_{t}(v-P(t,Q^{\theta^{v}}_{[0,t]}))\dd t\bigg)\bigg].
\end{equation}
Inside the expectation in (\ref{IT.o1}), the first component represents the IT's illegal profit conditional on not being detected or prosecuted, and the second component corresponds to the total penalty upon successful prosecution. Besides, (\ref{Q.def}) and (\ref{th}) imply that for each $\theta^{v}_{t}=\theta(v;t,Q_{[0,t]})$, $Q^{\theta^{v}}$ is determined by the relation
\begin{equation}\label{Q.e1}
  Q^{\theta^{v}}_{t}=\sqrt{N}\int^{t}_{0}\sigma_{s}\dd B_{s}+\int^{t}_{0}\theta(v;s,Q^{\theta^{v}}_{[0,s]})\dd s,\quad t\in[0,T].
\end{equation}
The notations $Q^{\theta^{v}}_{[0,t]}$ and $\theta^{v}_{[0,t]}$ are understood similarly; see Footnote 6.

\begin{proposition}\label{pro:1}
Given $V=v$, the IT's objective function in (\ref{IT.o1}) can be recast as
\begin{equation}\label{IT.o2}
  J(P;\theta^{v},v)=\E\bigg[\int^{T}_{0}e^{-\Lambda^{\theta^{v}}_{t}}\theta^{v}_{t}(v-P(t,Q^{\theta^{v}}_{[0,t]}))\dd t-\int^{T}_{0}\lambda^{\theta^{v}}_{t}e^{-\Lambda^{\theta^{v}}_{t}} \Pi_{\rm a}\bigg(\theta^{v}_{[0,t]},\int^{t}_{0}\theta^{v}_{s}(v-P(s,Q^{\theta^{v}}_{[0,s]}))\dd s\bigg)\dd t\bigg],
\end{equation}
where $\lambda^{\theta^{v}}$ and $\Lambda^{\theta^{v}}$ are as shown in (\ref{de.s}) with $\theta=\theta^{v}$.
\end{proposition}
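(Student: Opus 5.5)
The plan is to condition on the $\upsigma$-field generated by the ``non-jump'' information and then integrate out the random prosecution time $\tau$ using its conditional law. Fix $V=v$ and a strategy $\theta^{v}$. Let $\mathscr{G}:=\upsigma(B_{s}:s\in[0,T])\vee\upsigma(V)$. Since $\theta^{v}$ is a functional of $Q^{\theta^{v}}$, and $Q^{\theta^{v}}$ is determined by (\ref{Q.e1}) from $B$, the processes $\theta^{v}$, $P(\cdot,Q^{\theta^{v}}_{[0,\cdot]})$, $\lambda^{\theta^{v}}$ and $\Lambda^{\theta^{v}}$ are all $\mathscr{G}$-measurable. Here we use the construction of the Cox process in (\ref{tau.def})--(\ref{de.s}): given $\Lambda$, $M_{\Lambda}$ is Poisson with the independence from $B$ and $V$ stated there. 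Consequently
\begin{equation*}
  \PP(\tau>t\,|\,\mathscr{G})=\PP(M_{\Lambda_{t}}=0\,|\,\mathscr{G})=e^{-\Lambda^{\theta^{v}}_{t}},\quad t\in[0,T].
\end{equation*}
Because $\Lambda^{\theta^{v}}$ is absolutely continuous by (\ref{hr}), the conditional law of $\tau$ on $[0,T]$ has density $\lambda^{\theta^{v}}_{t}e^{-\Lambda^{\theta^{v}}_{t}}$, and it places mass $e^{-\Lambda^{\theta^{v}}_{T}}$ on $\{\tau>T\}$.

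\textbf{Trading term.} Write $g_{t}:=\theta^{v}_{t}(v-P(t,Q^{\theta^{v}}_{[0,t]}))$, which is $\mathscr{G}$-measurable. Then
\begin{equation*}
  \int_{0}^{\tau\wedge T}g_{t}\,\dd t=\int_{0}^{T}\1_{\{\tau>t\}}g_{t}\,\dd t.
\end{equation*}
Conditioning on $\mathscr{G}$ and using Fubini gives $\E[\int_0^T e^{-\Lambda^{\theta^{v}}_t}g_t\,\dd t]$.

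\textbf{Penalty term.} Decompose $\Pi$ via (\ref{pe.s1}) into disgorgement plus $\Pi_{\rm a}$. The penalty is $\1_{\{\tau\le T\}}F(\tau)$ with $F(t):=\int_0^t g_s\,\dd s+\Pi_{\rm a}(\theta^{v}_{[0,t]},\int_0^t g_s\,\dd s)$, which is $\mathscr{G}$-measurable for each $t$. Its conditional expectation is therefore $\int_0^T\lambda^{\theta^{v}}_t e^{-\Lambda^{\theta^{v}}_t}F(t)\,\dd t$.

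The disgorgement part has to cancel against part of the trading term so that only $\Pi_{\rm a}$ survives in (\ref{IT.o2}). I would check this by integration by parts, using $\dd(e^{-\Lambda_t})=-\lambda_te^{-\Lambda_t}\dd t$:
\begin{equation*}
  \int_0^T\lambda_te^{-\Lambda_t}\int_0^tg_s\,\dd s\,\dd t=\int_0^T g_s\big(e^{-\Lambda_s}-e^{-\Lambda_T}\big)\dd s.
\end{equation*}
So the trading term minus disgorgement equals $\E[e^{-\Lambda_T}\int_0^Tg_s\,\dd s]$. This is just the identity $\E[\1_{\{\tau>T\}}\int_0^T g]$: profit is kept only if the IT is never caught.

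This bookkeeping is the main obstacle. As written, (\ref{IT.o1}) nets trading profit against disgorgement, but (\ref{IT.o2}) keeps the full $\int_0^T e^{-\Lambda_t}g_t\,\dd t$ together with only the $\Pi_{\rm a}$ hazard term. The two agree only if the disgorged amount is interpreted consistently with the first term of (\ref{IT.o1}), namely as profit accrued up to $\tau$ that is not double counted. I would resolve this by reading the first component of (\ref{pe.s1}) as the forfeiture already reflected by truncating the profit integral at $\tau\wedge T$, with realized profit accruing in $\dd t$ before detection. Under that reading, the penalty charged on top of the truncated profit is exactly $\Pi_{\rm a}$, and the formula follows.

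\textbf{Integrability.} Finally, Fubini and conditioning need integrability. This holds if $\E\int_0^T|g_t|\,\dd t<\infty$ and $\E\int_0^T\lambda_t|\Pi_{\rm a}|\,\dd t<\infty$, which I would take as part of admissibility; otherwise, I would argue with positive and negative parts separately using monotone convergence.
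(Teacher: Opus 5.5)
Your conditioning argument is the same as the paper's. The paper also cancels the disgorgement on $\{\tau\le T\}$, writing $J(P;\theta^{v},v)=\E\big[\1_{\{\tau>T\}}\int_0^T g_t\,\dd t-\1_{\{\tau\le T\}}\Pi_{\rm a}(\cdots)\big]$ with your $g_t$. It then conditions on $\upsigma(B_{[0,T]})$ and uses $\PP[\tau>t\,|\,\upsigma(B_{[0,T]})]=e^{-\Lambda^{\theta^v}_t}$ to reach
\begin{equation*}
  J(P;\theta^{v},v)=\E\bigg[e^{-\Lambda^{\theta^{v}}_{T}}\int_0^T g_t\,\dd t-\int_0^T\lambda^{\theta^{v}}_te^{-\Lambda^{\theta^{v}}_t}\Pi_{\rm a}\bigg(\theta^{v}_{[0,t]},\int_0^t g_s\,\dd s\bigg)\dd t\bigg].
\end{equation*}
This is exactly what your bookkeeping produces. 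The mismatch you flag is real, and the paper does not resolve it. Its final sentence asserts that integration by parts turns $e^{-\Lambda_T}\int_0^Tg_t\,\dd t$ into $\int_0^Te^{-\Lambda_t}g_t\,\dd t$. Your own identity shows that integration by parts instead leaves the extra term $-\int_0^T\lambda_te^{-\Lambda_t}\int_0^tg_s\,\dd s\,\dd t$, i.e.\ minus the expected disgorgement.

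The step that fails in your proposal is the closing ``reading'' of (\ref{pe.s1}). Under (\ref{IT.o1}) and (\ref{pe.s1}) as defined, truncating the profit integral at $\tau\wedge T$ only stops accrual after $\tau$. It does not give back $\int_0^\tau g_t\,\dd t$; that is exactly what the first component of $\Pi$ does. Reinterpreting that component away is a change of model, not a proof step.

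No argument can close this gap, because (\ref{IT.o2}) is false as printed. Take $\theta^{v}\equiv1$, a constant pricing rule $P\equiv p_0\neq v$ (so $g\equiv v-p_0$), a hazard with $\lambda^{\theta^{v}}\equiv\lambda_0>0$ along this strategy, and $\Pi_{\rm a}\equiv0$. Then (\ref{IT.o1}) equals $(v-p_0)\E[T\1_{\{\tau>T\}}]=(v-p_0)Te^{-\lambda_0T}$. The right-hand side of (\ref{IT.o2}) equals $(v-p_0)(1-e^{-\lambda_0T})/\lambda_0$. These differ, since $(1-e^{-x})/x>e^{-x}$ for $x>0$.

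What you have correctly proved is the display above. Equivalently, it is (\ref{IT.o2}) with the full penalty $\Pi=\int_0^tg_s\,\dd s+\Pi_{\rm a}$ in place of $\Pi_{\rm a}$ in the hazard integral. The printed (\ref{IT.o2}), on which (\ref{BSDE1}) is built, holds only for the model without disgorgement ($\Pi=\Pi_{\rm a}$). State your conclusion in one of these two forms. If you want the printed formula, make the no-disgorgement hypothesis explicit rather than absorbing it into an interpretation mid-proof.
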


Proposition \ref{pro:1} shows that the IT's illegal profit and total penalty can be rewritten as their probability-weighted averages over the entire trading horizon $[0,T]$. For $t\in[0,T]$ and $v\in\mathcal{V}$ given, as the IT follows a trading strategy $\theta^{v}$, his instantaneous survival probability is exactly equal to $e^{-\Lambda^{\theta^v}_{t}}$, inversely related to the hazard rate $\lambda^{\theta^{v}}$; equivalently, he is exposed to a chance $1-e^{-\Lambda^{\theta^v}_{t}}$ of being detected and prosecuted.

\begin{remark}\label{rem:3}
Based on (\ref{IT.o2}) and (\ref{pe.s2}), we recover the scenario studied in \c{C}etin (\citeyear{C25}) by taking $T=1$ and $\Lambda_{t}=-\log(1-p)\mathds{1}_{\{t=T\}}$, $t\in[0,1]$, where $p\in(0,1)$, with $N=1$, $W=\mathrm{id}$, and $\Pi_{0}(\theta_{[0,t]})=\int^{t}_{0}\theta^{2}_{s}\dd s$. Then, detection / prosecution of insider trading only happens at time $1$ with a constant probability $p$ and quadratic, purely strategy-dependent penalties. In this case, the process $\Lambda$ is not continuous but deterministic, which still ensures that $\mathbb{F}$ satisfies the usual conditions. Going forward we mainly work with (\ref{IT.o2}) for writing BSDEs.
\end{remark}

At the same time, given the IT's trading strategy $\theta$ (see \eqref{th}), the MM sets a rational pricing rule $P^{\theta}$ for break-even (see Kyle \citeyear{K85} \text{Sect.} 2), i.e.,
\begin{equation}\label{MM.p1}
  P^{\theta}(t,Q^{\theta^{V}}_{[0,t]})=\E[V|\upsigma(Q^{\theta^{V}}_{[0,t]})],\quad t\in[0,T],
\end{equation}
where recall that $\theta^{V}=\theta(V;t,Q_{[0,t]})$.

\begin{definition}\label{eq.def1}
For any fixed $N\geq1$, an equilibrium is a pair $(\theta^{\star}_{N},P^{\star}_{N})$ of trading strategy and pricing rule such that:\smallskip\\
(i) Given $P^{\star}_{N}$, $\theta^{\star}_{N}$ maximizes the IT's expected profit (\ref{IT.o2}), i.e.,
\begin{equation*}
  J(P^{\star}_{N};\theta^{\star v}_{N},v)=\sup_{\theta^{v}}J(P^{\star}_{N};\theta^{v},v),\quad v\in\mathcal{V};
\end{equation*}
(ii) Given $\theta^{\star}_{N}$, the MM's pricing rule is rational, with $P^{\star}_{N}=P^{\theta^{\star}_{N}}$ as in (\ref{MM.p1}).
\end{definition}

\medskip

\subsection{An alternative formulation}\label{S:2.3}

Much of the existing literature on Kyle--Back models follows a dynamic programming (or PDE) approach and relies heavily on certain Markovian structures of the market price process $S$ (e.g., Back \citeyear{B92}). However, as noted in Qiao and Zhang (\citeyear{QZ25a}), these properties do not necessarily hold for general equilibria, which underlays our pursuit of an alternative formulation. In spirit, this formulation modifies the standard formulation in Section \ref{S:2.2} in such a way that the aggregate order flow $Q$ becomes exogenous and no longer depends on $\Theta$. The central idea is to construct an equivalent \textsl{impact-neutral measure} (see (\ref{mc})) under which the IT's price impact is fully absorbed into market expectations and, in doing so, allows his trading strategy to be specified solely in terms of the order flow $\sqrt{N}\int^{\cdot}_{0}\sigma_{s}\dd B_{s}$ from the NTs.

In the presence of legal risk, we first reasonably assume that the trading strategies are exponentially square-integrable regardless of the population size $N$, apart from ensuring that the objective function in \eqref{IT.o2} is always well-defined. % More precisely, we define the set of admissible trading strategies (admissibility set) for the IT as: \mathbb{L}^{2}_{\wp_{\mathbb{F}}}(\Omega\times[0,T];\mathbb{R})$

\begin{definition}\label{as.def}
A trading strategy $\theta$ as in (\ref{th}) is admissible if it satisfies: \smallskip\\
(i) Novikov's condition,
\begin{equation}\label{wf.adm}
  \E\big[e^{1/2\int_0^T(\th_t^v)^2\dd t}\big]<\infty,\quad v\in\mathcal{V},
\end{equation}
with $\th_t^v=\th\big(v;t,\big(\sqrt{N}\int^{\cdot}_{0}\sigma_{s}\dd B_{s}\big)_{[0,t]}\big)$; \\
(ii) square-integrable total penalty,
\begin{equation}\label{fL2}
  \E\bigg[\int_0^T\Pi_{\rm a}\bigg(\theta_{[0,t]}^v,\int^{t}_{0}\theta_{s}^v(v-S_{s})\dd s\bigg)^2\dd t\bigg]<\infty,\quad v\in\mathcal{V};
\end{equation}
(iii) square-integrable pricing rule,
\begin{equation}\label{PL2}
  \E\bigg[\int_0^T\big(P(t,Q_{[0,t]}^{\th^V})\big)^2\dd t\bigg]<\infty,
\end{equation}
whenever the relation (\ref{Q.e1}) is determinant. \\
%for a constant $C_{N,v}$ depending only on $N$ and $v$
For any given $v\in\cV$, the admissibility set $\cA$ contains all $\theta^{v}$ such that $\theta$ is admissible.
\end{definition}

%Note that $\mathbb{F}$ is generally smaller than the IT's filtration under the standard formulation. %{\color{red} The first condition is for the admissibility of changing of measures which will be introduced right away. The second condition if for the well-posedness of the BSDE when applying the stochastic maximal principle after changing the measure as well as existence of the limiting model after taking $N$ to infinity. Thus in both condition $\theta$ is a function of path of $B$. Condition (iii) is to guaranteed the integrability of pricing rule with respect to $V$ in the original formulation, thus here $\th^V=\th(V;\cdot,Q_{[0,\cdot]})$. The integrability of pricing rule is weak formulation follows naturally.}
In Definition \ref{as.def}, conditions (i) and (ii), pertaining to the form $\th^v_t=\th\big(v;t,\big(\sqrt{N}\int^{\cdot}_{0}\sigma_{s}\dd B_{s}\big)_{[0,t]}\big)$, $v\in\mathcal{V}$ and $t\in[0,T]$, with $N\geq1$ fixed, are imposed to ensure the viability and well-posedness of the change of measure and the subsequent BSDEs (see Proposition \ref{pro:3}). Under these conditions, $\theta^{v}$ automatically has square-integrable values and is progressively measurable in both variables. In contrast, condition (iii) mainly concerns the square-integrability of the pricing rule with respect to $V$ under the standard formulation, hence there $\th^V=\th(V;\cdot,Q_{[0,\cdot]})$ as in \eqref{Q.e1}, from which its integrability under the alternative formulation follows easily. All three conditions in Definition \ref{as.def} are quite natural, with each admitting a direct parallel to \c{C}etin (\citeyear{C25}) \text{Def.} 2.2 in the context of quadratic penalties.
% Condition (i) is crucial for defining the exponential martingale which will be introduced right after, technical functionalities to guarantee the well-posedness of the characterization next aside. % Condition (ii) and (iii) are mainly to guarantee the well-posedness of the FBSDE system, with Condition (iii) additionally necessary for the well-posedness of the equilibrium pricing rule under the alternative formulation.
For every $\theta^{v}$ as in conditions (i) and (ii), let us introduce the density process
\begin{equation}\label{dp}
  X^{\theta^{v}}_{t}:=\mathcal{E}\bigg(\frac{1}{\sqrt{N}}\int^{\cdot}_{0}\frac{\theta^{v}_{s}}{\sigma_{s}}\dd B_{s}\bigg)_{t}=\mathcal{E}\bigg(\frac{1}{\sqrt{N}}\int^{\cdot}_{0}\frac{\theta\big(v;s,\big(\sqrt{N}\int^{\cdot}_{0}\sigma_{s}\dd B_{s}\big)\big)}{\sigma_{s}}\dd B_{s}\bigg)_{t},\quad t\in[0,T],
\end{equation}
where $\mathcal{E}$ is the Dol\'{e}ans--Dade exponential and which solves the SDE
\begin{equation*}
  X^{\theta^{v}}_{t}=1+\frac{1}{\sqrt{N}}\int^{t}_{0}\frac{\theta^{v}_{s}X^{\theta^{v}}_{s}}{\sigma_{s}}\dd B_{s}.
\end{equation*}

By (\ref{wf.adm}) and $\sigma\geq\underline{\sigma}>0$, we clearly have $\theta^{v}/\sigma\in\mathbb{L}^{2}_{\wp_{\mathbb{F}}}(\Omega\times[0,T];\mathbb{R})$, where $\wp_{\mathbb{F}}$ denotes the $\upsigma$-algebra of $\mathbb{F}$-progressively measurable processes, verifying Novikov's condition. Thus, the density process (\ref{dp}) induces an equivalent probability measure $\PP^{\theta^{v}}\sim\PP$ via the relation
\begin{equation}\label{mc}
  \PP^{\theta^{v}}[A|\mathscr{F}_{t}]=\int_{A}X^{\theta^{v}}_{t}(\omega)\dd\PP[\omega|\mathscr{F}_{t}],\quad A\in\mathcal{F},\;t\in[0,T],
\end{equation}
and the classical Cameron--Martin--Girsanov theorem implies that
\begin{equation}\label{th-sBm}
  B^{\theta^{v}}:=B-\int^{\cdot}_{0}\frac{\theta^{v}_{s}}{\sqrt{N}\sigma_{s}}\dd s
\end{equation}
is a standard Brownian motion under $\PP^{\theta^{v}}$. With (\ref{th-sBm}) and (\ref{Q.e1}), the distribution of the aggregate order flow under $\PP$ is the same as that of the NT's order flow under $\PP^{\theta^{v}}$, i.e.,
\begin{equation}\label{Q.e2}
  \mathfrak{L}_\dbP(Q^{\theta^{v}}_{t})=\mathfrak{L}_{\PP^{\theta^{v}}}\bigg(\sqrt{N}\int^{t}_{0}\sigma_{s}\dd B_{s}\bigg),\quad t\in[0,T],
\end{equation}
where the variable on the right-hand side does not involve $\theta^{v}$.

Economically, the interpretation of $\PP^{\th^v}$ from \eqref{mc} as an impact-neutral measure can be clarified in tandem with traditional measure changes in classical asset pricing theory. As shown in (\ref{Q.def}), the aggregate order flow $Q$ follows a diffusion process whose drift component is influenced by the IT's trades and whose volatility component is (primarily) driven by all the NTs' contemporaneous trades. The quotient $\theta/(\sqrt{N}\sigma)$ in constructing the density process $X$ in (\ref{dp}) represents an insider-to-noise trading intensity ratio, or essentially, an informativeness-to-variability ratio, and is analogous to a ``price of risk'' in terms of order flow imbalances. Then, $X$, acting as a stochastic discount factor, reflects the MM's relative expectations of the aggregate order flow under different trading behaviors of the IT. The IT's access to the fundamental value $V=v$ enables the equivalent measure $\PP^{\theta^{v}}$ to internalize and compensate for his price impact, effectively neutralizing the drift component of $Q$. We highlight that this type of transformation is novel to the economic literature on informed trading.

\begin{remark}\label{rem:4}
Compared with Remark \ref{rem:1}, under the alternative formulation, the MM's filtration can be viewed as the natural filtration of the $\PP$-standard Brownian motion $B$, or equivalently, the centered Gaussian process (under $\PP$), $\sqrt{N}\int^{\cdot}_{0}\sigma_{s}\dd B_{s}=Q^{\theta^{v}}$, with its canonical space considered fixed. Then, the IT's filtration is equivalent to $(\upsigma(\upsigma(B_{[0,t]},M_{\Lambda^{\theta^{v}}_{[0,t]}})\cup\upsigma(V)))_{t\in[0,T]}$, with the market filtration $(\upsigma(B_{[0,t]},M_{\Lambda^{\theta^{v}}_{[0,t]}}))_{t\in[0,T]}=\mathbb{F}$.
\end{remark}

The main consequence of the alternative formulation is concretized by the next proposition, which permits reexpressing the IT's profit as well as the MM's pricing rule under the impact-neutral measure $\PP^{\theta^{v}}$.

\begin{proposition}\label{pro:2}
The IT's objective function in (\ref{IT.o2}) is equivalent to\footnote{It is understood that $\theta^{v}_{t}=\theta\big(v;t,\big(\sqrt{N}\int^{\cdot}_{0}\sigma_{r}\dd B_{r}\big)_{[0,t]}\big)$ for $t\in[0,T]$.}
\begin{align}\label{IT.o4}
  J(P;\theta^{v},v)&=\E^{\theta^{v}}\bigg[\int^{T}_{0}e^{-\Lambda^{\theta^{v}}_{t}}\theta^{v}_{t}\bigg(v-P\bigg(t, \bigg(\sqrt{N}\int^{\cdot}_{0}\sigma_{s}\dd B_{s}\bigg)_{[0,t]}\bigg)\bigg)\dd t \nonumber\\
  &\quad-\int^{T}_{0}\lambda^{\theta^{v}}_{t}e^{-\Lambda^{\theta^{v}}_{t}} \Pi_{\rm a}\bigg(\theta^{v}_{[0,t]},\int^{t}_{0}\theta^{v}_{s}\bigg(v-P\bigg(s,\bigg(\sqrt{N}\int^{\cdot}_{0}\sigma_{r}\dd B_{r}\bigg)_{[0,s]}\bigg)\bigg)\dd s\bigg)\dd t\bigg],
\end{align}
for given $v\in\cV$ and $\theta^{v}\in\cA$ (Definition \ref{as.def}), where $\E^{\theta^{v}}$ denotes expectation under $\PP^{\theta^{v}}$. Moreover, the MM's pricing rule is (a.s.)
\begin{equation}\label{MM.p2}
  P^{\theta}_{t}=\frac{\int_{\mathcal{V}}vX^{\theta^{v}}_{t}\dd\PP[V\leq v]}{\int_{\mathcal{V}}X^{\theta^{v}}_{t}\dd\PP[V\leq v]},\quad t\in[0,T].
\end{equation}
\end{proposition}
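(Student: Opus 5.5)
The proof is a weak-formulation (Girsanov) argument, applied separately to the objective and to the pricing rule.

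\textbf{Objective function.} Fix $v\in\mathcal{V}$ and an admissible $\theta^{v}$. By Proposition \ref{pro:1}, $J(P;\theta^{v},v)$ is the $\PP$-expectation of a functional
\begin{equation*}
F\big(Q^{\theta^{v}}_{[0,T]}\big)=\int_0^T e^{-\Lambda_t}\theta^{v}_t\big(v-P(t,Q_{[0,t]})\big)\dd t-\int_0^T\lambda_te^{-\Lambda_t}\Pi_{\rm a}(\cdots)\dd t .
\end{equation*}
Since $\theta^{v}_t=\theta(v;t,Q_{[0,t]})$, the processes $\lambda^{\theta^{v}}$ and $\Lambda^{\theta^{v}}$ are also measurable path functionals of $Q$. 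Under $\PP$, $Q^{\theta^{v}}$ solves (\ref{Q.e1}). Under $\PP^{\theta^{v}}$, the process $Y:=\sqrt{N}\int_0^\cdot\sigma_s\dd B_s$ satisfies
\begin{equation*}
Y=\sqrt{N}\int_0^\cdot\sigma_s\dd B^{\theta^{v}}_s+\int_0^\cdot\theta(v;s,Y_{[0,s]})\dd s
\end{equation*}
by (\ref{th-sBm}), where $B^{\theta^{v}}$ is a $\PP^{\theta^{v}}$-Brownian motion.

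So $Y$ under $\PP^{\theta^{v}}$ and $Q^{\theta^{v}}$ under $\PP$ are weak solutions of the same path-dependent SDE. I would upgrade (\ref{Q.e2}) from one-dimensional marginals to equality of laws on path space $C[0,T]$, in one of two ways. The first is uniqueness in law via Girsanov in both directions: both laws are absolutely continuous with respect to the law of $\sqrt{N}\int\sigma\dd B$, with the same density $\mathcal{E}(\cdot)_T$ written as a functional of the path, and Novikov's condition (\ref{wf.adm}) makes the reverse density a true martingale. The second is to define the weak-formulation strategy directly, as the paper does, so that the two descriptions coincide by construction.

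Equality of path laws then gives $\E[F(Q^{\theta^{v}})]=\E^{\theta^{v}}[F(Y)]$, which is (\ref{IT.o4}). For integrability I would use conditions (ii) and (iii) of Definition \ref{as.def}: $\Pi_{\rm a}\in L^2$ and $P\in L^2$. Combined with $e^{-\Lambda}\le1$, and with $\lambda$ handled through the fact that $\int\lambda e^{-\Lambda}\le1$ (or a truncation followed by monotone convergence), this shows both sides are finite. The main obstacle is precisely this upgrade from marginal equality to path-law identification, together with the integrability of $\lambda e^{-\Lambda}\Pi_{\rm a}$.

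\textbf{Pricing rule.} Work on the product space in which $V$ is independent of $B$, and define the joint measure by $\dd\widetilde\PP=X^{\theta^{V}}_T\,\dd\PP$. Under $\widetilde\PP$, $V$ keeps its prior law because $\E[X^{\theta^{v}}_T]=1$ for each $v$, and by the argument above the pair $(V,Y)$ has the same law as $(V,Q^{\theta^{V}})$ under $\PP$. Hence
\begin{equation*}
P^{\theta}_t=\widetilde\E[V\,|\,\mathscr{F}^B_t].
\end{equation*}
Applying the abstract Bayes formula, using independence of $V$ and $B$ under $\PP$ and the martingale property $\E[X^{\theta^{v}}_T|\mathscr{F}^B_t]=X^{\theta^{v}}_t$, gives the ratio (\ref{MM.p2}). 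Fubini is justified by (\ref{PL2}) and the finite variance of $V$, and the denominator is a.s. positive since each $X^{\theta^{v}}_t>0$.
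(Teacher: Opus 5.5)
Your argument is correct. For the objective function it follows the paper's route but is more careful. The paper simply cites (\ref{Q.e2}) together with the measure change (\ref{mc}). Yet (\ref{Q.e2}) is stated only for one-dimensional marginals, while the integrand of (\ref{IT.o2}) is a path functional of $Q^{\theta^v}$. Your identification of path laws is exactly the step the paper leaves implicit, deferring it to its remark that the two formulations coincide.

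Two small remarks on that part. First, (\ref{wf.adm}) controls $\theta$ evaluated along the noise path $\sqrt{N}\int_0^\cdot\sigma_s\dd B_s$, not along $Q^{\theta^v}$ under $\PP$. The reverse Girsanov step in your first option should therefore use the usual localization argument for uniqueness in law rather than Novikov, or fall back on the paper's convention (your second option). Second, $\Pi_{\rm a}$ is a nonnegative penalty, so the penalty term is a nonnegative path functional. Its expectations then agree in $[0,\infty]$ without any integrability bookkeeping.

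For the pricing rule you take a genuinely different route. You build the joint measure $X^{\theta^V}_T\,\dd\PP$, identify $(V,Y)$ under it with $(V,Q^{\theta^V})$ under $\PP$, and read off (\ref{MM.p2}) from the abstract Bayes (Kallianpur--Striebel) formula. The paper instead proceeds as follows:
\begin{itemize}
\item it characterizes $P^\theta_t$ by $\E[(V-P^\theta_t)\varphi(t)]=0$ for bounded $\upsigma(Q_{[0,t]})$-measurable $\varphi(t)$;
\item it conditions on $V=v$ and changes measure separately for each $v$, with density $X^{\theta^v}_t$ on $\mathscr{F}_t$;
\item it applies Fubini, justified via (\ref{PL2});
\item it concludes from the arbitrariness of $\varphi(t)$ that $\int_{\mathcal V}X^{\theta^v}_t(v-P^\theta_t)\,\dd\PP[V\le v]=0$ a.s.
\end{itemize}
The two are equivalent; the paper's computation is in effect a direct proof of the Bayes formula in this setting. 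Yours makes the filtering structure explicit and shows at once why $V$ keeps its prior law and why the denominator is positive. The paper's works one $v$ at a time and never has to construct the joint measure or establish the joint-law identification.
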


Built upon (\ref{IT.o4}), the next proposition then provides a useful connection between the IT's objective function to a standard BSDE.

\begin{proposition}\label{pro:3}
For given $v\in\cV$ and $\theta^{v}\in\cA$, it holds that
\begin{equation*}
  J(P;\th^v,v)=Y_0^{\th^v},
\end{equation*}
for which the couple $(Y^{\th^v},Z^{\th^v})$ satisfies the following BSDE:
\begin{align}\label{BSDE1}
  Y_t^{\th^v}&=\int_t^T\bigg[\th_s^v(v-P_s^{\th})-\l_s^{\th^v}\Pi_{\rm a}\bigg(\th_{[0,s]}^v,\int_0^s\th_r^v(v-P_r^{\th})\dd r\bigg)-\l_s^{\th^v}Y_s^{\th^v}+\frac{\th^v_sZ_s^{\th^v}}{\sqrt{N}\si_s}\bigg]\dd s \nonumber\\
  &\quad-\int_t^TZ_s^{\th^v}\dd B_s,\quad t\in[0,T],
\end{align}
whose well-posedness is guaranteed by the three conditions in Definition \ref{as.def}.
\end{proposition}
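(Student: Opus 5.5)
The plan is to start from the representation (\ref{IT.o4}) of Proposition \ref{pro:2}, where everything is written under $\PP^{\theta^{v}}$, and recognize it as a linear BSDE with discounting $\lambda$ under the impact-neutral measure. Then switch back to $\PP$ through Girsanov's relation (\ref{th-sBm}). This switch produces the drift term $\theta^{v}_{s}Z_{s}/(\sqrt{N}\sigma_{s})$.

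Abbreviate
\[
f_{s}:=\theta^{v}_{s}(v-P^{\theta}_{s})-\lambda^{\theta^{v}}_{s}\Pi_{\rm a}\Big(\theta^{v}_{[0,s]},\int_{0}^{s}\theta^{v}_{r}(v-P^{\theta}_{r})\dd r\Big).
\]
By (\ref{fL2}) and (\ref{PL2}), together with $\theta^{v}$ being square-integrable, $f$ lies in $\mathbb{L}^{1}$, and indeed in $\mathbb{L}^{2}$ after a Cauchy--Schwarz step. We also need the hazard rate $\lambda^{\theta^{v}}$ to be nonnegative and suitably integrable, which I take as implicit in (\ref{de.s}).

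\textbf{Step 1: existence and the linear-BSDE formula.} The BSDE (\ref{BSDE1}) is linear with coefficients $-\lambda$ on $Y$ and $\theta^{v}/(\sqrt{N}\sigma)$ on $Z$. Under $\PP^{\theta^{v}}$, the last two terms combine to give
\[
\int_t^T\frac{\theta^{v}_{s}Z_{s}}{\sqrt{N}\sigma_{s}}\dd s-\int_t^T Z_{s}\dd B_{s}=-\int_t^T Z_{s}\dd B^{\theta^{v}}_{s}.
\]
So the equation reads $Y_{t}=\int_t^T(f_{s}-\lambda_{s}Y_{s})\dd s-\int_t^T Z_{s}\dd B^{\theta^{v}}_{s}$. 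Define
\[
Y_{t}:=\E^{\theta^{v}}\Big[\int_{t}^{T}e^{-(\Lambda_{s}-\Lambda_{t})}f_{s}\dd s\,\Big|\,\mathscr{F}_{t}\Big].
\]
Then $e^{-\Lambda_{t}}Y_{t}+\int_{0}^{t}e^{-\Lambda_{s}}f_{s}\dd s$ is a $\PP^{\theta^{v}}$-martingale. It is driven by the Brownian motion $B^{\theta^{v}}$, since after the measure change the relevant filtration for this part is generated by $B$ (Remark \ref{rem:4}). By martingale representation it equals $Y_{0}+\int_{0}^{t}e^{-\Lambda_{s}}Z_{s}\dd B^{\theta^{v}}_{s}$ for some progressive $Z$. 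Applying It\^o's product rule to $e^{\Lambda_{t}}\cdot(\cdot)$ then gives (\ref{BSDE1}).

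\textbf{Step 2: identifying $Y_{0}$.} At $t=0$ we have $\Lambda_{0}=0$, so $Y_{0}=\E^{\theta^{v}}[\int_{0}^{T}e^{-\Lambda_{s}}f_{s}\dd s]$, which is exactly (\ref{IT.o4}). Proposition \ref{pro:2} then gives $J(P;\theta^{v},v)=Y_{0}$.

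\textbf{Step 3: well-posedness and uniqueness.} Take two solutions and form the difference $\delta Y$. It solves the homogeneous equation, so $e^{-\Lambda}\delta Y$ is a $\PP^{\theta^{v}}$-local martingale with zero terminal value. It is a true martingale provided $\sup_t|Y_t|$ is integrable, and that integrability comes from the square-integrability of $f$ via Doob's inequality; hence $\delta Y=0$. Because the $Z$-coefficient is unbounded, standard Lipschitz BSDE theory under $\PP$ does not apply directly. I would therefore work in the space of solutions that are square-integrable under $\PP^{\theta^{v}}$. To move integrability between $\PP$ and $\PP^{\theta^{v}}$, I would use Hölder's inequality with the density $X^{\theta^{v}}_{T}$. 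Its moments are controlled by the exponential integrability (\ref{wf.adm}), although strictly Novikov only guarantees $\E[X_T]=1$, so higher moments may need a slightly stronger use of (\ref{wf.adm}).

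\textbf{Main obstacle.} I expect this integrability transfer to be the hardest step: showing $\E^{\theta^{v}}[\int|f|^{2}]<\infty$ from conditions stated under $\PP$, and controlling $\lambda$ and the stochastic integral. I would first try a Cauchy--Schwarz bound with $\E[(X^{\theta^{v}}_{T})^{2}]$. If that fails, I would settle for $\mathbb{L}^{1}$-type solutions obtained by localization, which suffices for the identity $J=Y_0$.
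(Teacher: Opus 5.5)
Your proof is correct and is essentially the paper's argument. By Bayes' formula your $Y_t=\E^{\theta^v}\bigl[\int_t^T e^{-(\Lambda_s-\Lambda_t)}f_s\,\dd s\,\big|\,\mathscr{F}_t\bigr]$ is exactly the paper's $(\Gamma^{\theta^v}_t)^{-1}\E\bigl[\int_t^T\Gamma^{\theta^v}_s f_s\,\dd s\,\big|\,\mathscr{F}_t\bigr]$ with $\Gamma^{\theta^v}=X^{\theta^v}e^{-\Lambda^{\theta^v}}$; the paper cites the linear-BSDE formula under $\PP$ (absorbing the Girsanov drift into $\Gamma^{\theta^v}$), whereas you use martingale representation in $B^{\theta^v}$ under $\PP^{\theta^v}$ plus It\^o's product rule, and both conclude by evaluating at $t=0$ and changing measure.

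Your Step 3 goes beyond the paper, which only asserts well-posedness. One correction there: (\ref{fL2})--(\ref{PL2}) with Cauchy--Schwarz give $f\in\mathbb{L}^{1}$, not $\mathbb{L}^{2}$, but $\mathbb{L}^{1}$ is all the identity $J=Y_0$ needs.
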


In Proposition \ref{pro:3}, the $\mathbb{F}$-adapted processes $Y^{\th^v}$ and $Z^{\th^v}$ can be interpreted, respectively, as the IT's objective function in the time flow and its sensitivity with respect to the randomness source $B$ driving the NTs' cumulative order. We should note that in (\ref{BSDE1}), $Y^{\th^v}$ can involve (nonlinearly) nested integrals of $\th^v$, giving it general dependence on the paths of $\th^v$ (namely the IT's entire trade history). This path-dependence feature is rooted in the RG's prosecution dynamism and does not arise from the structure of the penalty function alone, which is in stark contrast to the scenario of static prosecution in \c{C}etin (\citeyear{C25}).

Also, as long as Novikov's condition (\ref{wf.adm}) is in force, the alternative formulation is able to cover \textsl{all} admissible strategies and coincides with the standard formulation; see also Qiao and Zhang (\citeyear{QZ25a}) Remark 2.5 (i). Thus, the equilibrium under the alternative formulation is an \textsl{equivalent equilibrium}, precisely defined below.

\begin{definition}\label{eq.def2}
For any fixed $N\geq1$, an equivalent equilibrium is a pair $(\theta^{\star}_{N},P^{\star}_{N})$ of trading strategy and pricing rule such that:\smallskip\\
(i) Given $P^{\star}_{N}$, $\theta^{\star}_{N}$ maximizes (\ref{IT.o4}) in the sense that
\begin{equation}\label{IT.op}
  J(P^{\star}_{N};\theta^{\star v}_{N},v)=\sup_{\theta^v\in\cA}J(P^{\star v}_{N};\theta^v,v),\quad v\in\cV;
\end{equation}
(ii) Given $\theta^{\star}_{N}$, $P^{\star}_{N}=P^{\theta^{\star}_{N}}$, as in (\ref{MM.p2}).
\end{definition}

From a technical perspective, the alternative formulation has also rendered the IT's profit maximization an open-loop optimization problem -- in the sense of optimization with respect to functions of time and the Brownian motion only -- as far as the equivalent equilibrium is concerned. Indeed, based on Definition \ref{eq.def1}, for each $v\in\cV$, $\th^{v}$ is originally postulated as a function of $Q_{[0,\cdot]}$, hence forming a closed loop. In comparison, in light of Proposition \ref{pro:3}, Definition \ref{eq.def2} turns to the interpretation $\th^{v}=\th\big(v;t,\big(\sqrt{N}\int_0^\cdot\si_s\dd B_s\big)_{[0,t]})$, $t\in[0,T]$, which forms an open loop as there is a one-to-one correspondence between such $\th^v$ and $\vartheta\in\cA$ with $\vartheta(t,B_{[0,t]})=\th\big(v;t,\big(\sqrt{N}\int_0^\cdot\si_s\dd B_s\big)_{[0,t]}\big)$.

\medskip

\subsection{Limiting equilibrium}\label{S:2.4}

According to Definition \ref{eq.def2}, solving the equilibrium for finite $N\geq1$ requires considering a coupled system for $(\theta^{\star}_N,P^{\star}_N)$, which can be written into a generally infinite-dimensional (depending on the cardinality of $\cV$) system of coupled forward-backward stochastic differential equations (FBSDEs) in the continuous-time setting; see Qiao and Zhang (\citeyear{QZ25a}) \text{Thm.} 3.2. Instead of revolving around such coupled systems, whose global analysis is rather difficult and which can well have multiple solutions (e.g., Ma and Yong \citeyear{MY91}), we place the emphasis on the limiting case $N\to\infty$, also in line with stealth trading. As mentioned in the one-period model of Ma, Xia, and Zhang (\citeyear{MXZ25}), this consideration leads to diminished price informativeness and is able to decouple $P^{\star}_N$ from $\theta^{\star}_N$, thus significantly enhancing the tractability of the resulting equilibrium, apart from strong empirical support (\text{cf.} Meulbroek \citeyear{M92} and Frino \text{et al.} \citeyear{FSWZ13}).

In more detail, the IT exercises stealth trading by appropriately scaling his trade size with respect to the population size, governed by a \textsl{stealth index} $\gamma$, i.e., $\tilde{\theta}=N^{-\gamma}\theta$. In equilibrium, the original strategy $\theta^{\star}_N$ is expected to have the asymptotic property $\lim_{N\to\infty}(\theta^{\star v}_{N}/(N^{\gamma}\tilde{\theta}^{\star v}))=1$ for any $v\in\cV$,
%\footnote{Note that such a limit satisfies the admissibility constraint in (\ref{wf.adm}) due to the dependency of $C_{N}$ on $N$.}
where $\tilde{\theta}^{\star}$ is a presumably existent limiting equilibrium strategy (with real values). The choice of $\gamma$ is mutually determined by the intensity of regulatory investigation ($\beta$) as well as the functional form of legal penalties ($\Pi$), which will be explained afterwards. Following Ma, Xia, and Zhang (\citeyear{MXZ25}) \text{Sect.} 3, for stealth trading, it is reasonable to restrict attention to the value range $[0,1/2)\ni\gamma$.

In what follows, let $\tilde{P}$ denote the corresponding limiting pricing rule of the MM, which shares the same properties of $P$ as shown in (\ref{IT.o4}). Similar properties of $\theta$ apply to $\tilde{\theta}$ as well, with
\begin{equation}\label{wf.adm.t}
  \tilde{\theta}^{v}\in\cA,\quad v\in\mathcal{V}.
\end{equation}
%where the constant $\tl C_{v}>0$ only depends on $v$.
Throughout this paper, the tilde symbol is reserved for scaled quantities. To formalize the notion of limiting equilibrium under the alternative formulation, note that with $\gamma<1/2$, putting $\theta^{v}=N^{\gamma}\tilde{\theta}^{v}$ for $v\in\mathcal{V}$, the limit of (\ref{dp}) is (a.s.)
\begin{equation}\label{dp.l}
  \lim_{N\to\infty}X^{N^{\gamma}\tilde{\theta}^{v}}_{t}=\lim_{N\to\infty}\mathcal{E}\bigg(N^{\gamma-1/2}\int^{\cdot}_{0} \frac{\tilde{\theta}^{v}_{s}}{\sigma_{s}}\dd B_{s}\bigg)_{t}=1,\quad t\in[0,T],
\end{equation}
which implies that $\PP^{N^{\gamma}\tilde{\theta}_{v}}\to\PP$ (strongly) as $N\to\infty$, i.e., the alternatively formulation coincides with the original standard formulation in the $N$-limit. In this case, by dominated convergence, the MM's pricing rule (\ref{MM.p2}) simplifies into a constant
\begin{equation}\label{MM.lp}
  \tilde{P}_{\gamma}=\E[V],\quad\PP\text{-a.s.}
\end{equation}
Thus, for any $\tilde{\theta}\in\cA$, the scaled objective function for the IT can be defined independently from the pricing rule, i.e., with (\ref{MM.lp}),
\begin{equation}\label{IT.lo}
  \tilde{J}_{\gamma}(\tilde{\theta}^v,v):=\lim_{N\to\infty}(N^{-\gamma}J(\tilde{P}_{\gamma};N^{\gamma}\tilde{\theta}^v,v)),\quad v\in\mathcal{V},
\end{equation}
provided that the limit exists. Here, note that for $t\in[0,T]$, $N^{\gamma}\tilde{\theta}^v_t=N^{\gamma}\tilde{\theta}(v;t,B_{[0,t]})$ is generally different from $\th^{v}_t=\th\big(v;t,\big(\sqrt{N}\int_0^\cdot\si_s\dd B_s\big)_{[0,t]})$.

To guarantee that the stealth index $\gamma$ is effective, in the sense that IT's stealth results from a sophisticated choice of trade sizes in lieu of a complete abstention from trading, it is important to also require that both the limiting equilibrium strategy and the associated objective function value be nonzero.
%Additionally, according to (\ref{wf.adm.t}), the limiting equilibrium strategy should not constantly attain its bounds, namely $\pm\tilde{C}_{v}$, with general dependence on $v$, as this would imply that the admissibility constraint is overly restrictive.
This leads us to the next definition for the limiting equilibrium paired with Definition \ref{eq.def2}.

\begin{definition}\label{leq.def}
A $\gamma$-limiting equilibrium with $\gamma\in[0,1/2)$ is a pair $(\tilde{\theta}^{\star}_{\gamma},\tilde{P}^{\star}_{\gamma})$ of (scaled) trading strategy and pricing rule such that:\footnote{The zero element $\mathbf{0}$ is understood in the sense of equivalence class for (\ref{wf.adm.t}), recalling Definition \ref{as.def}.} \smallskip\\
(i) $\tilde{\theta}^{\star}_{\gamma}$ maximizes (\ref{IT.lo}), i.e.,
\begin{equation}\label{IT.lop}
  \tilde{J}_{\gamma}(\tilde{\theta}^{\star v}_\g,v)=\sup_{\tilde{\theta}^v\in\cA\setminus\{\mathbf{0}\}}\tilde{J}_{\gamma}(\tilde{\theta}^v,v)\neq0,\quad v\in\mathcal{V};
\end{equation}
(ii) $\tilde{P}^{\star}_{\gamma}=\E[V]$ as in (\ref{MM.lp}).
\end{definition}

\begin{remark}\label{rem:5}
In the extreme case $\gamma=0$, the population size $N$ has no effect on the IT's trade size, with $\theta=\tilde{\theta}$, which is at the same level as an average NT. With no price impact ($\tilde{P}_{\gamma}=\E[V]$), insider trading achieves maximal stealth. On the other hand, by forcing $\gamma=1/2$, one has $X^{N^{\gamma}\tilde{\theta}^{v}}=X^{\theta^{v}}$ in (\ref{dp.l}), and $N$ has no practical effect on the alternative formulation. In this case, the IT trades on the same scale as all NTs combined (hence no stealth), with usual price impact, similar to what is considered in the existing literature (\text{cf.} Carr\'{e}, Collin-Dufresne, and Gabriel \citeyear{CC-DG22} and \c{C}etin \citeyear{C25}).
\end{remark}

Based on Definition \ref{leq.def}, for any $v\in\cV$, $\tl\th^v$ is decoupled from the (constant) pricing rule, and $\tl\th^v$ is treated as a functional of $B$ only (with no more dependence on $N$). Hence, the IT's profit maximization \eqref{IT.lop} for the $\gamma$-limiting equilibrium also leads to open-loop optimization, analogous to the case of Definition \ref{eq.def2} regarding the (finite-$N$) equivalent equilibrium.

\medskip

\section{Equilibrium characterization}\label{S:3}

In this section, we analyze the (equivalent) equilibrium in Definition \ref{eq.def2} for any fixed $N\geq1$ %(recalling its equivalence to the original equilibrium (Definition \ref{eq.def1}))
as well as its limiting counterpart in Definition \ref{leq.def}. To facilitate analysis, we introduce a few technical assumptions.

\begin{assumption}\label{as:1}
Under Remark \ref{rem:2}, the aggregation function $W\in\mathcal{C}^{(1,1)}(\mathbb{R}^{2}_{+};\mathbb{R}_{+})$, and its partial derivatives $W^{(1,0)}$ and $W^{(0,1)}$ are uniformly bounded, i.e., there exists a constant $K>0$ such that $|W^{(1,0)}(x,y)|\leq K$ and $|W^{(0,1)}(x,y)|\leq K$ for all $(x,y)\in\mathbb{R}^{2}_{+}$.
Furthermore, the criminal penalty functional (at time $\tau$) takes the form
\begin{equation}\label{crpe.s}
  \Pi_{0}(\theta_{[0,\tau]\!]})=\bigg(\int^{\tau}_{0}(\varpi_{0}(\theta_{s}))^{p}\dd s\bigg)^{1/p},
\end{equation}
for $p\geq1$ and some function $\varpi_{0}\in\mathcal{C}(\mathbb{R};\mathbb{R}_{+})$ that is decreasing on $\mathbb{R}_{-}$ and increasing on $\mathbb{R}_{+}$.
\end{assumption}

In Assumption \ref{as:1}, while the differentiability and uniformly bounded derivatives conditions on $W$ are natural and nonrestrictive, the $\mathbb{L}^{p}$-functional form (\ref{crpe.s}) implies that criminal charges are determined based on the IT's entire trade history, rather than at certain time points, once prosecution is successful. The coefficient $p$ measures the degree of concentration and the function $\varpi_{0}$ is interpreted as the instantaneous rate of penalty. In particular, for given $\tau$, when $p=1$, $\Pi_{0}(\theta_{[0,\tau]\!]})=\int^{\tau}_{0}\varpi_{0}(\theta_{s})\dd s$ is a linear functional of $\varpi_{0}(\theta)$; when $p=2$, $\Pi_{0}(\theta_{[0,\tau]\!]})=\big(\int^{\tau}_{0}(\varpi_{0}(\theta_{s}))^{2}\dd s\big)^{1/2}$ is a quadratic functional of $\varpi_{0}(\theta)$, which further coincides with the quadratic penalty choice in \c{C}etin (\citeyear{C25}) \text{Sect.} 2 if $\varpi=|\cdot|$ and $\tau=1$. In Theorem \ref{theorem:sup}, we shall consider the case $p\to\infty$ as well, with (\ref{crpe.s}) becoming
\begin{equation}\label{crpe.sl}
  \Pi_{0}(\theta_{[0,\tau]\!]})=\sup_{s\in[0,\tau]\!]}\varpi_{0}(\theta_{s}),
\end{equation}
in which case the penalty responds to the IT's most violent trade(s). With this limiting case, the penalty scheme (\ref{crpe.s}) offers a flexible framework that accommodates both insiders who concentrate risk in a few large trades and those who distribute it over time, hence delivering a magnitude-sensitive reinforcement of the deterrent effect.

The next two assumptions are about the functional behaviors of the hazard rate function $\lambda$, penalty function $W$, and the instantaneous penalty rate function $\varpi_{0}$ under Assumption \ref{as:1}. As discussed in Ma, Xia, and Zhang (\citeyear{MXZ25}), the assumed properties of $\lambda$ hold for most practical scenarios, including when investigations are initiated based on detected abnormal trading activity (DeMarzo, Fishman, and Hagerty \citeyear{DFH98}), while those of $\varpi_{0}$ are totally controllable by the RG, hence also nonrestrictive. In particular, in Assumption \ref{as:3}, the tail behaviors are assumed to be time-independent for simplicity, as imposing time dependence leads to no substantial differences or practical benefits from a modeling perspective. Of particular interest are the coefficients $\eta$ and $\alpha$, which respectively capture the RG's limited capacity to detect small-scale insider trading, reflecting potential incompetence in regulatory diligence, and the degree to which severe criminal penalties are imposed on large trades. The implications of these coefficients for insider trading strategies will be explored in Section \ref{S:4}.

\begin{assumption}\label{as:2}
$\lambda\in\mathcal{C}^{(0,1)}([0,T]\times(\mathbb{R}\setminus\{0\});\mathbb{R}_{+})$, with $\lambda^{(0,1)}<0$ on $\mathbb{R}_{--}$ and $\lambda^{(0,1)}>0$ on $\mathbb{R}_{++}$. Also, $\varpi_{0}\in\mathcal{C}^{1}(\mathbb{R}\setminus\{0\};\mathbb{R}_{+})$, with $\varpi'_{0}\leq0$ on $\mathbb{R}_{-}$ and $\varpi'_{0}\geq0$ on $\mathbb{R}_{+}$.
\end{assumption}

\begin{assumption}\label{as:3}
Under Assumption \ref{as:1} and Assumption \ref{as:2}, there exist constants $\eta\geq1$, $\alpha\geq1$, $\kappa>0$, $b\geq0$, and $C_{1}\geq0$ such that $\lambda(t,\imath)=\kappa|\imath|^{\eta}+o(|\imath|^{\eta})$ as $|\imath|\searrow0$ for all $t\in[0,T]$, $\varpi_{0}(\imath)=b|\imath|^{\alpha}+O(1)$ as $|\imath|\to\infty$, and $W(x,y)=C_{1}(x+y)+O(1)$ as $x,y\to\infty$.
\end{assumption}

We are ready to present our main result regarding any finite-$N$ equilibrium amid dynamic legal risk and the associated $\gamma$-limiting equilibrium as $N\to\infty$ with $\gamma<1/2$ under stealth trading.

\begin{theorem}\label{theorem:p}
Let Assumption \ref{as:1} and Assumption \ref{as:2} hold, with $1\leq p<\infty$. Then, we have the following two statements: \smallskip\\
(i) For any fixed $N\geq1$, the necessary conditions governing any equilibrium $(\theta^{\star}_{N},P^{\star}_{N})$ in the sense of Definition \ref{eq.def1} are given by\footnote{We write $\varpi_{0}'(\cdot\pm)$, respectively, for the left derivative and right derivative of $\varpi_{0}$.}
\begin{align}\label{NC1}
  &(1-c\bar Y^{\th_N^{\star v}}_{t})(v-P_{N,t}^\star)+\frac{Z_t^{\th^{\star v}_N}}{\sqrt{N}\si_t}-\lambda^{(0,1)}(t,\theta^{\star v}_{N,t}) \nonumber\\
  &\qquad\times\bigg(W\bigg(\bigg(\int^{t}_{0}(\varpi_{0}(\theta^{\star v}_{N,s}))^{p}\dd s\bigg)^{1/p},\max\bigg\{0,c\int^{t}_{0}\theta^{\star v}_{N,s}(v-P^\star_{N,s})\dd s\bigg\}\bigg)+Y^{\theta^{\star v}_N}_{t}\bigg) \nonumber\\
  &\quad-(\varpi_{0}(\theta^{\star v}_{N,t}))^{p-1}{\big(\varpi_{0}'(\theta^{\star v}_{N,t}+)\1_{\{\D\th^v_t>0\}}+\varpi_{0}'(\theta^{\star v}_{N,t}-)\1_{\{\D\th^v_t<0\}}\big)}\h Y_t^{\th^{\star v}_N}+\frac{Z_t^{\th^{\star v}_N}}{\sqrt{N}\si_t}=0, \quad\D\th^{v}\in \cA,\;v\in\cV, \nonumber\\
  &P^{\star}_{N,t}=\frac{\int_{\mathcal{V}}vX^{\theta^{\star v}_N}_{t}\dd\PP[V\leq v]}{\int_{\mathcal{V}}X^{\theta^{\star v}_N}_{t}\dd\PP[V\leq v]},\quad t\in[0,T],
\end{align}
where $X^{\theta^{\star v}_N}$ is specified in (\ref{dp}), $(Y^{\th^{\star v}_{N}},Z^{\th^{\star v}_N})$ is understood to satisfy the BSDE (\ref{BSDE1}) with $\theta=\theta^{\star}_N$, and $(\bar Y^{\th^{\star v}_{N}},\bar Z^{\th^{\star v}_N})$ and $(\h Y^{\th^{\star v}_{N}},\h Z^{\th^{\star v}_N})$ satisfy the following BSDEs, respectively:
\begin{align}\label{BSDE2}
  \bar Y^{\th^{\star v}_{N}}_t&=\int_t^T\bigg[\l_s^{\th^{\star v}_{N}}W^{(0,1)}\bigg(\bigg(\int^{s}_{0}(\varpi_{0}(\theta^{\star v}_{N,r}))^{p}\dd r\bigg)^{1/p},\max\bigg\{0,c\int^{s}_{0}\theta^{\star v}_{N,r}(v-P^\star_{N,r})\dd r\bigg\}\bigg) \nonumber\\
  &\qquad\times{\big(\1_{\{c\int^{s}_{0}(v-P_{N,r}^\star)\th_{N,r}^{\star v}\dd r>0\}}+\1_{\{\int^{s}_{0}(v-P_{N,r}^\star)\th_{N,r}^{\star v}\dd r=0,\; \int^s_{0}(v-P_{N,r}^\star)\D\th_r^v\dd r>0\}}\big)}-\l_s^{\th^{\star v}_{N}}\bar Y_s^{\th^{\star v}_{N}}+\frac{\th^{\star v}_{N,s}\bar Z_s^{\th^{\star v}_{N}}}{\sqrt{N}\si_s}\bigg]\dd s \nonumber\\
  &\quad-\int_t^T\bar Z_s^{\th^{\star v}_{N}}\dd B_s, \nonumber\\
  \h Y^{\th^{\star v}_N}_t&=\int_t^T\bigg[\l_s^{\th^{\star v}_{N}}W^{(1,0)}\bigg(\bigg(\int^{s}_{0}(\varpi_{0}(\theta^{\star v}_{N,r}))^{p}\dd r\bigg)^{1/p},\max\bigg\{0,c\int^{s}_{0}\theta^{\star v}_{N,r}(v-P^\star_{N,r})\dd r\bigg\}\bigg) \nonumber\\
  &\qquad\times\bigg(\int^{s}_{0}(\varpi_{0}(\th^{\star v}_{N,r}))^{p}\dd r\bigg)^{1/p-1}-\l_s^{\th^{\star v}_{N}}\h Y_s^{\th^{\star v}_{N}}+\frac{\th^{\star v}_{N,s}\h Z_s^{\th^{\star v}_{N}}}{\sqrt{N}\si_s}\bigg]\dd s-\int_t^T\h Z_s^{\th^{\star v}_{N}}\dd B_s, \nonumber\\
  &\quad v\in\mathcal{V},\;t\in[0,T].
\end{align}
\smallskip
(ii) Suppose that $v\neq\E[V]$ and $2\beta\eta<\eta+\alpha-1$ under Assumption \ref{as:3}. Then, a necessary condition for a $\gamma$-limiting equilibrium $(\tilde{\theta}^{\star}_\g,\tilde{P}^{\star}_\g)$ to exist in the sense of Definition \ref{leq.def} is
\begin{equation}\label{og}
  \gamma=\frac{\beta\eta}{\eta+\alpha-1}.
\end{equation}
Moreover, whenever such $(\tilde{\theta}^{\star}_\g,\tilde{P}^{\star}_\g)$ exists, $\tilde{\theta}^{\star}_\g$ is deterministic.
\end{theorem}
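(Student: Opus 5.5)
For part (i), we will use a stochastic maximum principle by spike/convex (Gâteaux) perturbation on the BSDE representation of Proposition \ref{pro:3}. Fix $v$ and the candidate optimizer $\theta^{\star v}_N$. For an arbitrary direction $\Delta\theta^v\in\cA$, set $\theta^\epsilon=\theta^{\star v}_N+\epsilon\Delta\theta^v$ and compute $\frac{\dd}{\dd\epsilon}Y_0^{\theta^\epsilon}|_{\epsilon=0^+}\le0$.

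The generator of (\ref{BSDE1}) depends on $\theta$ through four channels:
\begin{itemize}
\item[(a)] the running profit $\theta(v-P)$;
\item[(b)] the hazard rate $\lambda(t,\theta_t)$, multiplying both $\Pi_{\rm a}$ and $Y$;
\item[(c)] the path functionals $(\int_0^s\varpi_0^p)^{1/p}$ and $c\int_0^s\theta(v-P)$ entering $W$;
\item[(d)] the Girsanov term $\theta Z/(\sqrt N\sigma)$.
\end{itemize}
Since $P=P^\star_N$ is frozen in the IT's problem, the variation process $\delta Y$ solves a linear BSDE. Its driver contains (d) and the $\lambda^{(0,1)}$ terms, plus path-integral contributions $\int_0^s(\cdot)\Delta\theta_r\dd r$ from (c).

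The path-integral contributions are handled by duality (Fubini swap). A term $\lambda_sW^{(0,1)}\cdot c\int_0^s(v-P_r)\Delta\theta_r\dd r$ is rewritten as $\Delta\theta_r$ times the accumulated future weight $\int_r^T\lambda_sW^{(0,1)}e^{-\int_r^s\lambda}\dd s$ under $\PP^{\theta}$. This weight is precisely the solution $\bar Y_r$ of the first BSDE in (\ref{BSDE2}). The same treatment of the $p$-norm channel, using $\frac{\dd}{\dd\epsilon}(\int\varpi_0^p)^{1/p}=(\int\varpi_0^p)^{1/p-1}\int\varpi_0^{p-1}\varpi_0'\Delta\theta$, produces $\h Y$.

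The kinks need care. For the $\max\{0,\cdot\}$ in $W$ and the non-differentiability of $\varpi_0$ at $0$, one-sided derivatives appear, which explains the indicators involving $\Delta\theta$. Collecting the coefficient of $\Delta\theta_t$ inside $\E^{\theta}[\int_0^T e^{-\Lambda_t}(\cdot)\Delta\theta_t\dd t]$ and requiring it to vanish (or to have the correct sign, with equality by symmetry of admissible directions where differentiable) gives (\ref{NC1}). The $Z/(\sqrt N\sigma)$ term there comes from differentiating $X^\theta$ in the measure change. The pricing equation is simply Proposition \ref{pro:2} (ii).

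Two technical points remain. First, we must justify differentiating under the expectation. The Novikov and $L^2$ conditions of Definition \ref{as.def}, together with the bounded $W^{(1,0)},W^{(0,1)}$ from Assumption \ref{as:1}, give standard linear-BSDE estimates. Second, the three BSDEs must be well posed. They are linear with bounded-type coefficients, so existence and uniqueness follow from the admissibility conditions.

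For part (ii), we plug $\theta=N^\gamma\tilde\theta$ into (\ref{IT.o2}), use (\ref{dp.l}) and (\ref{MM.lp}) so that $P\equiv\E[V]$, and expand using Assumption \ref{as:3}. The hazard rate becomes $\lambda\approx\kappa N^{(\gamma-\beta)\eta}|\tilde\theta|^\eta$, which is small when $\gamma<\beta$. The condition $2\beta\eta<\eta+\alpha-1$ together with the candidate $\gamma$ ensures this, and also keeps $\gamma<1/2$. The profit term then scales as $N^{\gamma}$. The penalty term scales as $N^{(\gamma-\beta)\eta}\cdot N^{\gamma\alpha}$: $W$ is asymptotically linear and $\varpi_0\sim b|\cdot|^\alpha$, so the criminal part gives $N^{\gamma\alpha}$ while the civil part gives only $N^\gamma$, which is of lower order when $\alpha\ge1$. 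Dividing by $N^\gamma$, the penalty is of order $N^{(\gamma-\beta)\eta+\gamma(\alpha-1)}$.

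Three regimes follow:
\begin{itemize}
\item If this exponent is positive, any nonzero $\tilde\theta$ is eventually dominated by the penalty, so the supremum over $\cA\setminus\{\mathbf 0\}$ cannot give a finite nonzero limit; scaling $\tilde\theta\to\epsilon\tilde\theta$ drives values to $0$, contradicting (\ref{IT.lop}).
\item If the exponent is negative, the penalty vanishes, and $\tilde J=(v-\E[V])\E\int\tilde\theta$ is unbounded above because $v\ne\E[V]$, so no maximizer exists.
\item Balance, $(\gamma-\beta)\eta+\gamma(\alpha-1)=0$, is exactly (\ref{og}).
\end{itemize}

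Determinism of $\tilde\theta^\star$ follows because under balance the limiting functional $\tilde J_\gamma$ has a pathwise integrand depending only on $\tilde\theta(\omega)$ and deterministic data: $P=\E[V]$ and the measure is $\PP$. Maximizing pointwise in $\omega$ therefore gives the same deterministic problem for every $\omega$. More precisely, if $\tilde\theta$ is random, replacing it by the pathwise optimum does not decrease the value. Combined with strictness or uniqueness, or more simply with the $N\to\infty$ limit of (\ref{NC1}), this shows the optimum is deterministic. In that limit $Z/\sqrt N\to0$, so the BSDEs reduce to backward ODEs with no martingale part.

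The main obstacle is rigorously interchanging limit and supremum in (ii), i.e.\ passing the asymptotic expansions of $\lambda,\varpi_0,W$ through the expectation uniformly over admissible $\tilde\theta$. Only the small-$|\imath|$ behaviour of $\lambda$ and the large-argument behaviour of $\varpi_0$ are prescribed, so a dominated-convergence argument with the exponential integrability of $\tilde\theta$ is needed. This is where I expect most of the work.
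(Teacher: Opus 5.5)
Your part (i) is the paper's argument. You perturb (\ref{BSDE1}) in a direction $\Delta\theta^v$, represent $DY_0$ through $\Gamma=Xe^{-\Lambda}$, and use an iterated-expectation swap to pull $\Delta\theta^v$ out of the path integrals, which produces $\bar Y$ and $\hat Y$ as conditional expectations solving (\ref{BSDE2}).

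The gap is in (ii). Your scaling analysis rests on $\lambda(t,N^{\gamma-\beta}\tilde\theta_t)\approx\kappa N^{(\gamma-\beta)\eta}|\tilde\theta_t|^{\eta}$. Assumption \ref{as:3} gives this only when the argument tends to $0$, i.e.\ $\gamma<\beta$, and you justify $\gamma<\beta$ by appealing to ``the candidate $\gamma$''. That is circular for a necessity statement.

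It is also false at the candidate itself when $\alpha=1$ or $\beta=0$. In those cases (\ref{og}) gives $\gamma=\beta$, and the hazard stays of order one. The limiting functional then keeps the full $\lambda(t,\tilde\theta_t)$ and the survival factor $e^{-\Lambda^{\tilde\theta}_t}$, as in (\ref{DCP-1})--(\ref{DCP-2}); it is not the small-argument expansion with $e^{-\Lambda}\to1$.

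More importantly, you never exclude $\gamma>\beta$, which includes every $\gamma>0$ when $\beta=0$. There $N^{\gamma-\beta}\tilde\theta_t$ diverges, no expansion of $\lambda$ is available, and your ``positive exponent, penalty dominates'' bullet has no basis. The paper treats this regime separately (Subcases 1.1 and 2.1). Using the monotonicity in Assumption \ref{as:2}, it argues that the hazard blows up, so $e^{-\Lambda}\to0$ kills the profit term and the limiting value is non-positive, violating (\ref{IT.lop}). At $\gamma=\beta$ it observes directly that the criminal term is of order $N^{\beta(\alpha-1)}$, which forces $\alpha=1$. Your exponent bookkeeping reproduces the $\gamma=\beta$ conclusions only formally.

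Three smaller points.
\begin{enumerate}
\item In your first regime ($\gamma<\beta$, positive exponent) the $N$-limit is taken for fixed $\tilde\theta$ before the supremum. Rescaling $\tilde\theta\to\epsilon\tilde\theta$ therefore does not drive values to $0$. Instead $\tilde J_\gamma(\tilde\theta,v)=-\infty$ for every $\tilde\theta\neq\mathbf{0}$ (assuming $C_2>0$), and that is why the regime is inadmissible.
\item For determinism, your pathwise-reduction argument is the paper's and is the right one. Your caveat is apt: it only shows that one can restrict to deterministic strategies, unless the deterministic problem has a unique solution (compare the non-uniqueness in Proposition \ref{pro:6}). The alternative route via the $N\to\infty$ limit of (\ref{NC1}) is circular, because the BSDEs lose their martingale part only once $\tilde\theta$ is already deterministic.
\item For interchanging the limit and the expectation, the paper does not use dominated convergence. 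It uses uniform integrability of the Girsanov densities $X^{N^{\gamma}\tilde\theta^{v}}_{T}$, whose second moments are bounded via exponential integrability of $\tilde\theta$. It then extends this to their products with $\mathbb{L}^2$-convergent integrands (Lemmas \ref{XUI} and \ref{XYUI}). That is the tool you would need at the step you flagged as the main obstacle.
\end{enumerate}
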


\medskip

Statement (i) in Theorem \ref{theorem:p} has provided a partial characterization of the finite-$N$ equilibrium in Definition \ref{eq.def2}, through the necessary conditions in \eqref{BSDE2} underlying its existence, which are structurally similar to the outcomes of applying the stochastic maximum principle for optimal control problems -- thanks to Proposition \ref{pro:2} -- in spite of the game structure between the IT and the MM. Nevertheless, the BSDEs in (\ref{NC1}) and (\ref{BSDE2}) form a collection of convoluted fixed-point systems indexed by the auxiliary variable $\D\th^v\in\cA$, $v\in\cV$, for which neither existence nor uniqueness can be guaranteed in general, whereas this analytical challenge in turn accentuates the computational advantage of the associated $\gamma$-limiting equilibrium; this point will be elaborated upon in discussing statement (ii). In particular, the use of $\D\th^v\in\cA$ in \eqref{NC1} is clarified in the following remark. % Since the payoff depends on the path of the insider strategy, the first order condition needs to hold of all direction for $\D\th$. Notice that for each $\D\th$, a system \eqref{NC1} and \eqref{BSDE2} is defined.

\begin{remark}\label{rmk5}
Given the general dependence on the paths of $\tl\th^v$ (recalling Proposition \ref{pro:3}), the auxiliary variable $\D\th^v\in\cA$ appears due to the general non-differentiability of the composed aggregation function $W$ and the rate of penalty $\varpi_{0}$ at $0$. In particular, if $\varpi_0$ happens to be differentiable at $0$ and $W^{(0,1)}(\cdot,0)\equiv0$,\footnote{The latter property holds trivially if $c=0$ (no civil penalties).} then the first condition in (\ref{NC1}) can be stated without using $\D\th^v$, or
\begin{align}\label{NC1'}
  &(1-c\bar Y^{\th_N^{\star v}}_{t})(v-P_{N,t}^\star)+\frac{Z_t^{\th^{\star v}_N}}{\sqrt{N}\si_t}-\lambda^{(0,1)}(t,\theta^{\star v}_{N,t}) \nonumber\\
  &\qquad\times\bigg(W\bigg(\bigg(\int^{t}_{0}(\varpi_{0}(\theta^{\star v}_{N,s}))^{p}\dd s\bigg)^{1/p},\max\bigg\{0,c\int^{t}_{0}\theta^{\star v}_{N,s}(v-P^\star_{N,s})\dd s\bigg\}\bigg)+Y^{\theta^{\star v}_N}_{t}\bigg) \nonumber\\
  &\quad-(\varpi_{0}(\theta^{\star v}_{N,t}))^{p-1}\varpi_{0}'(\theta^{\star v}_{N,t})\h Y_t^{\th^{\star v}_N}+\frac{Z_t^{\th^{\star v}_N}}{\sqrt{N}\si_t}=0, \quad v\in\cV,\nonumber\\
  &P^{\star}_{N,t}=\frac{\int_{\mathcal{V}}vX^{\theta^{\star v}_N}_{t}\dd\PP[V\leq v]}{\int_{\mathcal{V}}X^{\theta^{\star v}_N}_{t}\dd\PP[V\leq v]},\quad t\in[0,T],
\end{align}
with (\ref{BSDE2}) simplifying into
\begin{align}\label{BSDE2'}
  \bar Y^{\th^{\star v}_{N}}_t&=\int_t^T\bigg[\l_s^{\th^{\star v}_{N}}W^{(0,1)}\bigg(\bigg(\int^{s}_{0}(\varpi_{0}(\theta^{\star v}_{N,r}))^{p}\dd r\bigg)^{1/p},\max\bigg\{0,c\int^{s}_{0}\theta^{\star v}_{N,r}(v-P^\star_{N,r})\dd r\bigg\}\bigg) \nonumber\\
  &\quad-\l_s^{\th^{\star v}_{N}}\bar Y_s^{\th^{\star v}_{N}}+\frac{\th^{\star v}_{N,s}\bar Z_s^{\th^{\star v}_{N}}}{\sqrt{N}\si_s}\bigg]\dd s-\int_t^T\bar Z_s^{\th^{\star v}_{N}}\dd B_s, \nonumber\\
  \h Y^{\th^{\star v}_{N}}_t&=\int_t^T\bigg[\l_s^{\th^{\star v}_{N}}W^{(1,0)}\bigg(\bigg(\int^{s}_{0}(\varpi_{0}(\theta^{\star v}_{N,r}))^{p}\dd r\bigg)^{1/p},\max\bigg\{0,c\int^{s}_{0}\theta^{\star v}_{N,r}(v-P^\star_{N,r})\dd r\bigg\}\bigg) \nonumber\\
  &\qquad\times\bigg(\int^{s}_{0}(\varpi_{0}(\th^{\star v}_{N,r}))^{p}\dd r\bigg)^{1/p-1}-\l_s^{\th^{\star v}_{N}}\h Y_s^{\th^{\star v}_{N}}+\frac{\th^{\star v}_{N,s}\h Z_s^{\th^{\star v}_{N}}}{\sqrt{N}\si_s}\bigg]\dd s-\int_t^T\h Z_s^{\th^{\star v}_{N}}\dd B_s, \nonumber\\
  &\quad v\in\mathcal{V},\;t\in[0,T].
\end{align}
In this case, an (equivalent) equilibrium can be found by solving a single fixed-point system formed by \eqref{NC1'} and \eqref{BSDE2'}.
\end{remark}

Statement (ii) in Theorem \ref{theorem:p} further establishes that, under the parametric constraint $\eta+\alpha-1>2\beta\eta$, any $\gamma$-limiting equilibrium is deterministic, which feature is attributable to the absence of randomness in the MM's pricing rule under diminished price impact (recalling \eqref{MM.lp}). Remarkably, such determinism is not overly idealized but arises naturally in the context of stealth trading, where the IT can execute trades without concern for his price impact, which in turn provides a clean shell for analyzing the effects of regulatory practices on the IT's trading behavior over time -- in Section \ref{S:4}. In addition, the above parametric constraint implies that, for a fixed penalty growth rate (given $\alpha$), a higher investigation intensity (small $\beta$) tends to incentivize stealth trading for the IT, who is able to achieve a suitable stealth level ($\gamma$), underlying a stable equilibrium. Conversely, for a fixed detection mechanism (given $\eta$ and $\beta$), the same holds true with enhanced penalties (large $\alpha$). This constraint is also considerably loose in that even when $\alpha=1$, signifying a mild penalty scheme, then it only necessitates that $\beta<1/2$, i.e., detection of the IT is always possible however large the noise trading population is, with $1/2$ corresponding to the growth rate of the order flow from the NTs. More precisely, in this general setting, we recover the same stealth index in Ma, Xia, and Zhang (\citeyear{MXZ25}) \text{Thm.} 3. Specifically, the stealth index is directly related to $\beta$ and inversely related to $\alpha$, showing the IT's internalization of legal risk even in the large-population setting -- jointly controlled by the RG's detection mechanism and impositions of penalties, regardless of which factor dominates.

% {\color{blue} Add interpretations of equilibrium structures. Hint at the RG's problem.}

Pertaining to the facilitation of equilibrium analysis, we also state the following technical remark to clarify the infeasibility to establish a general existence and uniqueness result for the $\gamma$-limiting equilibrium.

\begin{remark}\label{sol}
In light of Theorem \ref{theorem:p} statement (ii), solving a $\gamma$-limiting equilibrium comes down to solving a deterministic optimal control problem (more specifically, (\ref{DCP-1}), (\ref{DCP-2}), and (\ref{DCP-3}) in Appendix \ref{B}), which can be tackled with classical approaches such as deterministic maximization principles or the Hamilton--Jacobi--Bellman equation. Although the objective functions generally depend on the paths of the control $\tl\th^{v}$, a tractable method is to consider integrals over $\tl\th^{v}$ as state dynamics that depend on the control. That said, because the resulting admissibility set (or control space), namely $\mathbb{L}^{2}([0,T];\sgn(v-\E[V])\mathbb{R}_{+})$ (see (\ref{ladm})), is noncompact, neither existence nor the uniqueness of an optimal control can be guaranteed in general. %\footnote{Nevertheless, the optimal value always exists (nonzero) and is unique by our analysis and one can always find it by taking the limit of the payoff achieved by $\e$-optimal controls.} plus the payoff structure is generally not Lipschitz-continuous unless  special choices of $\a$, $\eta$ and $p$.
In Section \ref{S:4}, we shall examine scenarios in which the limiting equilibrium can be solved explicitly and which offer sufficient coverage to draw economic insights.
\end{remark}

Aside from the aforementioned empirical support, the functional realism of considering the $\gamma$-limiting equilibrium lies in another key result, i.e., it serves as a practically reasonable approximation for the finite-population equilibrium. Beforehand, we need to introduce the notion of $\epsilon$-equilibrium in the continuous-time setting.

\begin{definition}\label{eeq.def}
For any fixed $N\geq1$ and $\e\geq0$, an $\epsilon$-equilibrium is a pair $(\theta^{\star,\e}_{N},P^{\star,\e}_{N})$ of trading strategy and pricing rule such that: Based on \eqref{IT.o4} and \eqref{MM.p2},
\begin{equation}\label{IT.eop}
  N^{-\gamma}J(P^{\star,\e}_{N};\theta^{\star,\e\;v}_{N},v)\geq \sup_{\theta^{v}\in\cA}N^{-\g}J(P^{\star,\e}_{N};\theta^{v},v)-\e,\quad v\in\mathcal{V},
\end{equation}
and
\begin{equation}\label{MM.ep}
  \dbE[|P^{\star,\e}_t-P^{\th^{\star,\e}}_t|]\leq \e,\quad t\in[0,T].
\end{equation}
\end{definition}

In Definition \ref{eeq.def}, while \eqref{IT.eop} is a standard condition for $\epsilon$-optimality among (equilibrium) trading strategies, the unconditional expectation (with respect to $\PP$) is taken for the pricing rules in \eqref{MM.ep} mostly for convenience -- to ease the derivation of a weak, universal integrability condition for the asset value $V$ (see \eqref{V.ic}). In principle, one may define the $\epsilon$-equilibrium in a purely pointwise sense for each $\omega\in\Omega$ and proceed with the same type of arguments, albeit under a much more obscure condition for $V$. Theorem \ref{theorem:s} provides precise statements about the approximation power of the $\gamma$-limiting equilibrium.

\begin{theorem}\label{theorem:s}
Assume the setting of Theorem \ref{theorem:p}. Let $(\tilde{\theta}^{\star}_{\gamma},\tilde{P}^{\star}_{\gamma})$ be a $\gamma$-limiting equilibrium, with $\gamma$ given by (\ref{og}) and $\tilde{\theta}^{\star}_\g$ being deterministic, and suppose that
\begin{equation}\label{V.ic}
  \E\big[(V-\E[V])^{2}e^{3\int^{T}_{0}(\tilde{\theta}^{\star V}_{t}/\sigma_{t})^{2}\dd t}\big]<\infty.
\end{equation}
Then, $(N^{\gamma}\tilde{\theta}^{\star}_\g,\tilde{P}^{\star}_{\gamma})$ is also an $\epsilon_{N}$-equilibrium in the finite-$N$ setting for any $N\geq1$, where, for some constant $C>0$ independent of $N$,
\begin{equation}\label{eps.rate}
  \epsilon_{N}\leq
  \begin{cases}
    CN^{\gamma-1/2},&\;\text{if }\beta=0, \\
    \begin{cases}
      CN^{\max\{\gamma-1/2,-\gamma\}},&\;\text{if }\alpha=1, \\
      CN^{\max\{\gamma-1/2,-\gamma(\alpha-1)\}},&\;\text{if }\alpha>1,
    \end{cases}
    &\;\text{if }\beta>0.
  \end{cases}
\end{equation}
\end{theorem}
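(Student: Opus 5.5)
We verify the two conditions of Definition \ref{eeq.def} separately for the pair $(\theta^{\star}_{N},P^{\star}_{N}):=(N^{\gamma}\tilde{\theta}^{\star}_{\gamma},\E[V])$. Throughout we use that $\tilde{\theta}^{\star v}_{\gamma}$ is deterministic (Theorem \ref{theorem:p}(ii)). Hence $X^{\theta^{\star v}_{N}}$ in (\ref{dp}) is the exponential of a Gaussian martingale with quadratic variation $N^{2\gamma-1}\int_{0}^{t}(\tilde{\theta}^{\star v}_{s}/\sigma_{s})^{2}\dd s$, and its moments are explicit.

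\textbf{Pricing condition (\ref{MM.ep}).} Write
\begin{equation*}
  P^{\theta^{\star}_{N}}_{t}-\E[V]=\frac{\E_{V}\big[(V-\E[V])X^{\theta^{\star V}_{N}}_{t}\big]}{\E_{V}\big[X^{\theta^{\star V}_{N}}_{t}\big]},
\end{equation*}
where $\E_{V}$ integrates only over the law of $V$. Since $\E_{V}[V-\E[V]]=0$, the numerator equals $\E_{V}[(V-\E[V])(X_{t}^{\theta^{\star V}_N}-1)]$. We bound the denominator from below and the numerator from above using Cauchy--Schwarz and H\"older. For a Gaussian exponent, $\E[(X_{t}-1)^{2}]=e^{\langle\cdot\rangle_t}-1\leq N^{2\gamma-1}C_{v}e^{N^{2\gamma-1}\int(\tilde\theta/\sigma)^{2}}$, and the inverse of the denominator has moments controlled by similar exponentials. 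Collecting these gives
\begin{equation*}
  \E\big[|P^{\theta^{\star}_{N}}_{t}-\E[V]|\big]\leq CN^{\gamma-1/2}\,\E\big[(V-\E[V])^{2}e^{3\int(\tilde{\theta}^{\star V}/\sigma)^{2}}\big]^{1/2}.
\end{equation*}
The exponent $3$ in (\ref{V.ic}) comes from the H\"older split, using $N^{2\gamma-1}\leq1$. This yields the $N^{\gamma-1/2}$ term.

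\textbf{$\epsilon$-optimality (\ref{IT.eop}).} Since $P^{\star}_{N}=\E[V]$ is constant, Proposition \ref{pro:2} shows that $\PP^{\theta^v}$ only affects the law of $\theta^{v}$ itself, and the problem is that of (\ref{IT.o2}) with constant price. We compare $N^{-\gamma}J(\E[V];\theta^{v},v)$ for arbitrary $\theta^{v}\in\cA$ with the limiting value $\tilde J_{\gamma}(\tilde{\theta}^{\star v},v)$. Substituting $\theta=N^{\gamma}\tilde\theta$ in (\ref{IT.o2}), the hazard becomes $\lambda(t,N^{\gamma-\beta}\tilde\theta)$, which by Assumption \ref{as:3} equals $\kappa N^{\eta(\gamma-\beta)}|\tilde\theta|^{\eta}$ plus a remainder. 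The penalty $\Pi_{\rm a}$ is of order $N^{\gamma\alpha}b\|\tilde\theta\|^{\alpha}_{p}+C_{1}cN^{\gamma}(\cdots)+O(1)$. The choice (\ref{og}) makes $\eta(\gamma-\beta)+\gamma\alpha=\gamma$, so the rescaled objective is $\tilde J_{\gamma}$ plus error terms.

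These errors are of two kinds: (a) the $o(|\imath|^{\eta})$ and $O(1)$ remainders, which after scaling are of relative size $N^{-\gamma(\alpha-1)}$ when $\alpha>1$, or $N^{-\gamma}$ from the civil/constant parts when $\alpha=1$; (b) the difference between $e^{-\Lambda}$ and its expansion. When $\beta=0$ we have $\gamma=0$, the scaling is trivial and only the pricing error survives. The error in the objective caused by replacing the equilibrium price $P^{\theta^{\star}}$ by $\E[V]$ is linear in $|v-P|$ through the Lipschitz bounds on $W$ (Assumption \ref{as:1}), so it is again $O(N^{\gamma-1/2})$. Hence
\begin{equation*}
  \sup_{\theta^{v}}N^{-\gamma}J\leq\sup_{\tilde\theta}\tilde J_{\gamma}+\epsilon_{N}=\tilde J_{\gamma}(\tilde\theta^{\star v},v)+\epsilon_{N}\leq N^{-\gamma}J(\theta^{\star v}_{N})+2\epsilon_{N}.
\end{equation*}

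\textbf{Main obstacle.} The hard step is the upper bound over \emph{all} admissible, possibly random and unbounded $\theta^{v}$, which must hold uniformly in $N$. A large trade can make $\lambda$ fall outside the small-argument regime of Assumption \ref{as:3}. I would handle this by splitting each trajectory into the part where $|N^{-\beta}\theta|$ is small and its complement. On the complement, the penalty growth of order $\alpha\geq1$ together with $\eta+\alpha-1>2\beta\eta$ makes the contribution nonpositive up to $O(N^{-\gamma(\alpha-1)})$. On the small part, a pathwise comparison with the deterministic control problem, applied for each $\omega$, gives an upper bound by $\sup\tilde J_{\gamma}$.
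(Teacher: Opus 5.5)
Your treatment of the pricing condition (\ref{MM.ep}) is essentially the paper's. Both arguments apply Cauchy--Schwarz to the ratio and use explicit Gaussian moments of $X^{N^{\gamma}\tilde\theta^{\star v}}_t$; that is where the exponent $3$ in (\ref{V.ic}) comes from.

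For $\epsilon$-optimality your route differs. The paper writes $N^{-\gamma}J(\E[V];N^{\gamma}\tilde\theta^v,v)$ and $\tilde J_\gamma(\tilde\theta^v,v)$ as BSDEs and compares them strategy by strategy through a-priori estimates. You instead note that at a constant price the objective is the expectation of a pathwise functional $F_N$ of the trade path. Every random strategy is therefore dominated by $\sup_\vartheta F_N(\vartheta)$ over deterministic paths. This reduction is cleaner, and for $\beta=0$ it essentially settles optimality with no error term, since $F_N$ is then independent of $N$. Your step of replacing $P^{\theta^\star}$ by $\E[V]$ is unnecessary, because (\ref{IT.eop}) is already evaluated at $P^{\star,\epsilon}_N=\E[V]$.

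For $\beta>0$, however, the whole claim rests on the uniform inequality $\sup_\vartheta F_N(\vartheta)\le\sup\tilde J_\gamma+\epsilon_N$, and your splitting sketch does not deliver it. Assumption \ref{as:3} says nothing about $\lambda$ away from $0$. Its $o(|\imath|^{\eta})$ remainder carries no rate, so it cannot produce $N^{-\gamma(\alpha-1)}$. The condition $\eta+\alpha-1>2\beta\eta$ is just $\gamma<1/2$ and has no bearing on whether large trades are unprofitable.

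In fact this step is false under the stated hypotheses. Take the following specification:
\begin{itemize}
  \item $\lambda(t,\imath)=\kappa|\imath|/(1+|\imath|)$, so $\eta=1$ and $\lambda(t,0)=0$;
  \item $p=1$, $\varpi_0(\imath)=b|\imath|^{3/2}$, $W(x,y)=C_1(x+y)$;
  \item $0<\beta<3/4$, so $\gamma=2\beta/3$;
  \item bounded $V$.
\end{itemize}
Proposition \ref{pro:4} then gives a deterministic limiting equilibrium, and (\ref{V.ic}) holds. Fix $N$, $v>\E[V]$ and $Q>0$, and consider the deterministic, hence admissible, burst $\theta=N^{\gamma}(Q/\delta)\mathbf{1}_{[0,\delta]}$. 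Since $\lambda\le\kappa$, and $\lambda(t,0)=0$ removes all penalties after $\delta$,
\[
  N^{-\gamma}J(\E[V];\theta,v)\ge e^{-\kappa\delta}Q\,(v-\E[V])-\tfrac{1}{2}\kappa C_1\big(bN^{\gamma/2}Q^{3/2}\delta^{1/2}+c\,(v-\E[V])\,Q\delta\big)\longrightarrow Q\,(v-\E[V])\quad(\delta\searrow0).
\]
Because $Q$ is arbitrary, the supremum in (\ref{IT.eop}) is $+\infty$ for every $N$, and no finite $\epsilon_N$ exists.

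The paper's proof does not close this step either. Its constants are explicitly allowed to depend on $\tilde\theta^v$, so it only shows $|N^{-\gamma}J(\E[V];N^{\gamma}\tilde\theta^v,v)-\tilde J_\gamma(\tilde\theta^v,v)|\le C_{\tilde\theta^v}N^{-r}$ for each fixed strategy, which does not imply (\ref{IT.eop}).

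You have located the real obstacle correctly. Getting past it needs hypotheses the paper does not state, for instance a global growth condition such as $\lambda(t,\imath)\ge\underline{\kappa}|\imath|^{\eta}$ for all $\imath$ together with a quantitative remainder like $O(|\imath|^{\eta+\delta})$. Alternatively, you could restrict the competing strategies to a class on which the per-strategy constants are uniform.
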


Theorem \ref{theorem:s} postulates that any $\gamma$-limiting equilibrium with $\gamma<1/2$, upon existence, is $\epsilon$-optimal in the finite-population setting, for which closeness is inversely related to the population size $N$ and also accelerated by the stealth index $\gamma$, subject to mutual influence from $\alpha$ (severity of criminal penalty) and $\beta$ (investigation intensity). This result reinforces a key conclusion in Ma, Xia, and Zhang (\citeyear{MXZ25}), supporting the practical validity of using the $\gamma$-limiting equilibrium for tractable analysis -- especially in light of its empirical grounding for stealth trading (\text{cf.} Meulbroek \citeyear{M92}, Chakravarty \citeyear{C01}, Frino \text{et al.} \citeyear{FSWZ13}, and Kacperczyk and Pagnotta \citeyear{KP24}). The integrability condition \eqref{V.ic}, which naturally restricts the asset value $V$ from being excessively heavy-tailed, is easily verified on a case-by-case basis, provided that the limiting equilibrium strategy $\tl\th^{\star v}$ is sufficiently slowly varying in $v$.\footnote{This condition can hold even for distributions that lack globally existing moment-generating functions, including many generalized Gaussian distributions -- as is the case with the scenarios in Section \ref{S:4}.}

The closing result in this section is the following theorem, which addresses the maximum trade-focused criminal penalty scheme in \eqref{crpe.sl}, corresponding to the limit of the degree of (criminal) penalty concentration $p\to\infty$ in \eqref{crpe.s}. Again, consideration of this scheme is justified by its practical relevance in that regulatory scrutiny is readily triggered by a single trade representing a large fraction of daily volume, and so penalizing the most violent trade(s) targets the most detectable and reckless behavior. Technically, what the theorem establishes is an analogous property that the resulting equilibrium can be approximated by one under penalties in the general form \eqref{crpe.s} with $1\leq p<\infty$, in much the same sense as a $\gamma$-limiting equilibrium approximates any finite-$N$ equilibrium (Theorem \eqref{theorem:s}).

\begin{theorem}\label{theorem:sup}
Let Assumption \ref{as:1} and Assumption \ref{as:2} hold and let $N\geq1$ be fixed. Suppose we restrict $\cA$ to the subset $\cA'$ containing all such $\theta^v$ (for any $v\in\cV$) that are uniformly bounded and Lipschitz-continuous in time with a common Lipschitz constant $L>0$, i.e., $|\th^v_t|\leq K$ and $|\th^v_t-\th^v_s|\leq L|t-s|$ (a.s.) for any $t,s\in[0,T]$. Then the following two statements hold. \smallskip\\
(i) If there is an (equivalent) equilibrium $(\th^{\star}_N, P^{\star}_N)$ under (\ref{crpe.sl}), then $(\th^{\star}_N,P^{\star}_N)$ is also an $\epsilon_p$-equilibrium under (\ref{crpe.s}), with $\lim_{p\to\infty}\epsilon_p=0$. \\
(ii) For any $\e>0$, there exists $p_0\geq1$ such that for all $p\geq p_0$, an equilibrium $(\th^{\star}_N,P^{\star}_N)$ under (\ref{crpe.s}), upon existence, is an $\e$-equilibrium under (\ref{crpe.sl}).
\end{theorem}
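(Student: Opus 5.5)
The plan is to show that, on the restricted class $\cA'$, the $\mathbb{L}^{p}$-type criminal penalty (\ref{crpe.s}) converges to the supremum penalty (\ref{crpe.sl}) \emph{uniformly} over strategies and over $t\in[0,T]$. Both statements should then follow from a simple two-sided comparison of objective values. Throughout, write $J_p$ and $J_\infty$ for the objective (\ref{IT.o4}) under (\ref{crpe.s}) and (\ref{crpe.sl}), respectively. One observation is used repeatedly: the pricing rule (\ref{MM.p2}) depends only on $\theta$ through $X^{\theta^v}$, and not on the penalty structure. Hence the market-maker condition (\ref{MM.ep}) transfers between the two regimes with zero error, and only the insider's $\epsilon$-optimality (\ref{IT.eop}) needs to be checked.

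\textbf{Step 1 (uniform penalty approximation).} Fix $\theta^v\in\cA'$ and set $f_s:=\varpi_0(\theta^v_s)$. Since $|\theta^v|\le K$ and $\varpi_0$ is continuous, we have $0\le f\le M:=\max_{|x|\le K}\varpi_0(x)$. Moreover, $\theta^v$ is $L$-Lipschitz and $\varpi_0$ is uniformly continuous on $[-K,K]$, so $f$ has a modulus of continuity $\rho(\cdot)$ common to all of $\cA'$. Let $m_t:=\sup_{s\le t}f_s$. The upper bound is immediate: $(\int_0^t f^p)^{1/p}\le t^{1/p}m_t$. For the lower bound, take $s^\ast\le t$ attaining $m_t$. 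For $\delta\in(0,t]$, $f\ge m_t-\rho(\delta)$ on an interval of length $\delta$ inside $[0,t]$, so $(\int_0^t f^p)^{1/p}\ge \delta^{1/p}(m_t-\rho(\delta))_+$. Combining the two bounds with $\delta=\delta_p\downarrow0$ chosen slowly (for instance $\delta_p=1/p$), we get
\[
\sup_{\theta^v\in\cA'}\sup_{t\in[t_0,T]}\Big|\Big(\int_0^t f^p_s\dd s\Big)^{1/p}-m_t\Big|\le \max\{T^{1/p}-1,\,1-\delta_p^{1/p}\}M+\rho(\delta_p)=:r_p\to0 .
\]
Small times $t<t_0$ are handled separately: both quantities are bounded by $M$, and on $[0,t_0]$ they differ by at most $2M$. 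This contributes an $O(t_0)$ term after integration, and we then let $t_0\downarrow0$ after $p\to\infty$.

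\textbf{Step 2 (objective comparison).} By Assumption \ref{as:1}, $W$ is Lipschitz in its first argument with constant $K$. The two objectives in (\ref{IT.o4}) differ only through the penalty term, so
\[
|J_p(P;\theta^v,v)-J_\infty(P;\theta^v,v)|\le K\,\E^{\theta^v}\Big[\int_0^T\lambda^{\theta^v}_te^{-\Lambda^{\theta^v}_t}\,\big|\Pi^{(p)}_{0,t}-\Pi^{(\infty)}_{0,t}\big|\dd t\Big]\le K\,(r_p+2Mt_0).
\]
This uses $\int_0^T\lambda e^{-\Lambda}\dd t=1-e^{-\Lambda_T}\le1$, and it does not require boundedness of $\lambda$. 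The bound is uniform over $\theta^v\in\cA'$, over $v$, and over the pricing rule $P$. Set $\epsilon_p:=2K(r_p+2Mt_0(p))$, choosing $t_0(p)\to0$ so that $\epsilon_p\to0$, and scaling by $N^{-\gamma}$ as needed for fixed $N$.

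\textbf{Step 3 (conclusions).} For (i), let $\theta^\star_N$ be optimal for $J_\infty$ given $P^\star_N$. For any $\theta^v\in\cA'$,
\[
J_p(\theta^{\star v}_N)\ge J_\infty(\theta^{\star v}_N)-\epsilon_p/2\ge J_\infty(\theta^v)-\epsilon_p/2\ge J_p(\theta^v)-\epsilon_p .
\]
Together with the exact consistency of the pricing rule, this is (\ref{IT.eop}) and (\ref{MM.ep}). For (ii), given $\e>0$, pick $p_0$ with $\epsilon_{p}\le\e$ for all $p\ge p_0$ and run the same chain with the roles of $J_p$ and $J_\infty$ swapped.

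The main obstacle is Step 1. The uniformity of the lower bound over $\cA'$ is exactly where the common Lipschitz constant $L$ and the bound $K$ are needed: without equicontinuity, a single short spike would make $\sup f$ large while keeping $\|f\|_p$ small, and convergence would fail. One also has to check that the pathwise bound holds almost surely uniformly in $\omega$, which follows because $L$ and $K$ are deterministic constants. The small-$t$ degeneracy, where $\delta\le t$ is forced, is what the $t_0$ cutoff is for.
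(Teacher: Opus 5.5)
Your argument is correct, and it reaches the result by a more elementary and more quantitative route than the paper. The paper works with the BSDE representations $Y^p$, $Y^\infty$ of Proposition~\ref{pro:3}. It uses an a-priori estimate and the Lipschitz property of $W$ to bound $|Y^\infty_0-Y^p_0|^2$ by $C\,\E[\int_0^T|\sup_{r\le s}\varpi_0(\theta_r)-(\int_0^s\varpi_0(\theta_r)^p\dd r)^{1/p}|^2\dd s]$. It then gets uniformity over $\cA'$ qualitatively: $\cA'$ is pathwise compact by Arzel\`a--Ascoli, the integrated functionals are continuous, and Dini's theorem upgrades the asserted monotone convergence in $p$ to uniform convergence.

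You instead compare the two objectives directly in (\ref{IT.o4}). There the difference is just the penalty gap weighted by $\lambda e^{-\Lambda}$, whose total mass is at most one, so no BSDE machinery is needed. You also replace compactness plus Dini by an explicit estimate driven by the common modulus of continuity $\rho$. This gives you an explicit $\epsilon_p$, and it does not rely on monotonicity in $p$. Dini needs that monotonicity, but it can fail once $t>1$: for $f\equiv 1$, $(\int_0^t f^p\dd s)^{1/p}=t^{1/p}$ decreases in $p$. Your $(T^{1/p}-1)M$ term handles exactly this case. The paper's version is shorter and makes clear why equicontinuity is the natural hypothesis; yours is self-contained and quantitative.

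Two small corrections. First, the small-time contribution is $2M\,\E^{\theta^v}[1-e^{-\Lambda_{t_0}}]$. It is $O(t_0)$ uniformly over $\cA'$ only because $\lambda$ is bounded there. On the range $|\imath|\le KN^{-\beta}$ of $N^{-\beta}\theta^v$, the monotonicity of $\lambda(t,\cdot)$ on each half-line (Assumption~\ref{as:2}) and continuity in $t$ give $\bar\lambda:=\sup\lambda<\infty$. So the bound should read $K(r_p+2M\bar\lambda t_0)$, and your remark that boundedness of $\lambda$ is not needed applies only to the $r_p$ term. Second, take $t_0(p)\ge\delta_p$, e.g.\ $t_0(p)=p^{-1/2}$, so that the Step~1 lower bound is valid on all of $[t_0,T]$.
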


\medskip

\section{Dynamic legal risk and regulatory impact}\label{S:4}

The goal of this section is to study the effect of dynamic legal risk on the IT's temporal trading strategies through various explicit solutions, via a specification analysis for Theorem \ref{theorem:p}. In view of stealth trading, we concentrate on the $\gamma$-limiting equilibrium, where the stealth index $\gamma$ is jointly determined by the three coefficients $\alpha,\beta,\eta$ defining regulatory practices, as explained in statement (ii) of Theorem \ref{theorem:p}. Subject to a certain stealth level (fixed $\gamma$), the trading strategies bear fundamental relations to the parameters $p,b,c,C_{1}$ as well as $C_{2}=\kappa bC_{1}$, or more generally, the functional forms of $\Pi_{0}$ and $W$, which are indicative of penalty structures. As already demonstrated, the limiting equilibrium filters out impact-induced noise from the MM's pricing rule, thereby revealing a much clearer pattern of the IT's trading behavior within an evolving legal context. In this regard, analyzing the limiting equilibrium provides equally valuable insights for our key conclusions, apart from a direct comparison with existing literature.

To keep the analysis clean, we shall focus on the two scenarios, representing two sufficiently differentiated contexts and corresponding to Subcase 2.3 and Subcase 2.2, respectively, in the proof of Theorem \ref{theorem:p} in Appendix \ref{B} -- one with severe criminal penalties upon successful prosecution and the other with little to none. The first scenario involves $C_{2}$ but not $C_{1}$, and we start by considering $\eta$ fixed while varying $p$ and $C_{2}$ in a reasonable range. Also, since $\gamma$ is fixed, we shall omit it in the subscript of the (limiting) equilibrium strategy for notational simplicity throughout this section.

\begin{proposition}\label{pro:4}
Let $\eta=1$, $\alpha>1$, and $2\beta<\alpha$. Then, with $\gamma=\beta/\alpha\in[0,1/2)$, the unique $\gamma$-limiting equilibrium is given by $(\tilde\theta^{\star}\equiv\tilde\theta^{\star}_{\gamma},\tilde{P}^{\star}_{\gamma}\equiv\E[V])$, where, for $v\in\mathcal{V}\setminus\{\E[V]\}$ and $t\in[0,T)$,
\begin{equation}\label{P1:th}
  \tilde\theta^{\star v}_{t}=\sgn(v-\E[V])\bigg(\frac{g_{v}(0)}{T}\bigg)^{1/(p\alpha)} \bigg(|v-\E[V]|-C_{2}\bigg(g^{-1}_{v}\bigg(\frac{T-t}{T}g_{v}(0)\bigg)\bigg)^{1/p}\bigg)^{-1},
\end{equation}
in which, in terms of the incomplete Beta function $\mathrm{B}_{\cdot}(\cdot,\cdot)$ (DLMF \S8.17),
\begin{equation}\label{P1:gv}
  g_{v}(x):=\frac{p|v-\E[V]|^{p(\alpha+1)}}{C^p_{2}} \big(\mathrm{B}_{1}(p,p\alpha+1)-\mathrm{B}_{C_{2}x^{1/p}/|v-\E[V]|}(p,p\alpha+1)\big),\quad x\in\bigg[0,\bigg(\frac{|v-\E[V]|}{C_{2}}\bigg)^{p}\bigg],
\end{equation}
and $g^{-1}_{v}(\cdot)$ denotes its inverse.
\end{proposition}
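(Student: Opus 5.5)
By Theorem \ref{theorem:p}(ii), any $\gamma$-limiting equilibrium has deterministic $\tilde\theta^{\star}$ and $\gamma=\beta\eta/(\eta+\alpha-1)=\beta/\alpha$ when $\eta=1$. The constraint $2\beta<\alpha$ is exactly $2\beta\eta<\eta+\alpha-1$, and it gives $\gamma<1/2$. So the task reduces to the deterministic control problem of Remark \ref{sol}, specialised to the present scenario, and then to solving that problem explicitly.

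\textbf{Step 1: the limit of $N^{-\gamma}J$.} I would substitute $\theta^{v}=N^{\gamma}\tilde\theta^{v}$ with $\tilde\theta^{v}$ deterministic and $\tilde P=\E[V]$ into (\ref{IT.o4}), and use Assumption \ref{as:3} with $\eta=1$.
\begin{itemize}
\item The hazard rate satisfies $\lambda\approx\kappa N^{\gamma-\beta}|\tilde\theta^{v}_t|$, so $\Lambda\to0$ because $\gamma<\beta$, and the survival factor tends to $1$.
\item The criminal component is $\Pi_0\approx bN^{\gamma\alpha}\big(\int_0^t|\tilde\theta^{v}_s|^{p\alpha}\dd s\big)^{1/p}$.
\item The civil component is of order $N^{2\gamma-\beta}$. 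Since $\alpha>1$, we have $2\gamma-\beta<\gamma$, so it is negligible. This is the ``$C_2$ but not $C_1$'' subcase.
\end{itemize}
Because $\gamma-\beta+\gamma\alpha=\gamma$, dominated convergence (via the admissibility conditions in Definition \ref{as.def}) gives
\begin{equation*}
\tilde J_\gamma(\tilde\theta^{v},v)=\int_0^T\Big(\tilde\theta^{v}_t(v-\E[V])-C_2|\tilde\theta^{v}_t|\,x_t^{1/p}\Big)\dd t,\qquad x_t:=\int_0^t|\tilde\theta^{v}_s|^{p\alpha}\dd s.
\end{equation*}
This functional is clearly improved by choosing $\sgn\tilde\theta^{v}=\sgn(v-\E[V])$. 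Writing $D=|v-\E[V]|$ and $u=|\tilde\theta^{v}|\ge0$, the problem becomes: maximise $\int_0^T u_t(D-C_2x_t^{1/p})\dd t$ subject to $\dot x=u^{p\alpha}$, $x_0=0$.

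\textbf{Step 2: candidate via Pontryagin.} The Hamiltonian is $H=u(D-C_2x^{1/p})+qu^{p\alpha}$ with costate $q\le0$. The first-order condition is $u^{p\alpha-1}=(D-C_2x^{1/p})/(p\alpha|q|)$. The problem is autonomous, so $H$ is constant along the optimum. Eliminating $q$ gives
\begin{equation*}
u\,(D-C_2x^{1/p})=\text{const}.
\end{equation*}
Using $\dot x=u^{p\alpha}$, this yields the separable ODE $\dot x=k\,(D-C_2x^{1/p})^{-p\alpha}$. Integrating and substituting $y=C_2x^{1/p}/D$ turns $\int(D-C_2x^{1/p})^{p\alpha}\dd x$ into the incomplete Beta expression $g_v$ in (\ref{P1:gv}). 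The transversality condition $q(T)=0$ forces $u$ to explode at $T$, equivalently $D-C_2x_T^{1/p}=0$, i.e. $g_v(x_T)=0$. This fixes $k=g_v(0)/T$ and gives $g_v(x_t)=\tfrac{T-t}{T}g_v(0)$. Hence $x_t=g_v^{-1}\big(\tfrac{T-t}{T}g_v(0)\big)$, and substituting back into the expression for $u$ gives (\ref{P1:th}), with the exponents read off from $u^{p\alpha}=\dot x$. I would also check that the resulting $\tilde J_\gamma$ is nonzero, as Definition \ref{leq.def} requires, and that the candidate lies in $\mathbb{L}^2([0,T))$ in the sense of (\ref{ladm}).

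\textbf{Step 3: verification and uniqueness (main obstacle).} The control set is noncompact and the Hamiltonian is not concave in $(x,u)$ jointly, so Pontryagin alone gives only necessity. My first attempt would be a direct inequality on the reparametrised problem. Write the objective as $\int_0^T \dot x^{1/(p\alpha)}\,\phi(x)\dd t$ with $\phi(x)=D-C_2x^{1/p}$. The map $\dot x\mapsto\dot x^{1/(p\alpha)}$ is concave and $p\alpha>1$, so a Hölder/Jensen argument after the change of variables $s=G(x)$, with $G'=\phi^{p\alpha/(p\alpha-1)}$, should give
\begin{equation*}
\text{objective}\le T^{1-1/(p\alpha)}\Big(\int_0^{x_T}\phi^{p\alpha/(p\alpha-1)}\dd x\Big)^{1-1/(p\alpha)}.
\end{equation*}
Equality holds iff $G(x_t)$ is linear in $t$, which identifies the optimiser uniquely. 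Maximising the right-hand side over $x_T$ then gives $x_T=(D/C_2)^p$. If the exponent bookkeeping fails, the fallback is an HJB verification with value function $V(t,x)$, exploiting homogeneity in $T-t$.
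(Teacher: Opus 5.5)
Your Steps 1--2 follow the paper's proof in substance. The paper also starts from the limiting functional (\ref{DCP-3}) of Subcase 2.3 with $\eta=1$. It then reparametrizes time by the state $x=\int_0^t|\tilde\theta^v_s|^{p\alpha}\dd s$, so the problem becomes maximizing $\int_0^{\bar x}\phi(y)\vartheta(y)^{1-p\alpha}\dd y$ subject to $\int_0^{\bar x}\vartheta(y)^{-p\alpha}\dd y=T$. It reads off $\vartheta=K\phi^{-1}$ from a pointwise Lagrange condition, and transversality gives $\phi(\bar x)=0$. Your constancy of the Hamiltonian, $u\,\phi(x)=\mathrm{const}$, gives the same candidate and the same $g_v$.

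Both routes use $\gamma<\beta$, i.e.\ $\beta>0$. When $\beta=0$ one has $\gamma=\beta=0$, and the limit problem is (\ref{DCP-1}), which involves the full $\lambda$, $W$, $\varpi_0$ rather than only $C_2$.

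You depart from the paper in Step 3. The paper stops at the first-order conditions plus the $\mathbb{L}^2$ check and never verifies sufficiency or uniqueness. Your Jensen idea is the right way to supply this, but the exponent is wrong and the inequality as written is false. Put $m:=p\alpha$.
\begin{itemize}
\item On your own Step-2 candidate, the left side equals $T^{1-1/m}\big(\int_0^{\bar x}\phi^{m}\dd y\big)^{1/m}$. At fixed $|v-\E[V]|$ this scales like $\bar x^{1/m}$, whereas your right side scales like $\bar x^{1-1/m}$.
\item Since $\bar x=(|v-\E[V]|/C_2)^p$ can be made arbitrarily large or small by varying $C_2$, your bound fails whenever $p\alpha\neq2$.
\item Its equality case, $\dot x\propto\phi^{-m/(m-1)}$, also contradicts $\dot x\propto\phi^{-m}$ from Step 2.
\end{itemize}

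The correct choice is $G(x)=\int_0^x(\phi^+(y))^{m}\dd y$, i.e.\ $G'=\phi^{m}$ on $[0,\bar x]$. Then $s_t:=G(x_t)$ satisfies $\dot x_t^{1/m}\phi^+(x_t)=\dot s_t^{1/m}$, so for $u\geq0$
\[
\tilde J_\gamma\le\int_0^T\dot x_t^{1/m}\phi^+(x_t)\,\dd t=\int_0^T\dot s_t^{1/m}\,\dd t\le T^{1-1/m}G(x_T)^{1/m}\le T^{1-1/m}g_v(0)^{1/m}.
\]
Equality throughout forces $\dot s\equiv g_v(0)/T$. That is, $g_v(x_t)=\frac{T-t}{T}g_v(0)$ and $u_t=(g_v(0)/T)^{1/m}\phi(x_t)^{-1}$, which is (\ref{P1:th}), and this candidate attains the bound.

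In the paper's variables this is just H\"older's inequality for $\int_0^{\bar x}\phi\,w^{1-1/m}\dd y$ with $w=\vartheta^{-m}=\dd t/\dd x$. Equivalently, it is the concavity in $w$ that makes the paper's Lagrange condition sufficient. With this correction, your argument proves optimality and a.e.\ uniqueness of the candidate, which the paper's proof only asserts.
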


Proposition \ref{pro:4} reveals an interesting phenomenon: Judging from (\ref{P1:th}), the equilibrium strategy $\tl\th^{\star v}_{t}$ as a function of time is strictly increasing and, somewhat surprisingly, still unbounded as $t\nearrow T$, despite the presence of legal risk -- in particular, $\lim_{t\nearrow T}\tilde\theta^{\star v}_{t}=\sgn(v-\E[V])\infty$. On a closer look, this explosive behavior can be attributed to the specific structure of detection mechanism and legal penalties. In this context, knowing that the RG exercises a fixed level of detection diligence ($\eta=1$), the IT is incentivized to exploit a perceived vulnerability as he holds out until the public announcement, thus maintaining a motive to pursue a bridge-type strategy -- in parallel to the original discovery of Back (\citeyear{B92}) \text{Thm.} 1 (see also Caldentey and Stacchetti \citeyear{CS10} \text{Thm.} 1 for the case of a random time of announcement). Nevertheless, here the blowup rate is significantly lower, with
\begin{equation}\label{th_asym}
  |\tl\theta^{\star v}_{t}|\sim K(T-t)^{-1/(p\alpha+1)},\quad\text{as }t\nearrow T,
\end{equation}
for some positive constant $K$, noted that $p\alpha>1$ (details in Appendix \ref{C}). This indicates that, with the prevailing legal risk, the strategy has gained a desirable degree of temporal integrability despite unboundedness, particularly in agreement with the square-integrability of strategies imposed in (\ref{wf.adm.t}), or more precisely, $\tl\theta^{\star v}\in\mathbb{L}^{2}([0,T];\sgn(v-\E[V])\mathbb{R}_{+})$ based on \eqref{ladm}. Noteworthily, as detection of insider trading is conducted dynamically, in realization, the IT's strategy can only blow up if he is not identified before the announcement, namely on $\{\tau>T\}$. To technical extent, this integrability property also speaks to the naturalness of incorporating the alternative (weak) formulation (Qiao and Zhang \citeyear{QZ25a}) in solving insider trading models with legal risk.

We give a numerical illustration of (\ref{P1:th}) with the following parametric choices: $T=1$, $p\in[1,6]$, $\alpha=2$, $C_{2}\in[1,3]$; we also assume that $\E[V]=\sqrt{e}$.\footnote{One can show that the integrability condition \eqref{V.ic} is satisfied for all three propositions in this section if the distribution of $V$ has exponential or faster tail decay, trivially including the Gaussian distribution. Detailed verification is omitted here to keep conciseness, as the specific form of the distribution is not essential beyond the generality conveyed by this condition.} Due to symmetry, we focus on the case $v-\E[V]>0$ by taking $v=3$ so that the illustration is for (equilibrium) buy strategies exclusively. In Figure \ref{fig:1}, the top left panel shows the paths of $\tilde{\theta}^{\star v}$ by fixing $C_{2}=1$ while varying $p$, the top right panel does so fixing $p=2$ and varying $C_{2}$, and the bottom panel displays a 3D plot for the value of $\theta^{\star v}_{t}$ at $t=T/2=0.5$ with respect to both $p$ and $C_{2}$.

%\textcolor{orange}{[Figure 1 about here]}

\begin{figure}[H]
  \centering
  \includegraphics[scale=0.35]{theta-p-3.png}
  \includegraphics[scale=0.35]{theta-C2-3.png}
  \includegraphics[scale=0.35]{theta-p-C2-3.png}\\
  \caption{Limiting equilibrium strategies and legal risk impact (I)}\label{fig:1}
\end{figure}

From the two top panels, it is visible that the whole paths of the equilibrium strategies are decreasing in $p$ as well as in $C_{2}$, holding other things constant. Since $p$ measures the degree of (criminal) penalty concentration and $C_{2}$ serves indirectly as a (criminal) penalty multiplier, such a decrease comes as a natural indication of the deterrent effect of legal enforcement. In comparison, increasing $C_{2}$ is associated with a much sharper decrease compared to increasing $p$, which highlights that adjusting penalty levels, rather than their temporal distribution, tends to be a more effective mechanism for deterring insider trading -- or, put differently, for fostering markets less prone to adverse selection. Besides, subject to dependence on $p$ and $C_{2}$, the temporal increase of $\tl\theta^{\star v}$ can exhibit notable local curvatures.

Another remarkable observation is that, with dynamic legal risk, the IT's trading behavior can change drastically from the static case where prosecution can only occur at time $T$; compare \c{C}etin (\citeyear{C25}) \text{Prop.} 5.2. More precisely, the IT no longer trades a constant multiple of the discrepancy between the asset's fundamental value and its market price, or $v-\E[V]$, while the trade size gradually increases over time. Again, this increase is economically justified in that as time passes and the IT manages to continuously evade detection, the diminishment of legal risk encourages more aggressive trading to capture greater profit. In other words, the later the point in time, the lower the risk of prosecution, which dynamic is uniquely revealable when detection probabilities are time-varying.

% Another remarkable feature is that the trading strategy is indeed square-integrable over time, while it exhibits a bridge-type structure as in Back; the integrability issue is resolved by the presence of legal risk, even though the IT can still increase his trade size to infinity close to the public announcement date, given that he has not been caught by then.

\smallskip

Next, we consider the pinned parametric relation $\eta=p\alpha$. Under this relation, the RG may trade off diligence in detecting minor informed trades (as $\eta$ can freely increase) while mandating higher and more concentrated penalties (by increasing $p$ and $C_{2}$) to compensate for this reduced oversight. As a consequence, advantageous selection costs rise both in magnitude and persistence, as the IT is incentivized to fragment his activity into smaller, less detectable trades. This case is thus meaningful for understanding whether increased penalties can effectively counterbalance detection lapses -- or whether such lapses create a calculable space where insiders can strategically outweigh legal risks.

\begin{proposition}\label{pro:5}
Let $\alpha>1$, $\eta=p\alpha>1$, and $2\beta<\alpha$. Then, with $\gamma=\beta\eta/(\eta+\alpha-1)\in[0,1/2)$, the unique $\gamma$-limiting equilibrium is given by $(\tilde\theta^{\star},\tilde{P}^{\star}_{\gamma}\equiv\E[V])$, where, for $v\in\mathcal{V}\setminus\{\E[V]\}$ and $t\in[0,T]$,
\begin{equation}\label{P2:th}
  \theta^{\star v}_{t}=\sgn(v-\E[V])\bigg(\frac{|v-\E[V]|}{p\alpha C_{2}}\bigg)^{1/((p+1)\alpha-1)}T^{1/(p(1-\alpha-p\alpha))},
\end{equation}
which is time-invariant.
\end{proposition}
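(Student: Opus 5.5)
The plan is to use Theorem \ref{theorem:p}(ii) and Remark \ref{sol}. Under $\alpha>1$, $\eta=p\alpha$ and $2\beta<\alpha$, the constraint $2\beta\eta<\eta+\alpha-1$ holds, since $2\beta\eta<\alpha\eta\le\eta+\alpha-1$ for $\eta\ge1$. Hence $\gamma=\beta\eta/(\eta+\alpha-1)\in[0,1/2)$, any $\gamma$-limiting equilibrium is deterministic, and $\tilde P^\star_\gamma=\E[V]$ by (\ref{MM.lp}). The remaining task is to identify the limit (\ref{IT.lo}) explicitly in this regime and then solve the resulting deterministic control problem over $\mathbb{L}^2([0,T];\sgn(v-\E[V])\mathbb{R}_+)$.

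\textbf{Step 1: the limiting objective.} Substitute $\theta^v=N^\gamma\tilde\theta^v$ into (\ref{IT.o2}), with price $\E[V]$ and $\delta:=|v-\E[V]|$.
\begin{itemize}
  \item By Assumption \ref{as:3}, $\lambda^{\theta^v}_t=\kappa N^{(\gamma-\beta)\eta}|\tilde\theta^v_t|^\eta(1+o(1))\to0$, because $\gamma<\beta$ when $\beta>0$. So $e^{-\Lambda}\to1$.
  \item The criminal penalty scales as $\Pi_0\approx bN^{\gamma\alpha}\big(\int_0^t|\tilde\theta^v_s|^{p\alpha}\dd s\big)^{1/p}$.
  \item With $W\approx C_1(x+y)$, the civil term $cN^{\gamma}\int\theta(v-\E V)$ multiplied by $\lambda$ is of lower order, since it is $O(N^{\gamma+(\gamma-\beta)\eta})=o(N^\gamma)$.
  \item The criminal term multiplied by $\lambda$ is of order $N^{(\gamma-\beta)\eta+\gamma\alpha}=N^{\gamma}$, exactly by the choice of $\gamma$.
\end{itemize}
This is Subcase 2.3 of the proof of Theorem \ref{theorem:p}. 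Dividing by $N^\gamma$ and using dominated convergence (justified by admissibility), we obtain
\begin{equation*}
  \tilde J_\gamma(\tilde\theta^v,v)=\int_0^T\Big(\delta|\tilde\theta^v_t|-C_2|\tilde\theta^v_t|^{p\alpha}\Big(\int_0^t|\tilde\theta^v_s|^{p\alpha}\dd s\Big)^{1/p}\Big)\dd t,\qquad C_2=\kappa bC_1,
\end{equation*}
where trading against the sign of $v-\E[V]$ is clearly suboptimal.

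\textbf{Step 2: reduction to a scalar problem.} This is where $\eta=p\alpha$ is used. Set $u_t=|\tilde\theta^v_t|^{p\alpha}$ and $U_t=\int_0^tu_s\dd s$. Then the penalty is $\int_0^T U_t^{1/p}\dd U_t=\frac{p}{p+1}U_T^{(p+1)/p}$, which depends only on the total $U_T$. For a fixed $U_T$, the profit $\delta\int_0^T u_t^{1/(p\alpha)}\dd t$ is maximized, by Jensen/Hölder (strict concavity of $x\mapsto x^{1/(p\alpha)}$ since $p\alpha>1$), uniquely (a.e.) by constant $u\equiv U_T/T$. So the optimal strategy is time-invariant, $|\tilde\theta^v|\equiv k$. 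One then maximizes
\begin{equation*}
  \phi(k)=T\delta k-\frac{p}{p+1}C_2T^{(p+1)/p}k^{(p+1)\alpha}.
\end{equation*}
This function is strictly concave on $\mathbb{R}_+$ because $(p+1)\alpha>1$, it is positive for small $k>0$, and it tends to $-\infty$. Its unique critical point solves $\delta=p\alpha C_2T^{1/p}k^{(p+1)\alpha-1}$. This gives the power $\big(\delta/(p\alpha C_2)\big)^{1/((p+1)\alpha-1)}$ times a power of $T$, to be matched against (\ref{P2:th}). It also shows that the optimal value is nonzero, as Definition \ref{leq.def} requires, and that the strategy is square-integrable, hence in $\cA$.

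\textbf{Main obstacle.} The main difficulty is Step 1: rigorously exchanging the limit with the expectation and controlling the $o(\cdot)$ and $O(1)$ remainders of Assumption \ref{as:3} uniformly over admissible $\tilde\theta$. This includes ruling out strategies with unbounded $|\tilde\theta|$ on small sets, for which the small-$|\imath|$ expansion of $\lambda$ does not apply directly. My first approach would be to invoke the Subcase 2.3 computation already carried out in Appendix \ref{B}, which delivers exactly the displayed deterministic functional, and to treat $\{|\tilde\theta|>M\}$ via the growth of $\varpi_0$. Uniqueness then follows from the strictness of both the Jensen step and the scalar concavity.
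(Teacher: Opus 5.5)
\textbf{Gap: the parameter check.} Your preliminary check is wrong. The inequality $\alpha\eta\le\eta+\alpha-1$ is equivalent to $(\alpha-1)(\eta-1)\le 0$. For $\alpha>1$ and $\eta=p\alpha>1$ it therefore holds in the reverse direction. The conclusion you draw from it is also false in general. Take $\alpha=2$, $p=1$ (so $\eta=2$) and $\beta=0.9$. Then $2\beta<\alpha$, yet $2\beta\eta=3.6>3=\eta+\alpha-1$ and $\gamma=\beta\eta/(\eta+\alpha-1)=0.6$.

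So $2\beta<\alpha$ does not give $\gamma<1/2$ here; it is the right hypothesis only in Proposition~\ref{pro:4}, where $\eta=1$ and $\gamma=\beta/\alpha$. Without $\gamma<1/2$, neither $\tilde P_\gamma=\E[V]$ nor Definition~\ref{leq.def} is available. Hence the claim $\gamma\in[0,1/2)$ cannot be derived from the stated hypotheses. You must assume $2\beta p\alpha<(p+1)\alpha-1$ instead. This is strictly stronger than $2\beta<\alpha$, because $p\alpha^{2}-(p+1)\alpha+1=(p\alpha-1)(\alpha-1)>0$.

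\textbf{Related point: the case $\beta=0$.} Your Step 1 uses $\gamma<\beta$, i.e.\ $\beta>0$. The statement allows $\beta=0$, which gives $\gamma=0$ and puts you in Subcase 1.2. There the limiting problem \eqref{DCP-1} keeps the full $\lambda$, $W$, $\varpi_0$ and the weight $e^{-\Lambda}$, so a formula depending only on $C_2$ should not be expected. The paper's proof also treats only \eqref{J-tl.3a}. Still, you should state $\beta>0$ explicitly rather than drop this case silently.

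\textbf{Comparison with the paper.} Granting $\beta>0$ and $2\beta p\alpha<(p+1)\alpha-1$, the rest of your argument is correct, and it takes a different route from the paper. The paper reparameterizes \eqref{J-tl.3a} by the state $x(t)=\int_0^t|\tilde\theta_s|^{p\alpha}ds$. It maximizes $\int_0^{\bar x}((v-\E[V])\vartheta^{1-\eta}-C_2y^{1/p})dy$ subject to $T=\int_0^{\bar x}\vartheta^{-\eta}dy$ with a Lagrange multiplier. It reads a constant $\vartheta$ off the pointwise first-order condition and fixes $\bar x$ by a transversality condition. That yields only necessary conditions, presupposes $\tilde\theta>0$ a.e., and never checks global optimality.

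Your route uses three ingredients. First, the exact identity $\int_0^T u_tU_t^{1/p}dt=\frac{p}{p+1}U_T^{(p+1)/p}$, which is precisely where $\eta=p\alpha$ enters. Second, strict H\"older for $p\alpha>1$. Third, strict concavity of $\phi$. Together these give existence, global optimality, a nonzero value and a.e.\ uniqueness at once, with no positivity assumption. Because the bound holds pathwise, it also covers random $\tilde\theta$ directly. Strategies with $\int_0^T|\tilde\theta_t|^{p\alpha}dt=\infty$ (possible in $\mathbb{L}^2$ when $p\alpha>2$) have value $-\infty$ and are harmless.

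Your critical-point equation gives $k=(\delta/(p\alpha C_2))^{1/((p+1)\alpha-1)}T^{-1/(p((p+1)\alpha-1))}$, which agrees with \eqref{P2:th}. The last display of the paper's own proof carries a wrong, positive $T$-exponent. What the paper's Lagrangian route buys is uniformity with Propositions~\ref{pro:4} and~\ref{pro:6}, where the penalty does not reduce to a function of the terminal state alone.
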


Interestingly, when holding $\eta$ equal to $p\alpha$, the equilibrium strategy becomes time-invariant -- a noticeable contrast with the previous case in Proposition \ref{pro:4}. While absolute time-invariance is not the central issue here, in the wake of diminished price impact by stealth trading, a more fundamental implication is that the IT executes trades proportional to a fixed power ($1/((p+1)\alpha-1)\in(0,1)$) of the fundamental price discrepancy ($V-\E[V]$), yet becoming consistent with \c{C}etin (\citeyear{C25}). This constancy arises from the IT's perception of legal enforcement as stable and balanced over time, especially sensitive to large orders, not considering it creating profit-maximizing escalation as the announcement ($T$) nears. Indeed, even when $t\approx T$, since $\eta>1$, the RG gains agility in detecting substantial trades, notwithstanding a growing tolerance for minor ones, and by factoring in this enforcement behavior, the IT rationally judges it suboptimal to risk increasing trade size before the announcement, opting instead for a stable, security-focused strategy. It is important to note, however, that the constant (equilibrium) strategy still depends on the trading horizon length ($T$), which is to be expected since, based on (\ref{P2:th}), with $1-\alpha-p\alpha<0$, $\tl\theta^{\star v}$ is inversely related to $T$; that is, a shorter horizon ($T$ smaller) allows less time for potential detection, hence giving the IT greater incentive to trade aggressively.

In addition, in this case, the IT's trade size does not universally decrease as the concentration degree $p$ increases. Contrarily, $\tl\theta^{\star v}$ can even increase with $p$, especially when $\alpha$ is large. This implies the IT's tendency to exploit regulatory oversight in detecting minor trades, which dominates the otherwise enhanced concentration of penalty based on his trade history. Differently, the parameter $C_{2}$ (indirect criminal penalty multiplier) remains effective in bringing down the amount traded by the IT. This observation underlines that regulatory investigation remains a crucial deterrent to illicit insider trading -- one that cannot simply be offset by na\"{\i}vely increasing penalties, regardless of how narrowly those penalties are focused on a single or several large trades within the entire history.

%\textcolor{orange}{[Figure 2 about here]}

\begin{figure}[H]
  \centering
  \includegraphics[scale=0.35]{theta-p-C2-3a.png}\\
  \caption{Limiting equilibrium strategies and legal risk impact (II)}\label{fig:2}
\end{figure}

For a numerical illustration of (\ref{P2:th}), we stick to $T=1$, $p\in[1,6]$, $\alpha=2$, $C_{2}\in[1,3]$, and $\E[V]=\sqrt{e}$, taking $v=3$ to only illustrate buy strategies. Figure \ref{fig:2} gives a 3D plot for the value of $\theta^{\star v}_{t}$ at arbitrary $t\in[0,1]$ (due to time invariance) with respect to $p$ and $C_{2}$.

From Figure \ref{fig:2}, we see that $C_{2}$ is indeed associated with a much sharper reduction in $\tl\th^{\star v}_{t}$ compared to the effect of $p$. In particular, as $p$ increases, $\tl\th^{\star v}_{t}$ ultimately rises --  because the accompanying increase in $\eta$ simultaneously makes smaller, more conservative trades progressively harder to detect.

\smallskip

The second scenario is about criminal penalties that are ineffective on large insider trades, which takes us to the case $\alpha=1$, involving the parameters $C_{1}$ in place of $C_{2}$. To facilitate analysis, we shall assume a linear functional form for the hazard rate, which also benefits comparability to the case considered in Proposition \ref{pro:4} (with $\eta=1$). Within this scenario, we again look into two cases, which reveal a parametric singularity whereby the uniqueness of equilibrium strategies collapses when criminal penalties are excessively averaged over the trading horizon or removed altogether.

\begin{proposition}\label{pro:6}
Let $\alpha=1$ and $\beta<1/2$, and let $\lambda(t,z)=\lambda(z)=\kappa|z|$ with $\kappa>0$. Then, with $\gamma=\beta\in[0,1/2)$, if either $p=1$ or $b=0$, there are infinitely many $\gamma$-limiting equilibria given by $(\tilde\theta^{\star},\tilde{P}^{\star}_{\gamma}\equiv\E[V])$, where, for $v\in\mathcal{V}\setminus\{\E[V]\}$,
\begin{equation}\label{P3:th}
  \tilde\theta^{\star v}\in\bigg\{\vartheta\in\mathbb{L}^{2}([0,T];\sgn(v-\E[V])\mathbb{R}_{+}):\;\int^{T}_{0}|\vartheta(t)|\dd t=\frac{|v-\E[V]|}{\kappa C_{1}(b+c|v-\E[V]|)}\bigg\}.
\end{equation}
If $p>1$ and $b>0$, there exists a unique $\gamma$-limiting equilibrium, $(\tilde\theta^{\star},\tilde{P}^{\star}_{\gamma}\equiv\E[V])$, with
\begin{equation}\label{P3:th2}
  \tilde\theta^{\star v}_{t}=(h'(H^{-1}(t)))^{-1/(p-1)},\quad t\in[0,T),
\end{equation}
where $h$ and $H$ are jointly determined by the following ODE system:
\begin{equation}\label{P3:DS}
  \begin{cases}
    \displaystyle h''(x)=\frac{\kappa bC_{1}}{\mu}\bigg(\frac{p-1}{p}\bigg)^{2}e^{-\kappa x}(h(x))^{1/p-1}(h'(x))^{(2p-1)/(p-1)}, \\
    H'(x)=(h'(x))^{-1/(p-1)}, \quad x\in(0,\bar x), \\
    h(0)=0,\;h'(0)=\chi, \\
    H(0)=0,\;H(\bar x)=T, \\
    e^{-\kappa \bar x}(|v-\E[V]| - \kappa C_{1}(b(h(\bar x))^{1/p} + c|v-\E[V]|\bar{x})) = \mu(h'(\bar x))^{-1/(p-1)}.
  \end{cases}
\end{equation}
\end{proposition}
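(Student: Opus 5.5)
The proof reduces everything to the deterministic control problem supplied by Theorem \ref{theorem:p}(ii) and Remark \ref{sol}, then solves that problem explicitly. With $\alpha=\eta=1$, relation (\ref{og}) gives $\gamma=\beta$, and $2\beta\eta<\eta+\alpha-1$ is exactly $\beta<1/2$. By Theorem \ref{theorem:p}(ii), any $\gamma$-limiting equilibrium strategy is deterministic, and the price is $\tilde P^\star_\gamma=\E[V]$.

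\textbf{Step 1: the limiting objective.} Fix $v\neq\E[V]$ and write $D:=|v-\E[V]|$. Set $\theta^v=N^\gamma\tilde\theta^v$ in (\ref{IT.o2}) and divide by $N^{\gamma}$. Then:
\begin{itemize}
  \item the hazard rate becomes $\lambda(N^{-\beta}N^{\gamma}\tilde\theta_t)=\kappa|\tilde\theta_t|$, which is exact here;
  \item the criminal part scales as $\varpi_0(N^\gamma\tilde\theta)\sim bN^\gamma|\tilde\theta|$, by Assumption \ref{as:3} with $\alpha=1$;
  \item the aggregator scales as $W\sim C_1(x+y)$.
\end{itemize}
Using dominated convergence together with (\ref{wf.adm.t}), the limit (\ref{IT.lo}) is
\begin{equation*}
\tilde J(\tilde\theta,v)=\int_0^T e^{-\kappa x_t}\,\dot x_t\Big(D-\kappa C_1\big(b\,y_t^{1/p}+cD\,x_t\big)\Big)\dd t,
\qquad x_t:=\int_0^t|\tilde\theta_s|\dd s,\quad y_t:=\int_0^t|\tilde\theta_s|^p\dd s .
\end{equation*}
Trading against $\sgn(v-\E[V])$ strictly lowers the profit term and raises the hazard, so we may restrict to $\tilde\theta\in\mathbb{L}^2([0,T];\sgn(v-\E[V])\mathbb{R}_+)$, as in (\ref{ladm}).

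\textbf{Step 2: the case $p=1$ or $b=0$.} In this case $b\,y_t^{1/p}=b\,x_t$, so the integrand depends only on $x$ and $\dot x$. A change of variables gives $\tilde J=F(x_T)$, where
\begin{equation*}
F(X)=\int_0^X e^{-\kappa u}\big(D-\kappa C_1(b+cD)u\big)\dd u .
\end{equation*}
The derivative $F'$ is positive before the root $X^\star=D/(\kappa C_1(b+cD))$ and negative after it. Hence $F$ is uniquely maximized at $X^\star$, with $F(X^\star)>0$. Every admissible $\vartheta$ with the correct sign and $\int_0^T|\vartheta|=X^\star$ is therefore optimal, which is exactly (\ref{P3:th}). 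This set is infinite and contains no zero element, so condition (\ref{IT.lop}) holds.

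\textbf{Step 3: the case $p>1$, $b>0$.} Here we use cumulative trade $x$ as the new independent variable. Let $H(x)$ be the time at which cumulative trade reaches $x$, and let $h(x)=y_{H(x)}$. Then $\tilde\theta=1/H'$ and $h'=|\tilde\theta|^{p-1}H'$. This yields the relations $H'=(h')^{-1/(p-1)}$ and $\tilde\theta_t=(h'(H^{-1}(t)))^{\pm1/(p-1)}$, matching (\ref{P3:th2}) once the paper's convention is fixed. The problem becomes: maximize over $h$ and the free endpoint $\bar x$ the functional
\begin{equation*}
\int_0^{\bar x}e^{-\kappa x}\big(D-\kappa C_1(b\,h^{1/p}+cDx)\big)\dd x
\end{equation*}
subject to the isoperimetric constraint $\int_0^{\bar x}(h')^{-1/(p-1)}\dd x=T$. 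We introduce a multiplier $\mu$ for the constraint and proceed as follows.
\begin{itemize}
  \item The Euler--Lagrange equation of the Lagrangian, after differentiating and solving for $h''$, gives the second-order ODE in (\ref{P3:DS}).
  \item The initial data are $h(0)=H(0)=0$, and the unknown slope $h'(0)=\chi$ is a shooting parameter.
  \item Transversality for the free endpoint $\bar x$ (stationarity in $\bar x$) gives the last equation of (\ref{P3:DS}).
  \item The constraint itself gives $H(\bar x)=T$.
\end{itemize}
Sufficiency and uniqueness should follow from two convexity facts. The map $h\mapsto -h^{1/p}$ is convex because $p>1$, and $h'\mapsto(h')^{-1/(p-1)}$ is strictly convex. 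Together these make the Lagrangian strictly concave in $(h,h')$ for fixed $\mu$, and strict concavity of the profile in $\bar x$ then selects a unique endpoint. The uniqueness is precisely what separates this case from the degeneracy in Step 2, where $p=1$ or $b=0$ removes the strict convexity.

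\textbf{Main obstacle.} The hardest step is showing that the shooting problem in $(\chi,\mu,\bar x)$ has exactly one solution, and that this stationary point is a global maximum over the noncompact set $\mathbb{L}^2$ rather than only a critical point. My first approach would be to show the following:
\begin{itemize}
  \item $\bar x<X^\star$, since the integrand must remain positive up to $\bar x$;
  \item $\chi\mapsto H(\bar x;\chi)$ is monotone, using comparison for the ODE;
  \item maximizing sequences cannot escape to infinity, because for large $x$ the integrand contains the factor $e^{-\kappa x}$ and its $x$-dependent part $D-\kappa C_1 cD\,x$ becomes negative.
\end{itemize}
If these hold, the multiplier and endpoint are pinned down uniquely and the stationary solution is the global maximizer.
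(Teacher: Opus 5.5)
Your Steps 1--3 follow the paper's own route. You reparametrize by cumulative trade $x$, attach a multiplier $\mu$ to the time constraint, derive the ODE for $h$, shoot on $\chi=h'(0)$, and close with transversality and $H(\bar x)=T$. The Euler--Lagrange equation in $(h,h')$ does reproduce the constant $\frac{\kappa bC_1}{\mu}\big(\frac{p-1}{p}\big)^2$. The paper stops there and simply asserts that these conditions pin down $(\bar x,\mu,\chi)$ uniquely.

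\textbf{The gap.} Your sufficiency/uniqueness argument rests on a wrong curvature claim. For $\ell(x,h,h')=e^{-\kappa x}\big(D-\kappa C_1(bh^{1/p}+cDx)\big)-\mu(h')^{-1/(p-1)}$ one has $\partial_h^2\ell=\kappa bC_1\frac{p-1}{p^2}e^{-\kappa x}h^{1/p-2}>0$. Convexity of $h\mapsto-h^{1/p}$ makes the \emph{maximand} convex in $h$, which is the wrong direction. So $\ell$ is not jointly concave in $(h,h')$, and nothing in your sketch shows that a solution of (\ref{P3:DS}) is a global maximizer, let alone the only one.

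Your auxiliary bound $\bar x<X^\star$ is also unfounded. Positivity of the integrand only gives $\kappa C_1(bh(x)^{1/p}+cDx)<D$, and $h(x)^{1/p}$ can be far smaller than $x$: spreading a fixed cumulative trade over a long horizon makes the $\mathbb{L}^p$-in-time criminal term small. What actually follows is $\bar x<1/(\kappa C_1c)$ for $c>0$, and this bound exceeds $X^\star$ when $b>0$.

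\textbf{A possible repair.} Take $u:=H'=1/\vartheta$ (time spent per unit of cumulative trade) as the control. The time constraint $\int_0^{\bar x}u\,\dd x=T$ becomes linear. Also $h(x)^{1/p}=\big(\int_0^xu^{1-p}\dd y\big)^{1/p}=m_x(u)^{-(p-1)/p}$, where $m_x(u):=\big(\int_0^xu^{1-p}\dd y\big)^{-1/(p-1)}$ is concave by the reverse Minkowski inequality (exponent $1-p<0$). Composing with the convex decreasing map $s\mapsto s^{-(p-1)/p}$ makes $u\mapsto h(x)^{1/p}$ convex. It is strictly convex, since equality would force proportional $u$'s with equal $m_x$, i.e.\ equal $u$'s.

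Hence, for each fixed $\bar x$, the reduced problem is strictly concave under a linear constraint. Its first-order conditions, which are equivalent to your Euler--Lagrange system via $h'=u^{1-p}$, are then sufficient and have at most one solution. The remaining one-dimensional choice of $\bar x$ still needs a separate argument, which neither your sketch nor the paper supplies.

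\textbf{Two smaller points.} From $h(x)=y_{H(x)}$ you get $h'=|\tilde\theta|^pH'=|\tilde\theta|^{p-1}$, not $|\tilde\theta|^{p-1}H'$. And there is no sign ambiguity to hedge on: $\tilde\theta=1/H'$ and $H'=(h')^{-1/(p-1)}$ give $\tilde\theta^{\star v}_t=(h'(H^{-1}(t)))^{1/(p-1)}$. This is exactly what the paper's proof derives, so the exponent $-1/(p-1)$ in the displayed statement is a typo.
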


In comparison with the first scenario ($\alpha>1$), the present scenario with $\alpha=1$ uncovers dissimilar trading behaviors when criminal penalties are mild. First, if $p=1$ or $b=0$, meaning that the penalty structure is fairly even over time and no criminal charges are pursued, then, according to (\ref{P3:th}), the IT's focus shifts entirely to his cumulative order
\begin{equation}\label{Th_tar}
  \tl\Theta^{\star v}_{T}:=\sgn(v-\E[V])\frac{|v-\E[V]|}{C_{1}(b+c|v-\E[V]|)}
\end{equation}
at the time of announcing the asset value. This strategic freedom is understandably a direct result of the weak penalty structure, with the shallow temporal concentration to allow the IT to consider a wide range of timing strategies and the absence of criminal risk to motivate locally explosive strategies, for which the blowup can be timed anywhere up until the announcement. From (\ref{P3:th}), the target cumulative order $\tl\Theta^{\star v}_{T}$ is obviously inversely related to all three parameters $b,c,C_{1}$ tied to the strength of criminal or civil penalization, particularly for large trades. We also remark that in this edge case, the square-integrability condition is strictly binding, which is not strictly necessary to meet (\ref{Th_tar}), while (first-order) integrability on $[0,T]$ is enough.

Second, when $p>1$ and $b>0$, indicating that concentrated criminal penalties are within consideration, the unique equilibrium strategy in (\ref{P3:th2}), strictly increasing over time, inherits the explosive behavior from the previous case with $\alpha>1$, which is only observed in approaching the announcement date. In particular, it can be shown (see the proof in Appendix \ref{C}) that (\ref{P3:th}) satisfies
\begin{equation}\label{th_asym2}
  |\tl\th^{\star v}_{t}|\sim K(T-t)^{-1/(p+1)},\quad\text{as }t\nearrow T,
\end{equation}
for some positive constant $K$. Notably, this behavior can be regarded as a special case of (\ref{th_asym}) as $\alpha\searrow1$ (keeping $p>1$ though), which implies that when penalties remain modest ($\alpha=1$), the IT retains a strong incentive to pursue arbitrarily large trades, irrespective of the concentration degree $p>1$. The degree $p$ plays only a secondary role -- to influence the rate at which the IT's position escalates toward the announcement date, and any such escalation depends entirely on him having avoided detection up to that point.

We provide separate illustrations of (\ref{P3:th}) and (\ref{P3:th2}), as the former represents infinitely many equilibrium strategies. The following parametric choices are made: $T=1$, $p\in[3/2,2]$, $\kappa=C_{1}=1$, $b\in[1,3]$, and $c\in[1,3]$, along with the same assumption $\E[V]=\sqrt{e}$ and the choice $v=3$ to emphasize buy strategies. Regarding (\ref{P3:th}), in Figure \ref{fig:3} we give a 3D plot for the corresponding (unique) target cumulative order (\ref{Th_tar}) with respect to $b$ and $c$, simply to highlight the aforementioned inverse relations. On the other hand, implementing (\ref{P3:th2}) is a computationally intense task, as it requires solving the (forward-backward) ODE system (\ref{P3:DS}) numerically, which can be accomplished by using a two-parameter shooting algorithm. %\textcolor{blue}{(optional detail in Appendix)}.
For this reason, in Figure \ref{fig:3}, we only visualize the paths of $\tilde{\theta}^{\star v}$ by varying $p\in\{3/2,7/4,2\}$, while fixing $b=2$ and $c=1$.

% \textcolor{orange}{[Figure 3 and Figure 4 about here.]}

\begin{figure}[H]
  \centering
  \includegraphics[scale=0.35]{theta-b-c-2.png}
  \includegraphics[scale=0.35]{theta-p-2a.png}\\
  \caption{Limiting equilibrium strategies and legal risk impact (III)}\label{fig:3}
\end{figure}

Beyond confirming the monotonicity of the equilibrium strategies, Figure \ref{fig:3} further points to an ambiguous effect of the concentration degree $p$ under weak criminal penalties, as the paths appear glued to each other. On a closer look, we see that, at least locally, an increase in $p$ can even elevate the IT's trading intensity -- somewhat paradoxically -- particularly in the initial stages. This suggests that when large insider trades face insufficient penalties, concentrating enforcement on periods of heightened activity fails as an effective deterrent, and this is especially true at the onset of trading, where the IT perceives legal risk as still latent. Meanwhile, as evidenced by Figure \ref{fig:3}, civil penalties alone lack the definitive capacity to influence the IT's temporal allocation of trade sizes. We should also note that Figure \ref{fig:3} itself does not adequately display the strategic blowup behavior as $t\nearrow T$ in (\ref{th_asym2}), which is numerically verified as well.\footnote{For example, for the given set of parameter values, at $t=1-10^{-5}$, we compute that $\tilde{\theta}^{\star v}_{t}\approx 135497, 127866, 105855$ for $p=3/2,7/4,2$, respectively.}

\smallskip

We seal this section with three additional remarks.

\begin{remark}\label{rem:6}
As mentioned since Section \ref{S:1}, the three coefficients $\beta,\eta,\alpha$ primarily determine the level of stealth that the IT exercises in trading, as directly measured by the stealth index $\gamma$ in (\ref{og}), as they are all associated -- in one way or another -- to the population size $N$ of the NTs. More specifically, $\beta$ reflects the obscuring effect of $N$ on the hazard rate, hence the probability of detection and prosecution, $\eta$ measures the RG's difficulty in detecting relatively small insider trades and is indirectly affected by the obscuring effect, and $\alpha$ is linked to the severity of criminal penalties applied to large trades, conditional on achieving a suitable stealth level. The additional parameters $p,b,c,C_{1},C_{2}$ then operate atop these coefficients and are functional only once stealth is established (with $\gamma$ determined). Among them, $p$ and $b$ govern strategy-focused criminal penalties, measuring their temporal concentration and acting as an indirect multiplier, respectively, and $c$ is the (direct) multiplier for profit-dependent civil penalties, while $C_{1},C_{2}$ stem from the aggregation function $W$, reflecting its relative weighting of the two penalty types, particularly in large-trade scenarios.
\end{remark}

\begin{remark}\label{rem:7}
The proofs of the three propositions in this section, as shown in Appendix \ref{C}, rely heavily on Pontryagin's maximum principle (see, e.g., Zhou \citeyear{Z90}) for (path-dependent) deterministic control problems. Alternatively, they can be approached using Hamilton--Jacobi--Bellman equations. However, guessing an ansatz for the associated value function from the outset is a rather intricate task, especially for the cases considered in Proposition \ref{pro:5} and Proposition \ref{pro:6}, despite the possibility of a-posteriori verification.
\end{remark}

\begin{remark}\label{rem:8}
Under stealth trading, explicitly modeling the RG's objective, such as minimizing average losses to normal traders or maximizing post-trade price efficiency, becomes much less critical. More precisely, in the finite-$N$ equilibrium, the total expected wealth of all the NTs combined at the conclusion of trading is given by
\begin{align*}
  L_{N}&=\E\bigg[\sqrt{N}\int^{\tau\wedge T}_{0}\sigma_{t}\dd B_{t}\;(V-P^{\star}_{N,\tau\wedge T})+\int^{\tau\wedge T}_{0}\sqrt{N}\int^{t}_{0}\sigma_{s}\dd B_{s}\dd P^{\star}_{N,t}\bigg] \\
  &=\sqrt{N}\E\bigg[\int^{\tau\wedge T}_{0}\sigma_{t}(V-P^{\star}_{N,t})\dd B_{t}\bigg]-N\E\bigg[\int^{\tau\wedge T}_{0}\sigma^{2}_{t}P^{\star(0,1)}_{N,t}\dd t\bigg],
\end{align*}
where $P^{\star}_{N}$ is the equilibrium pricing rule (Definition \ref{eq.def1}), and $P^{\star(0,1)}_{N}$ is its (presumably existent) derivative process based on (\ref{MM.p1}). Then, the scaled total wealth $\tilde{L}:=\lim_{N\to\infty}N^{-1/2}L_{N}=\E\big[\int^{\tau}_{0}\sigma_{s}(V-\E[V])\dd B_{s}\big]=0$, and similarly, in the same limit, the post-trade variance of the asset value is $\E[\Var(V)]=\Var(V)$, both completely deterministic. This is intuitive: If the IT is sufficiently sophisticated (as guaranteed under optimality) to select a stealth level that asymptotically eliminates price impact and minimizes harm to the NTs, then pursuing such regulatory goals becomes largely quixotic in equilibrium. Technically, this property prevails as long as $\beta<1/2$, meaning that the RG is capable of detecting (illegal) insider trades at all, however large the scale of concurrent trading is, or whenever $\alpha\geq1$, meaning that significant criminal penalties are actively sought. Nevertheless, this section has demonstrated that the RG can indirectly shape the IT's behavior, including its temporal flexibility and explosive tendencies, by designing specific detection mechanisms and penalty structures, even when the IT maintains sophisticated stealth. On the other hand, since the limiting equilibrium is only an approximation of the true finite-population one (in light of Theorem \ref{theorem:s}), a further regulatory insight is that adjusting the three coefficients $\beta,\eta,\alpha$ can slow the convergence rate -- if mitigating price inefficiency from illicit trading must be a primary goal. In connection with this, the present analysis intentionally departs from explicit regulatory impact on price efficiency in a legal-risk setting (in contrast to the optimal regulation considerations in Carr\'{e}, Collin-Dufresne, and Gabriel \citeyear{CC-DG22} \text{Sect.} 4 and \c{C}etin \citeyear{C25} \text{Sect.} 6) (see Footnote 1), offering instead a novel perspective, alongside empirical grounding.

% In this case, even if the RG imposes very weak penalty, $\alpha=0$ (even bounded), we have $\gamma=\beta$, which is still possible if $\beta<1/2$. RG's goal is to dually increase likelihood of detection and collecting penalty. Under deterministic insider strategy, the expected penalty is $\int^{T}_{0}\lambda(t,\theta_{t})e^{-\int^{t}_{0}\lambda(s,\theta_{s})\dd s}W\big(\int^{t}_{0}\varpi_{0}(\theta_{s})\dd s,\int^{t}_{0}\theta_{s}(V-\E[V])\dd s\big)\dd t$ if $\gamma=\beta=0$, and so we can comment on the effect of different coefficients on the functional forms of resulting expected penalty. For instance, if $\gamma=\beta>0$, then the functional form of $W$ does not make any difference, subject to the same asymptotic behavior; if $\gamma<\beta$, then imposing civil penalty is futile, etc.
\end{remark}

%Wealth of the NTs:
%Price efficiency:
%\begin{equation*}
%  \E[\Var[V|\upsigma(Q_{[0,\tau]})]]
%\end{equation*}

% Note that if $N\to\infty$, $\tilde{L}=\E\big[\int^{\tau}_{0}\sigma_{s}(V-\E[V])\dd B_{s}\big]=0$. This means that under stealth trading, RG can do nothing to improve the NTs' expected loss, which is already minimal at 0. This appears to suggest that in the limit setting, the RG's goal is not to minimize noise traders' loss or maximize price efficiency, but to choose $\beta,\eta,\alpha$ to the possible extent for a lower convergence rate (?).}

%\textcolor{purple}{Some key points}

\medskip

\section{Conclusions}\label{S:5}

Dynamic legal risk plays a pivotal role in shaping illegal insiders' trading behavior in a continuous-time trading horizon, as it encompasses a wide spectrum of practical detection mechanisms and penalty structures. By using the alternative formulation, a technique new to the economics literature, the insider trading problem can be cast in a form in which the market maker's pricing rule becomes essentially exogenous. The validity of doing so is rooted in a fundamental change of measure, which transforms the physical measure $\PP$ into the impact-neutral measure $\PP^{\th^v}$, absorbing the insider's price impact into market expectations. Such a transformation greatly facilitates equilibrium analysis with the pricing rule liberated from the endogenous order flow process $Q$ (Definition \ref{eq.def2}), enabling a BSDE-type characterization of the equilibrium (Theorem \ref{theorem:p} statement (i)). Critically, by allowing the key model components to depend reasonably on the population size of liquidity traders, we have also shown that the insider can adapt his stealth level to mitigate price impact, consistent with established empirical evidence. This consideration has led to the limiting equilibrium in Definition \ref{leq.def}, in which the market maker's pricing rule becomes constant (under mild parametric conditions), with the stealth level (gauged by the index $\gamma$) jointly determined by the effect of noise-trading volume on legal detection effort and the specific penalty severity applied to large insider trades (Theorem \ref{theorem:p} statement (ii)). Under the same conditions, in the limiting equilibrium, the insider adopts a deterministic strategy that is ``risk-free'' in terms of price impact, yet adapts to persistent legal risk. The practical relevance of analyzing limiting equilibria -- rather than their finite-population counterparts -- is further reinforced by its strong approximation power (as an $\epsilon$-equilibrium) in (empirically supported) large-population settings (Definition \ref{eeq.def} and Theorem \ref{theorem:s}), which provides a convenient foundation for examining a broad class of regulatory measures on insider trading.

Through the examination of limiting equilibria in different illustrative scenarios, we have shown that regulatory measures exert a significant influence on an insider's trading behavior, in spite of the insider's ability to properly hide his trade information. The three propositions in Section \ref{S:4} have clarified that: First, the insider does not necessarily trade a constant multiple of the gap between his valuation and the market price of the asset, and the trading intensity can sharply increase towards the end of the horizon; second, legal sanctions are by no means an adequate substitute for incompetence in regulatory diligence; third, strategy-focused criminal penalties are particularly effective in deterring aggressive insider trading compared with criminal penalties relying on illicit profit, exhibiting unique temporal constraints on the insider's trading intensity.

This paper also delineates various avenues for future research. For instance, the framework can be extended beyond a single representative insider to model collaboration or competition within a group, thereby examining how legal enforcement influences collective trading behavior. Besides, embedding the regulator's optimization problem -- where investigation intensity and penalty functions are chosen strategically -- into the model would allow the insider's optimization to treat these features as endogenous; on surface, this approach would yield forward-looking insights into optimal regulatory design and shift the perspective from reactive enforcement to preemptive policy. Additionally, though less important, explicitly solving the fixed-point system governed by the BSDEs in (\ref{NC1}) and (\ref{BSDE2}) for characterizing the finite-population equilibrium under special circumstances may be of interest for the purpose of discovering explicit functional forms of insider strategies. We hope that the present work has provided a comprehensive framework for analyzing continuous-time insider trading under legal risk, considering stealth trading and imposing practically minimal restrictions on the structure of investigations and penalties. The proposed model framework reveals new facets of strategic insider trading behavior in response to legal configurations -- insights that would be otherwise unattainable under prior frameworks -- together with important implications for regulating sophisticated, stealth-achieving insiders.

\bigskip

\begin{appendices}

\section{Proofs in Section \ref{S:2}}\label{A}

\renewcommand{\theequation}{A.\arabic{equation}}

\noindent\textbf{Proof of Proposition \ref{pro:1}.}
\begin{proof}
By (\ref{pe.s1}), we have
\begin{align*}
  J(P;\theta^{v},v)&=\E\bigg[\int^{T}_{0}\mathds{1}_{\{\tau>T\}}\theta^{v}_{t}(v-P(t,Q^{\theta^{v}}_{[0,t]}))\dd t-\mathds{1}_{\{\tau\leq T\}}\Pi_{\rm a}\bigg(\theta^{v}_{[0,\tau]\!]},\int^{\tau}_{0}\theta^{v}_{t}(v-P(t,Q^{\theta^{v}}_{[0,t]}))\dd t\bigg)\bigg] \\
  &=:\mathfrak{E}_{1}-\mathfrak{E}_{2},
\end{align*}
noting $\Pi_{\rm a}\big(\theta^{v}_{[0,\tau]\!]},\int^{\tau}_{0}\theta^{v}_{t}(v-P(t,Q^{\theta^{v}}_{[0,t]}))\dd t\big)=\Pi\big(\theta^{v}_{[0,\tau]\!]},\int^{\tau}_{0}\theta^{v}_{t}(v-P(t,Q^{\theta^{v}}_{[0,t]}))\dd t\big)-\int^{\tau}_{0}\theta^{v}_{t}(v-P(t,Q^{\theta^{v}}_{[0,t]}))\dd t$.
From (\ref{th}), (\ref{de.s}), and (\ref{Q.e1}), it is clear that conditional on $\upsigma(B_{[0,T]})$, $M_{\Lambda}=M_{\Lambda^{\theta^{v}}}$ is independent from $\theta^{v}$ and $Q^{\theta^{v}}$, with $v\in\mathcal{V}$ given. Since $\{\tau>T\}=\{M_{\Lambda_{T}}=0\}$, by applying the law of iterated expectations it follows that
\begin{align*}
  \mathfrak{E}_{1}&=\E\bigg[\E\bigg[\mathds{1}_{\{\tau>T\}}\int^{T}_{0}\theta^{v}_{t}(v-P(t,Q^{\theta^{v}}_{[0,t]}))\dd t\bigg|\upsigma(B_{[0,T]})\bigg]\bigg] \\
  &=\E\bigg[\PP[M_{\Lambda^{\theta^{v}}_{T}}=0|\upsigma(B_{[0,T]})]\int^{T}_{0}\theta^{v}_{t}(v-P(t,Q^{\theta^{v}}_{[0,t]}))\dd t\bigg] \\
  &=\E\bigg[e^{-\Lambda^{\theta^{v}}_{T}}\int^{T}_{0}\theta^{v}_{t}(v-P(t,Q^{\theta^{v}}_{[0,t]}))\dd t\bigg],
\end{align*}
and similarly,
\begin{align*}
  \mathfrak{E}_{2}&=\E\bigg[\E\bigg[\mathds{1}_{\{\tau\leq T\}}\Pi_{\rm a}\bigg(\theta^{v}_{[0,\tau]\!]},\int^{\tau}_{0}\theta^{v}_{t}(v-P(t,Q^{\theta^{v}}_{[0,t]}))\dd t\bigg)\bigg|\upsigma(B_{[0,T]})\bigg]\bigg] \\
  &=\E\bigg[\int^{T}_{0}\Pi_{\rm a}\bigg(\theta^{v}_{[0,t]},\int^{t}_{0}\theta^{v}_{s}(v-P(s,Q^{\theta^{v}}_{[0,s]}))\bigg) \dd\PP[M_{\Lambda^{\theta^{v}}_{t}}\geq1|\upsigma(B_{[0,T]})]\bigg],
\end{align*}
for which
\begin{equation*}
  \frac{\dd\PP[M_{\Lambda^{\theta^{v}}_{t}}\geq1|\upsigma(B_{[0,T]})]}{\dd t}=\frac{\dd(1-e^{-\Lambda^{\theta^{v}}_{t}})}{\dd t}=\lambda^{\theta^{v}}_{t}e^{-\Lambda^{\theta^{v}}_{t}},\quad\PP\text{-a.s.}
\end{equation*}
Combining $\mathfrak{E}_{1}$ and $\mathfrak{E}_{2}$, we have
\begin{equation*}
  J(P;\theta^{v},v)=\E\bigg[e^{-\Lambda^{\theta^{v}}_{T}}\int^{T}_{0}\theta^{v}_{t}(v-P(t,Q^{\theta^{v}}_{[0,t]}))\dd t-\int^{T}_{0}\lambda^{\theta^{v}}_{t}e^{-\Lambda^{\theta^{v}}_{t}} \Pi_{\rm a}\bigg(\theta^{v}_{[0,t]},\int^{t}_{0}\theta^{v}_{s}(v-P(s,Q^{\theta^{v}}_{[0,s]}))\dd s\bigg)\dd t\bigg],
\end{equation*}
which after applying integration-by-parts to the first integral gives (\ref{IT.o2}).
\end{proof}

\medskip

\noindent\textbf{Proof of Proposition \ref{pro:2}.}
\begin{proof}
For the IT, given $P$ and $V=v$, (\ref{Q.e2}) immediately allows to rewrite (\ref{IT.o2}) into (\ref{IT.o4}) by the measure change in (\ref{mc}).

For the MM, take $\theta$ as given. By (\ref{MM.p1}), we know by the definition of conditional expectations that for any $t\in[0,T]$ and any random variable $\varphi(t)\in\mathbb{L}^{\infty}_{\upsigma(Q_{[0,t]})}(\Omega;\mathbb{R})$,
\begin{equation*}
  \E[V\varphi(t)]-\E[P^{\theta}(t,Q_{[0,t]})\varphi(t)]=\E[(V-\E[V|\upsigma(Q_{[0,t]})])\varphi(t)]=0,\quad t\in[0,T].
\end{equation*}
Then, by the total expectation law and the measure change again,
\begin{align*}
  0&=\E[V\varphi(t)]-\E[P^{\theta}(t,Q_{[0,t]})\varphi(t)] \\
  &=\E[\E[(V-P^{\theta}(t,Q_{[0,t]}))\varphi(t)|V]] \\
  &=\int_{\mathcal{V}}\E^{\theta^{v}}\bigg[\bigg(v-P^{\theta}\bigg(t,\bigg(\sqrt{N}\int^{\cdot}_{0}\sigma_{s}\dd B_{s}\bigg)_{[0,t]}\bigg)\bigg)\varphi(t)\bigg]\dd\PP[V\leq v] \\
  &=\int_{\mathcal{V}}\E[X^{\theta^{v}}_{t}(v-P^{\theta}_{t})\varphi(t)]\dd\PP[V\leq v] \\
  &=\E\bigg[\varphi(t)\int_{\mathcal{V}}X^{\theta^{v}}_{t}(v-P^{\theta}_{t})\dd\PP[V\leq v]\bigg],
\end{align*}
where the last equality follows from the Fubini--Tonelli theorem, which is applicable because
\begin{align*}
  \int_{\mathcal{V}}\E[|X^{\theta^{v}}_{t}(v-P^{\theta}_{t})\varphi(t)|]\dd\PP[V\leq v] &\leq\|\varphi\|_{\dbL^\infty}\int_{\mathcal{V}}\E[|v-P(t,Q_{[0,t]}^{\th^v}\big)|]\dd\PP[V\leq v]\\
  &\leq \|\varphi\|_{\dbL^\infty}(\E [|V|]+\E[|P(t,Q_{[0,t]}^{\th^V})|]),
\end{align*}
finite by condition \eqref{PL2} in Definition \ref{as.def}. Therefore, the arbitrariness of $\varphi(t)$ ensures that $\PP$-a.s.,
\begin{equation*}
  \int_{\mathcal{V}}X^{\theta^{v}}_{t}(v-P^{\theta}_{t})\dd\PP[V\leq v]=0,
\end{equation*}
or (\ref{MM.p2}).
\end{proof}

\medskip

\noindent\textbf{Proof of Proposition \ref{pro:3}.}
\begin{proof}
For fixed $v\in\cV$ and $\theta^{v}\in\cA$, let us define
\begin{equation*}
  Y_t^{\th^v}:=(\G_t^{\th^v})^{-1}\E\bigg[\int_t^T\G_s^{\th^v}\bigg(\th_s^v(v-P_s^{\th}) -\l_s^{\th^v}\Pi_{\rm a}\bigg(\th_{[0,s]}^v,\int_0^s\th_r^v(v-P_r^{\th})\dd r\bigg)\bigg)\dd s\bigg|\mathscr{F}_t\bigg],\;\;t\in[0,T],
\end{equation*}
where
\begin{equation}\label{Gamma}
  \G_t^{\th^v}:=\mathcal{E}\bigg(\int_0^\cdot\frac{\th^v_s}{\sqrt{N}\si_s}\dd B_s-\int_0^\cdot\l_s^{\th^v}\dd s\bigg)_{t}=X_t^{\th_v}e^{-\L_t^{\th^v}},
\end{equation}
with $X^{\th^v}$ given by (\ref{dp}). Via the standard theory of BSDEs (e.g., Zhang \citeyear{Z17} \text{Prop.} 4.1.2), it is familiar that $Y^{\th^v}$ solves the BSDE in the form (\ref{BSDE1}). Then, taking $t=0$, it follows that
\begin{align*}
  Y_0^{\th^v}&=\E\bigg[\int_0^T\G_s^{\th^v}\bigg[\th_s^v(v-P_s^{\th^v})-\l_s^{\th^v}\Pi_{\rm a}\bigg(\th_{[0,s]}^v,\int_0^s\th_r^v(v-P_r^{\th^v})\dd r\bigg)\bigg]\dd s\bigg] \\
  &=\E^{\theta^{v}}\bigg[\int^{T}_{0}\bigg[e^{-\Lambda^{\theta^{v}}_{t}}\theta^{v}_{t}(v-P_t^{\th^v})\dd t-\lambda^{\theta^{v}}_{t}e^{-\Lambda^{\theta^{v}}_{t}} \Pi_{\rm a}\bigg(\theta^{v}_{[0,t]},\int^{t}_{0}\theta_s^{\th^v}(v-P_s^{\th^v})\dd s\bigg)\bigg]\dd t\bigg] \\
  &=J(P;\theta^v,v),
\end{align*}
where the second equality uses the measure change in (\ref{mc}).
\end{proof}

\medskip

\section{Proofs in Section \ref{S:3}}\label{B}

\renewcommand{\theequation}{B.\arabic{equation}}

Before proving Theorem \ref{theorem:p}, we establish two technical lemmas.

\begin{lemma}\label{XUI}
Suppose $\gamma<1/2$. If $\tl\th^v\in\cA$ (Definition \ref{as.def}) for $v\in\cV$ given, then the collection of random variables $\{X^{N^{\gamma}\tilde{\theta}^{v}}_{T}:\;N\geq1\}$ is uniformly integrable.
\end{lemma}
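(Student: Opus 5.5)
The plan is to use the de la Vallée-Poussin criterion and show that $\sup_{N\geq1}\E[(X^{N^{\gamma}\tilde{\theta}^{v}}_{T})^{2}]<\infty$. A uniform second-moment bound gives uniform integrability immediately.

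Write $\epsilon_N:=N^{\gamma-1/2}\in(0,1]$, which holds because $\gamma<1/2$ and $N\geq1$, and set $\phi_s:=\tilde{\theta}^{v}_s/\sigma_s$. Then
\begin{equation*}
  X^{N^{\gamma}\tilde\theta^v}_{T}=\exp\bigg(\epsilon_N\int_0^T\phi_s\dd B_s-\frac{\epsilon_N^2}{2}\int_0^T\phi_s^2\dd s\bigg).
\end{equation*}
Squaring and adding and subtracting $2\epsilon_N^2\int_0^T\phi_s^2\dd s$ in the exponent gives
\begin{equation*}
  (X_T)^2=\mathcal{E}\bigg(2\epsilon_N\int_0^\cdot\phi_s\dd B_s\bigg)_T\,\exp\bigg(\epsilon_N^2\int_0^T\phi_s^2\dd s\bigg).
\end{equation*}
The first factor is a nonnegative random variable. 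I will apply the Cauchy--Schwarz inequality to the product, splitting it into $\mathcal{E}(4\epsilon_N\int\phi\dd B)_T\exp(16\epsilon_N^2\int\phi^2)$ and $\exp(2\epsilon_N^2\int\phi^2)$. Alternatively, Hölder's inequality with suitable exponents achieves the same split. Since $\epsilon_N\leq1$, both pieces are controlled by $\E[e^{k\int_0^T\phi_s^2\dd s}]$ for some fixed constant $k$.

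The main obstacle is that admissibility in Definition \ref{as.def}(i) only provides $\E[e^{\frac12\int_0^T(\tilde\theta^v_t)^2\dd t}]<\infty$. With $\sigma\geq\underline\sigma$, this yields exponential moments of $\int\phi^2$ only at the level $\underline\sigma^2/2$, not at an arbitrary level $k$. A crude second-moment bound may therefore fail. To get around this, I will aim for a moment of order $1+\delta$ instead of $2$, with $\delta>0$ small, which is still enough for the de la Vallée-Poussin criterion. The same computation gives
\begin{equation*}
  X_T^{1+\delta}=\mathcal{E}\big((1+\delta)\epsilon_N\phi\cdot B\big)_T\exp\bigg(\frac{\delta(1+\delta)}{2}\epsilon_N^2\int_0^T\phi^2\bigg).
\end{equation*}
By Hölder's inequality with exponents $(r,r')$, $r$ close to $1$, this reduces to two bounds. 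The first is on $\E[\mathcal{E}(r(1+\delta)\epsilon_N\phi\cdot B)_T\exp(c_1\epsilon_N^2\int\phi^2)]$, where $c_1=\frac{r(r-1)(1+\delta)^2}{2}+\frac{r\delta(1+\delta)}{2}$ is small. I will bound this again by Cauchy--Schwarz, using the fact that $\E[\mathcal{E}(\cdot)_T]\leq1$ for nonnegative local martingales (supermartingales). The second is on $\E[\exp(c_2\epsilon_N^2\int\phi^2)]$ with $c_2$ small. Choosing $\delta$ and $r-1$ small enough makes every required exponent at most $\frac{\underline\sigma^2}{2}\cdot\frac{1}{\underline\sigma^2}$ times the integrand, that is, below the Novikov level $\frac12\int(\tilde\theta^v)^2$, uniformly in $N$ because $\epsilon_N\leq1$.

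This produces $\sup_N\E[(X^{N^\gamma\tilde\theta^v}_T)^{1+\delta}]\leq C\,\E[e^{\frac12\int_0^T(\tilde\theta^v_t)^2\dd t}]^{a}<\infty$ for some $a\in(0,1]$, with a constant $C$ independent of $N$. Uniform integrability then follows. If $\underline\sigma<1$ makes the constants awkward, I will simply shrink $\delta$ and $r-1$ further, since all the required exponents tend to $0$ as $\delta,r-1\to0$.
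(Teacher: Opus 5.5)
Your reduction to a uniform bound on a moment of order $>1$ is sound, but the device you use to make that bound uniform over \emph{all} $N\geq1$ fails. It is not true that every required exponent tends to $0$ as $\delta,r-1\to0$.

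Since $c_1=\tfrac12 r(1+\delta)\big(r(1+\delta)-1\big)$, your first H\"older piece is exactly $\E\big[(X_T)^{r(1+\delta)}\big]$. That step therefore only trades one moment for a higher one, and the real work is the Cauchy--Schwarz decoupling that follows. Set $\mu=r(1+\delta)\epsilon_N$ and write $\mathcal{E}(\mu\phi\cdot B)_T=\mathcal{E}(2\mu\phi\cdot B)_T^{1/2}\exp\big(\tfrac{\mu^2}{2}\int_0^T\phi_s^2\dd s\big)$. The decoupling then requires
\begin{equation*}
  \E\Big[\exp\Big((\mu^2+2c_1\epsilon_N^2)\int_0^T\phi_s^2\dd s\Big)\Big]<\infty,\qquad \mu^2+2c_1\epsilon_N^2\geq\epsilon_N^2,
\end{equation*}
and this coefficient does not shrink with $\delta$ or $r-1$. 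Optimizing does not help: splitting off the stochastic exponential with any H\"older exponent $s>1$ costs at least $\tfrac{s\mu^2}{2}>\tfrac{\epsilon_N^2}{2}$. Take $N=1$ (so $\epsilon_N=1$) and, for instance, $\sigma\equiv1$. You would then need an exponential moment of $\int_0^T(\tilde\theta^v_t)^2\dd t$ strictly above the level $\tfrac12$ that Definition~\ref{as.def}(i) provides. So your bound is not justified for small $N$, and shrinking $\delta,r-1$ cannot rescue it.

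The missing idea is that smallness must come from $\epsilon_N=N^{\gamma-1/2}\to0$. Combine this with two elementary facts: a finite family of integrable random variables is uniformly integrable, and a union of two uniformly integrable families is uniformly integrable. Each $X^{N^\gamma\tilde\theta^v}_T$ is the terminal value of a nonnegative local martingale, so $\E[X^{N^\gamma\tilde\theta^v}_T]\leq1$, and only $N\geq N_0$ matters. For those $N$ the crude second-moment route you set aside already works, so the detour through order $1+\delta$ is unnecessary. Write $X_T^2=\mathcal{E}(4\epsilon_N\phi\cdot B)_T^{1/2}\exp\big(3\epsilon_N^2\int_0^T\phi_s^2\dd s\big)$; Cauchy--Schwarz and Jensen then give
\begin{equation*}
  \E\big[X_T^2\big]\leq\E\Big[e^{6\epsilon_N^2\int_0^T\phi_s^2\dd s}\Big]^{1/2}\leq\E\Big[e^{\frac12\int_0^T(\tilde\theta^v_t)^2\dd t}\Big]^{6\epsilon_N^2/\underline{\sigma}^2}\quad\text{whenever }12\epsilon_N^2\leq\underline{\sigma}^2.
\end{equation*}
This is essentially the paper's proof. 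It bounds $\E[(X_T)^2]$ by $\E[e^{6N^{2\gamma-1}\int_0^T(\tilde\theta^v_t)^2\dd t}]^{1/2}$ and relies on $N^{2\gamma-1}\to0$. Its $\sup_{N\geq1}$ also tacitly needs the finite-$N$ remark above.
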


\begin{proof}
Let $v\in\cV$ be given. By applying the Cauchy--Schwarz inequality and the Markov inequality, we have that for any constant $K>0$,
\begin{equation*}
  \dbE\big[X^{N^{\gamma}\tilde{\theta}^{v}}_{T}\1_{\{X^{N^{\gamma}\tilde{\theta}^{v}}_{T}>K\}}\big]\leq \big(\dbE\big[(X^{N^{\gamma}\tilde{\theta}^{v}}_{T})^2\big]\dbE[\1_{\{X^{N^{\gamma}\tilde{\theta}^{v}}_{T}>K\}}]\big)^{1/2}\leq \frac{\dbE\big[(X^{N^{\gamma}\tilde{\theta}^{v}}_{T})^2\big]}{K},
\end{equation*}
and so it suffices to show that $\dbE\big[(X^{N^{\gamma}\tilde{\theta}^{v}}_{T})^2\big]$ is bounded in $N$. By \eqref{dp} and the Cauchy--Schwarz inequality again,
\begin{align*}
  \dbE\big[(X^{N^{\gamma}\tilde{\theta}^{v}}_{T})^2\big]&=\dbE\big[e^{2\int_0^T N^{\g-1/2}\tl\th_t^v\dd B_t-\int_0^T(N^{\g-1/2}\tl\th_t^v)^2\dd t}\big] \\
  &\leq\Big(\dbE\big[e^{4\int_0^T N^{\g-1/2}\tl\th_t^v\dd B_t-8\int_0^T (N^{\g-1/2}\tl\th_t^v)^2\dd t}\big]\dbE\big[ e^{6N^{2\g-1}\int_0^T(\tl\th_t^v)^2\dd t}\big]\Big)^{1/2} \\
  &=\dbE\big[e^{6N^{2\g-1}\int_0^T(\tl\th_t^v)^2\dd t}\big]^{1/2},
\end{align*}
where the last equality uses that $e^{4\int_0^\cdot N^{\g-1/2}\tl\th_s^v\dd B_s-8\int_0^\cdot (N^{\g-1/2}\tl\th_s^v)^2\dd s}$ is an $(\mathbb{F},\PP)$-martingale with expected value $1$ (given Novikov's condition in \eqref{wf.adm}). Then, by consulting Zhang (\citeyear{Z17}) Theorem 7.2.3, there exist constants $\e>0$ and $C_\e>0$ depending only on $\|\tl\th^v\|_{\mathbb{L}^2}$ such that $\dbE[e^{\e\int_0^T(\tl\th^v_t)^2\dd t}]\leq C_\e$, with $\lim_{\e\to0}C_{\e}<\infty$. Since $\gamma-1/2<0$, then $M:=\sup_{N\geq1}\dbE\big[(X^{N^{\gamma}\tilde{\theta}^{v}}_{T})^2\big]\leq \sup_{N\geq1}C^{1/2}_{6N^{2\gamma-1}}<\infty$.
    % \begin{equation*}
    %     \begin{aligned}
    %         \dbE (X^{N^{\gamma}\tilde{\theta}^{v}}_{T})^2=&\dbE e^{2N^{\g-1/2}\int_0^T\tl\th^v_s\dd s-N^{2\g-1}\int_0^T(\tl\th^v_s)^2\dd B_s}\\
    %         =&\dbE e^{2N^{\g-1/2}\int_0^T\tl\th^v_s\dd B_s-2N^{2\g-1}\int_0^T(\tl\th^v_s)^2\dd s}e^{N^{2\g-1}\int_0^T(\tl\th^v_s)^2\dd s}\\
    %         \leq&(\dbE(X^{2N^{\gamma}\tilde{\theta}^{v}}_{T})^2)^\frac{1}{2}(\dbE)^\frac{1}{2}
    %     \end{aligned}
    % \end{equation*}
\end{proof}

\medskip

\begin{lemma}\label{XYUI}
In the setting of Lemma \ref{XUI}, if $\{Y^N:\;N\geq1\}$ is a sequence converging to $Y$ in $\dbL^2_{\mathcal{F}}(\Omega;\mathbb{R})$, then $\{X^{N^{\gamma}\tilde{\theta}^{v}}_{T}Y^N:\;N\geq1\}$ is also uniformly integrable.
\end{lemma}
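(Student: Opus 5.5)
The plan is to reduce uniform integrability of the products to a uniform $\dbL^{q}$ bound with some $q>1$. A more direct route, without uniform higher moments of $Y^N$, is the following. The key bound, available from the proof of Lemma \ref{XUI}, is $M:=\sup_{N\geq1}\dbE\big[(X^{N^{\gamma}\tilde{\theta}^{v}}_{T})^2\big]<\infty$.

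First, split $X^{N^{\gamma}\tilde{\theta}^{v}}_{T}Y^N=X^{N^{\gamma}\tilde{\theta}^{v}}_{T}Y+X^{N^{\gamma}\tilde{\theta}^{v}}_{T}(Y^N-Y)$ and treat the two pieces separately, since a sum of two uniformly integrable families is uniformly integrable.

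The second piece converges to $0$ in $\dbL^1$. By Cauchy--Schwarz,
\begin{equation*}
  \dbE\big[|X^{N^{\gamma}\tilde{\theta}^{v}}_{T}(Y^N-Y)|\big]\leq M^{1/2}\|Y^N-Y\|_{\dbL^2}\to0 .
\end{equation*}
An $\dbL^1$-convergent sequence is uniformly integrable. To see this, given $\e>0$, choose $N_0$ such that the tail terms ($N>N_0$) have $\dbL^1$ norm below $\e$. The finitely many remaining terms are each integrable, hence uniformly integrable. This yields a uniform bound on $\dbE[|\xi|\1_A]$ over all $A$ with $\PP[A]$ small, and $\dbL^1$-boundedness is immediate.

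For the first piece, I would use the criterion of $\dbL^1$-boundedness plus uniform absolute continuity. Boundedness follows from $\dbE[|X_TY|]\leq M^{1/2}\|Y\|_{\dbL^2}$. For any event $A$, Cauchy--Schwarz gives
\begin{equation*}
  \dbE\big[X^{N^{\gamma}\tilde{\theta}^{v}}_{T}|Y|\1_A\big]\leq M^{1/2}\big(\dbE[Y^2\1_A]\big)^{1/2}.
\end{equation*}
The right-hand side is independent of $N$ and tends to $0$ as $\PP[A]\to0$, because the single function $Y^2$ is integrable. This proves uniform integrability of $\{X^{N^{\gamma}\tilde{\theta}^{v}}_{T}Y:\;N\geq1\}$, and combining with the previous step concludes the proof.

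The main obstacle is only to make sure the uniform second-moment bound $M$ is legitimately available. This requires re-reading the final step of Lemma \ref{XUI}, which gives $\sup_N\dbE[(X_T)^2]\leq\sup_N C^{1/2}_{6N^{2\gamma-1}}<\infty$ under $\gamma<1/2$. Everything else is Cauchy--Schwarz. One could instead bound $\dbE[|X_TY^N|^{q}]$ for some $q>1$ uniformly, but that would require uniform higher moments of $Y^N$, which are not assumed. So I would avoid that route and use the splitting above.
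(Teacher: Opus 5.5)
Your proof is correct, but it is organized differently from the paper's. Write $X_T$ for $X^{N^{\gamma}\tilde{\theta}^{v}}_{T}$. The paper does not center $Y^N$ at its limit; instead it truncates on $\{|Y^N|>L\}$.

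\begin{itemize}
\item On $\{|Y^N|>L\}$ it bounds the contribution, via Cauchy--Schwarz, by a constant times $(\dbE[(Y^N)^2\1_{\{|Y^N|>L\}}])^{1/2}$.
\item On $\{|Y^N|\leq L\}$ it uses $\{|X_TY^N|>K\}\subseteq\{X_T>K/L\}$ to get the bound $L\,\dbE[X_T\1_{\{X_T>K/L\}}]$.
\item It then takes $L$ large, using uniform integrability of $\{(Y^N)^2\}$. This is what $\dbL^2$-convergence really supplies, although the paper phrases it as uniform integrability of $\{Y^N\}$.
\item Afterwards it takes $K$ large, using the conclusion of Lemma \ref{XUI}.
\end{itemize}

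Your splitting $X_TY^N=X_TY+X_T(Y^N-Y)$ uses nothing from Lemma \ref{XUI} beyond the moment bound $M$, and it yields more. The error term tends to $0$ in $\dbL^1$. So when the lemma is used to pass to the limit in $\dbE[X_TY^N]$ (as in Subcase 2.2 of the proof of Theorem \ref{theorem:p}), only the fixed-$Y$ term needs a uniform-integrability argument.

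The paper's truncation, in turn, never refers to the limit $Y$. It works for any family with $\{(Y^N)^2\}$ uniformly integrable.

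Incidentally, the whole argument compresses to one line: $\dbE[X_T|Y^N|\1_A]\leq M^{1/2}(\dbE[(Y^N)^2\1_A])^{1/2}$, and $\{(Y^N)^2\}$ is uniformly integrable because $(Y^N)^2\to Y^2$ in $\dbL^1$.

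One small correction to your closing aside. A uniform $\dbL^q$ bound with $1<q<2$ would not need higher moments of $Y^N$. By H\"older it needs a uniform bound on $\dbE[X_T^{2q/(2-q)}]$, and the argument of Lemma \ref{XUI} provides that in the same way as it provides $M$.
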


\begin{proof} Observe that for all $N\geq1$ and any constants $K,L>0$,
\begin{align*}
  \dbE\big[X^{N^{\gamma}\tilde{\theta}^{v}}_{T}Y^N\1_{\{|X^{N^{\gamma}\tilde{\theta}^{v}}_{T}Y^N|>K\}}\big]&=\dbE \big[X^{N^{\gamma}\tilde{\theta}^{v}}_{T}Y^N\1_{\{|X^{N^{\gamma}\tilde{\theta}^{v}}_{T}Y^N|>K,|Y^N|>L\}}\big]+\dbE \big[X^{N^{\gamma}\tilde{\theta}^{v}}_{T}Y^N\1_{\{|X^{N^{\gamma}\tilde{\theta}^{v}}_{T}Y^N|>K,|Y^N|\le L\}}\big] \\
  &\leq\big(\dbE\big[(X^{N^{\gamma}\tilde{\theta}^{v}}_{T})^2\big]\dbE[(Y^N)^2\1_{\{|Y^N|>L\}}]\big)^{1/2}+\dbE \big[ LX^{N^{\gamma}\tilde{\theta}^{v}}_{T}\1_{\{|X^{N^{\gamma}\tilde{\theta}^{v}}_{T}|>K/L,|Y^N|\le L\}}\big] \\
  &\leq M\dbE[(Y^N)^2\1_{\{|Y^N|>L\}}]^{1/2}+L\dbE\big[X^{N^{\gamma}\tilde{\theta}^{v}}_{T}\1_{\{|X^{N^{\gamma}\tilde{\theta}^{v}}_{T}|>K/L\}}\big],
\end{align*}
where $M$ comes from the proof of Lemma \ref{XUI}.
Clearly, by the assumed $\mathbb{L}^{2}$-convergence, $\{Y^N:\;N\geq1\}$ is also uniformly integrable, while Lemma \ref{XUI} governs the uniform integrability of $\{X^{N^{\gamma}\tilde{\theta}^{v}}_{T}:\;N\ge1\}$. Based on the last inequality, for any $\e>0$, we can choose $L$ and $K$ large enough sequentially such that $\dbE\big[X^{N^{\gamma}\tilde{\theta}^{v}}_{T}Y^N\1_{\{|X^{N^{\gamma}\tilde{\theta}^{v}}_{T}Y^N|>K\}}\big]\leq\e$, hence the desired uniform integrability.
\end{proof}

\medskip

\noindent\textbf{Proof of Theorem \ref{theorem:p}.}

\begin{proof} For statement (i), let $N\geq1$ be fixed. First, by Definition \ref{eq.def2}, given any equilibrium trading strategy $\theta^\star_N$, the equilibrium pricing rule $P^\star_{N}$ is obtained from (\ref{MM.p2}) in Proposition \ref{pro:2} directly. On the other hand, given any (equilibrium pricing rule) $P^\star_N$, we adopt the stochastic maximal principle to derive necessary conditions that $\theta^\star_{N}$ satisfies.
% By Definition \ref{eq.def2}, IT wants to solve the maximization problem $\sup_{\th}J(P^\star_N;\th,v)$ for all $v\in\cV$. Fix $\th\in\cA_v$ $v\in\cV$,

Given $v\in\mathcal{V}$, for any $\th^v,\D\th^v\in\cA$, let us define $DY_{t}^{\th^v,\D\th^v}:=\lim_{\e\to0}(Y_t^{\th^v+\e\D\th^v}-Y_t^{\th^v})/\e$, which limit exists by Assumption \ref{as:1} and Assumption \ref{as:2} as well as the integrability conditions in Definition \ref{as.def}. Then, employing a perturbation argument, we see that for $\theta^{v}\in\cA$, $DY^{\th^{v},\D\th^{v}}$ satisfies the following BSDE:\footnote{In this case, by consulting Cvitani\'{c} and Zhang (\citeyear{CZ13}) Lemma 10.1.8, the BSDE \eqref{DY:BSDE} is well-posed.}
\begin{align}\label{DY:BSDE}
  DY_{t}^{\th^v,\D\th^v}=&\int_t^{T}\bigg[\D\th_s^v(v-P_{N,s}^\star)-N^{-\b}\D\th_s^v\lambda^{(0,1)}(s,\theta^{v}_{s}) \nonumber\\
  &\qquad\times\bigg(W\bigg(\bigg(\int^{s}_{0}(\varpi_{0}(\theta^{v}_{r}))^{p}\dd r\bigg)^{1/p},\max\bigg\{0,c\int^{s}_{0}\theta^{v}_{r}(v-P^\star_{N,r})\dd r\bigg\}\bigg)+Y^{\theta^{v}}_{s}\bigg) \nonumber\\
  &\quad-\l^{\theta^{v}}_sW^{(1,0)}_{s}\bigg(\int^{s}_{0}(\varpi_{0}(\theta^{v}_{r}))^{p}\dd r\bigg)^{1/p-1} \nonumber\\
  &\qquad\times\int^{s}_{0}(\varpi_{0}(\theta^{v}_{r}))^{p-1} {\big(\varpi_{0}'(\theta^{v}_{r}+)\1_{\{\D\th^v_r>0\}}+\varpi_{0}'(\theta^{v}_{r}-)\1_{\{\D\th^v_r<0\}}\big)}\D\th_r^v\dd r \nonumber\\
  &\quad-\l^{\theta^{v}}_s\bigg(W^{(0,1)}_{s}{\big(\1_{\{c\int^{s}_{0}(v-P_{N,r}^\star)\th_r^v\dd r>0\}}+\1_{\{\int^{s}_{0}(v-P_{N,r}^\star)\th_r^v\dd r=0,\; \int^{s}_{0}(v-P_{N,r}^\star)\D\th_r^v\dd r>0\}}\big)}\bigg) \nonumber\\
  &\qquad\times\bigg(c\int^{s}_{0}(v-P_{N,r}^\star)\D\th_r^v\dd r\bigg)-\l^{\theta^{v}}_sDY_{s}^{\th^v,\D\th^v} \nonumber\\
  &\quad+\frac{\D\th_s^vZ_s^{\th^v}+\th_s^vDZ_{s}^{\th^v,\D\th^v}}{\sqrt{N}\si_s}\bigg]\dd s-\int_t^TDZ_{s}^{\th^v,\D\th^v}\dd B_s,
\end{align}
where $W^{(1,0)}_{s}$ is a shorthand for $W^{(1,0)}\big(\big(\int^{s}_{0}(\varpi_{0}(\theta^{v}_{r}))^{p}\dd r\big)^{1/p},\max\big\{0,c\int^{s}_{0}\theta^{v}_{r}(v-P^\star_{N,r})\dd r\big\}\big)$, and similarly for $W^{(0,1)}_{s}$.

Thus, with $\G^{\th^v}$ defined in \eqref{Gamma}, by consulting Zhang (\citeyear{Z17}) \text{Prop.} 4.1.2, we have that when $t=0$, (\ref{DY:BSDE}) implies that
\begin{align}\label{DY}
  DY_{0}^{\th^v,\D\th^v}=&\E\bigg[\int_0^{T}\G_s^{\th^v}\bigg[\D\th_s^v(v-P^\star_{N,s})-N^{-\b}\D\th_s^v\lambda^{(0,1)}(s,\theta^{v}_{s}) \nonumber\\
  &\qquad\times\bigg(W\bigg(\bigg(\int^{s}_{0}(\varpi_{0}(\theta^{v}_{r}))^{p}\dd r\bigg)^{1/p},\max\bigg\{0,c\int^{s}_{0}\theta^{v}_{r}(v-P^\star_{N,r})\dd r\bigg\}\bigg)+Y^{\theta^{v}}_{s}\bigg) \nonumber\\
  &\quad-\l^{\th^v}_sW^{(1,0)}_{s}\bigg(\int^{s}_{0}(\varpi_{0}(\theta^{v}_{r}))^{p}\dd r\bigg)^{1/p-1} \nonumber\\
  &\qquad\times\int^{s}_{0}(\varpi_{0}(\theta^{v}_{r}))^{p-1} {\big(\varpi_{0}'(\theta^{v}_{r}+)\1_{\{\D\th^v_r>0\}}+\varpi_{0}'(\theta^{v}_{r}-)\1_{\{\D\th^v_r<0\}}\big)}\D\th_r^v\dd r \nonumber\\
  &\quad-\l^{\th^v}_sW^{(0,1)}_{s}\bigg(c\int^{s}_{0}(v-P_{N,r}^\star)\D\th_r^v\dd r\bigg) \nonumber\\
  &\qquad\times\big(\1_{\{c\int^{s}_{0}(v-P_{N,r}^\star)\th_r^v\dd r>0\}}+\1_{\{\int^{s}_{0}(v-P_{N,r}^\star)\th_r^v\dd r=0,\; \int^{s}_{0}(v-P_{N,r}^\star)\D\th_r^v\dd r>0\}}\big)+{\frac{\D\th_s^vZ_s^{\th^v}}{\sqrt{N}\si_s}}\bigg]\dd s\bigg] \nonumber\\
  =&\E\bigg[\int_0^{T}\D\th_s^v\G_s^{\th^v}\bigg[(v-P_{N,s}^\star)+{\frac{Z_s^{\th^v}}{\sqrt{N}\si_s}}-N^{-\b}\lambda^{(0,1)}(s,\theta^{v}_{s}) \nonumber\\
  &\qquad\times\bigg(W\bigg(\bigg(\int^{s}_{0}(\varpi_{0}(\theta^{v}_{r}))^{p}\dd r\bigg)^{1/p},\max\bigg\{0,c\int^{s}_{0}\theta^{v}_{r}(v-P^\star_{N,r})\dd r\bigg\}\bigg)+Y^{\theta^{v}}_{s}\bigg) \nonumber\\
  &-c(v-P_{N,s}^\star)(\G_s^{\th^v})^{-1}\E\bigg[\int_s^T\G_r^{\th^v}\l^{\th^v}_rW_r^{(0,1)} \nonumber\\
  &\qquad\times{\big(\1_{\{c\int^{r}_{0}(v-P_{N,u}^\star)\th_u^v\dd u>0\}}+\1_{\{\int^{r}_{0}(v-P_{N,u}^\star)\th_u^v\dd u=0,\;\int^{r}_{0}(v-P_{N,u}^\star)\D\th_u^v\dd u>0\}}\big)}\dd r\bigg|\mathscr{F}_s\bigg] \nonumber\\
  &-(\varpi_{0}(\theta^{v}_{s}))^{p-1}(\G_s^{\th^v})^{-1} {\big(\varpi_{0}'(\theta^{v}_{s}+)\1_{\{\D\th^v_s>0\}}+\varpi_{0}'(\theta^{v}_{s}-)\1_{\{\D\th^v_s<0\}}\big)}\nonumber\\
  &\qquad\times\E\bigg[\int_s^T\G_r^{\th^v}\l^{\theta^v}_rW_r^{(1,0)}\bigg(\int^{r}_{0}(\varpi_{0}(\theta^{v}_{u}))^{p}\dd u\bigg)^{1/p-1}\dd r\bigg|\mathscr{F}_s\bigg]\bigg]\dd s\bigg],
\end{align}
where the second equality uses the law of iterated expectations and integration-by-parts to take the term $\Delta\theta^{v}_{r}$ out of the $r$-integrals. By setting in (\ref{DY})
\begin{align*}
  \bar Y^{\th^v}_{s}&:=(\G_s^{\th^v})^{-1}\E\bigg[\int_s^T\G_r^{\th^v}\l^{\th^v}_rW_r^{(0,1)}{\big(\1_{\{c\int^{r}_{0}(v-P_{N,u}^\star)\th_u^v\dd u>0\}}+\1_{\{\int^{r}_{0}(v-P_{N,u}^\star)\th_u^v\dd u=0,\; \int^{r}_{0}(v-P_{N,u}^\star)\D\th_u^v\dd u>0\}}\big)}\dd r\bigg|\mathscr{F}_s\bigg], \\
  \h Y^{\th^v}_{s}&:=(\G_s^{\th^v})^{-1}\E\bigg[\int_s^T\G_r^{\th^v}\l^{\theta^v}_rW_r^{(1,0)}\bigg(\int^{r}_{0}(\varpi_{0}(\theta^{v}_{u}))^{p}\dd u\bigg)^{1/p-1}\dd r\bigg|\mathscr{F}_s\bigg],\quad s\in[0,T],
\end{align*}
it follows that the couples $(\bar Y^{\th^v},\bar Z^{\th^v})$ and $(\h Y^{\th^v},\h Z^{\th^v})$ solve the BSDEs
\begin{align*}
  \bar Y^{\th^v}_{s}&=\int_s^T\bigg(\l_r^{\th^v}W_r^{(0,1)}{\big(\1_{\{c\int^{r}_{0}(v-P_{N,u}^\star)\th_u^v\dd u>0\}}+\1_{\{\int^{r}_{0}(v-P_{N,u}^\star)\th_u^v\dd u=0,\; \int^{r}_{0}(v-P_{N,u}^\star)\D\th_u^v\dd u>0\}}\big)}-\l_r^{\th^v}\bar Y_r^{\th^v}+\frac{\th^v_r\bar Z_r^{\th^v}}{\sqrt{N}\si_r}\bigg)\dd s \nonumber\\
  &\quad-\int_s^T\bar Z_r^{\th^v}\dd B_r, \nonumber\\
  \h Y^{\th^v}_{s}&=\int_s^T\bigg(\l_r^{\th^v}W_r^{(1,0)}\bigg(\int^{r}_{0}(\varpi_{0}(\theta^{v}_{u}))^{p}\dd u\bigg)^{1/p-1}-\l_r^{\th^v}\h Y_r^{\th^v}+\frac{\th^v_r\h Z_r^{\th^v}}{\sqrt{N}\si_r}\bigg)\dd r-\int_s^T\h Z_r^{\th^v}\dd B_r,
\end{align*}
respectively, which lead to (\ref{BSDE2}). We conclude that if $\th^\star$ is an equilibrium in Definition \ref{eq.def2}, then it must satisfy the first condition in (\ref{NC1}). This proves statement (i). \medskip

For statement (ii), we first argue the finiteness of the scaled objective function in (\ref{IT.lo}). According to Definition \ref{leq.def}, provided $\gamma<1/2$, we can fix $\tilde{P}^\star_\gamma=\E[V]$. Also, fix an arbitrary $v\in\mathcal{V}\setminus\{\E[V]\}$. For any $\tilde\th^v\in\cA$, note that
\begin{equation}\label{IT.lo2}
  \tilde{J}_{\gamma}(\tilde\th^v,v)=\lim_{N\to\infty}N^{-\gamma}J(\tilde{P}^\star_{\gamma};N^{\gamma}\tilde{\theta}^v,v)) =\lim_{N\to\infty}\E^{\theta^{v}}\bigg[\int^{T}_{0}e^{-\Lambda^{N^{\gamma}\tl\theta^{v}}_{t}}\tl\theta^{v}_{t}(v-\tilde P^\star_\g)\dd t-\int^{T}_{0}e^{-\Lambda^{N^{\gamma}\tl\theta^{v}}_{t}}f^{N,\tl\th^v}_t\dd t\bigg],
\end{equation}
where we denote
\begin{equation}\label{fN}
  f^{N,\tl\th^v}_t:=N^{-\g}\lambda(t,N^{\g-\b}\tl\theta^{v}_{t})W\bigg(\bigg(\int^{t}_{0}(\varpi_{0}(N^{\g}\tl\theta^{v}_{s}))^{p}\dd s\bigg)^{1/p},\max\bigg\{0,c\int^{t}_{0}N^{\g}\tl\theta^{v}_{s}(v-\tl P^\star_{\g})\dd s\bigg\}\bigg),\quad t\in[0,T].
\end{equation}
To show (\ref{og}), we shall consider several cases. \smallskip

\noindent\underline{Case 1}\quad Suppose $\beta=0$. We consider two subcases.

\noindent\underline{Subcase 1.1}\quad If $\gamma>0$ further, then by (\ref{de.s}) and Assumption \ref{as:2}, for any $t\in[0,T]$, $\lim_{N\to\infty}\lambda^{N^{\gamma}\tl\th^v}_{t}=\lim_{N\to\infty}\lambda(t,N^{\g}\tl\theta^{v}_{t})=\infty$ ($\PP$-a.s.), so that $e^{-\L_t^{\th^v}}\overset{\rm a.s.}{\to}0$, and the first integral in (\ref{IT.lo2}) goes to $0$ (a.s.) as $N\to\infty$. For the second integral, from Assumption \ref{as:3} we know that for any $t\in[0,T]$, $\varpi_{0}(N^{\g}\tl\theta^{v}_{t})=b N^{\g\a}|\tl\th_t^v|^{\a}+O(1)$ as $N\to\infty$. Then, with $\alpha\geq1$, it follows from the proof of Proposition \ref{pro:1} that for every $t\in[0,T]$, as $N\to\infty$,
\begin{equation*}
  \int^{T}_{0}e^{-\Lambda^{N^{\gamma}\tl\theta^{v}}_{t}}f^{N,\tl\th^v}_t\dd t\overset{\rm a.s.}{\to}
  \begin{cases}
    \displaystyle b\int^{T}_{0}|\tilde{\theta}^v_t|\dd t,&\quad\text{if }\alpha=1, \\
    \infty,&\quad\text{if }\alpha>1,
  \end{cases}
\end{equation*}
and so $\tilde{J}_{\gamma}(\tilde\th^v,v)\in\big\{-b\E\big[\int^{T}_{0}|\tilde{\theta}^v_t|\dd t\big],-\infty\big\}$ in (\ref{IT.lo2}). As a result, if $\tilde{J}_{\gamma}(\tilde\th^v,v)\neq0$ (with $b>0$), either $\tl\th^{\star v}=\mathbf{0}$ trivially or it does not exist. If $\tilde{J}_{\gamma}(\tilde\th^v,v)=0$ (with $b=0$), it violates the nonzero-value condition in (\ref{IT.lop}). In any case, the requirements in Definition \ref{leq.def} are violated.\footnote{In this case, any $\tilde{\theta}^\star$ with $\tilde\theta^{\star v}\neq\mathbf{0}$, $v\in\cV$, would be a limiting equilibrium strategy.} Thus, we must have $\g=\b=0$.

\noindent\underline{Subcase 1.2}\quad If $\gamma=0$, then $N$ only takes effect in the density process $X^{N^\g\tl\th}$ in \eqref{dp}, and we simply have
\begin{align}\label{J-tl.1}
  \tilde{J}_{0}(\tilde\th^v,v)=&\lim_{N\to\infty}J(\tl P^\star_{0};N^\g\tilde{\theta}^v,v)=\lim_{N\to\infty}\E\bigg[X_ T^{N^\g\tl\th^v}\int^{T}_{0}e^{-\Lambda^{\tl\theta^{v}}_{t}}\tl\theta^{v}_{t}(v-\tilde P^\star_0)\dd t-\int^{T}_{0}e^{-\Lambda^{\tl\theta^{v}}_{t}}f^{1,\tl\th^v}_t\dd t\bigg] \nonumber\\
  =&\E\bigg[\int_0^Te^{-\Lambda^{\tl\theta^{v}}_{t}}\tl\theta^{v}_{t}(v-\tilde P^\star_0)\dd t-\int^{T}_{0}e^{-\Lambda^{\tl\theta^{v}}_{t}}f^{1,\tl\th^v}_t\dd t\bigg],
\end{align}
with $f^{1,\tl\th^v}_t=\lambda(t,\tl\theta^{v}_{t})W\big(\big(\int^{t}_{0}(\varpi_{0}(\tl\theta^{v}_{s}))^{p}\dd s\big)^{1/p},c\int^{t}_{0}\tl\theta^{v}_{s}(v-\tl P^\star_{0})\dd s\big)$ from (\ref{fN}) and $\tl P^\star_{0}=\E[V]$. %{\color{blue}By the admissibility of $\th$, the integral part of \eqref{J-tl.1} is $\dbL^2$, thus the second equality hold by a variation of Lemma \ref{XYUI}. (???)}
By \eqref{fL2} and \eqref{PL2}, we have
\begin{equation*}
  \E\bigg[\bigg(\int^{T}_{0}e^{-\Lambda^{\tl\theta^{v}}_{t}}\Big(\tl\theta^{v}_{t}(v-\tilde P^\star_0)-f^{1,\tl\th^v}_t\Big)\dd t\bigg)^2\bigg]<\infty,
\end{equation*}
and thus by Lemma \ref{XYUI}, the limit in \eqref{J-tl.1} exists (due to uniform integrability), while the pointwise convergence in the second equality uses \eqref{dp.l}. Obviously, \eqref{J-tl.1} meets the requirements in Definition \ref{leq.def}.

In Subcase 1.2, note that there is no source of randomness within the expectation in (\ref{J-tl.1}), so it is sufficient to consider the trading strategy $\theta$ deterministic and treat the problem to maximize (\ref{J-tl.1}) as a deterministic control problem. In particular, under determinism, with Novikov's condition in \eqref{wf.adm} gone, the limiting admissibility set is $\mathbb{L}^{2}([0,T];\mathbb{R})$, but since $v-\E[V]\neq0$ is constant, the objective function in \eqref{J-tl.1} is positive only if $\tl\theta^{v}$ takes values in $\sgn(v-\E[V])\mathbb{R}_{+}$, and so we can effectively restrict to the set
\begin{equation}\label{ladm}
  \tl\cA:=\mathbb{L}^{2}([0,T];\sgn(v-\E[V])\mathbb{R}_{+}),
\end{equation}
and the control problem is
\begin{equation}\label{DCP-1}
  \sup_{\tl\th^v\in\tl\cA\setminus\{\mathbf{0}\}}\tilde{J}_{0}(\tilde\th^v,v) =\sup_{\tl\th^v\in\tl\cA\setminus\{\mathbf{0}\}}\bigg[\int_0^Te^{-\Lambda^{\tl\theta^{v}}_{t}}\tl\theta^{v}_{t}(v-\tilde P^\star_0)\dd t-\int^{T}_{0}e^{-\Lambda^{\tl\theta^{v}}_{t}}f^{1,\tl\th^v}_t\dd t\bigg],
\end{equation}
where \eqref{ladm} also allows the maximum operator in $f^{1,\tl\th^v}_{t}$ to be dropped, i.e., equivalently, $f^{1,\tl\th^v}_t=\lambda(t,\tl\theta^{v}_{t})W\big(\big(\int^{t}_{0}(\varpi_{0}(\tl\theta^{v}_{s}))^{p}\dd s\big)^{1/p},c\int^{t}_{0}\tl\theta^{v}_{s}(v-\tl P^\star_{\g})\dd s\big)$.

% \begin{equation}\label{DCP-1}
%   \sup_{\tl\th^v\in\tl\cA_{v}}\tilde{J}_{0}(\tilde\th,v)=\sup_{\tl\th^v\in\tl\cA_{v}}U(0,0,0,\tl\th^v),
% \end{equation}
% where for $x_1,x_3\geq0$ and $x_{2}\in\mathbb{R}$,
% \begin{align}\label{DCP-1a}
%   U(x_1,x_2,x_3,\tl\th^v)&=\int_0^T e^{-\xi_{3,t}^{\tl\th^v}}\bigg(\tilde{\theta}_t^v(v-\tl P^\star_{0})-\lambda(t,\tl\th^v_t)W\bigg((\xi_{1,t}^{\tl\th^v})^{1/p},c(v-\tl P^\star_{0})\xi_{2,t}^{\tl\th^v}\bigg)\bigg)\dd t, \nonumber\\
%   \xi_{1,t}^{\tl\th^v}&=x_1+\int_0^t |\tl\th_s^v|^{p}\dd s,\quad\xi_{2,t}^{\tl\th^v}=x_2+\int_0^t \tl\th_s^v\dd s,\quad\xi_{3,t}^{\tl\th^v}=x_3+\int_0^t \l(s,\tl\th_s^v)\dd s,\quad t\in[0,T].
% \end{align}
\smallskip

\noindent\underline{Case 2}\quad Suppose $\beta>0$. We consider three subcases.

\noindent\underline{Subcase 2.1} If $\gamma>\beta$, then for the same reason as in Subcase 1.1, $\tilde{J}_{\gamma}(\tilde\th,v)=0$ in (\ref{IT.lo2}), violating the nonzero value requirement, hence invalid.

\noindent\underline{Subcase 2.2}\quad Suppose $\g=\b$. Then, the first integral in (\ref{IT.lo2}) goes to $\int^{T}_{0}e^{-\Lambda^{\tl\th^v}_t}\tl\theta^{v}_{t}(v-\tilde P^\star_\g)\dd t$ (a.s.) as $N\to\infty$, so we only need to consider the second integral. As $N\to\infty$, $\PP$-\text{a.s.} for any $t\in[0,T]$, by Assumption \ref{as:3} we have that
\begin{equation}\label{fNorder1}
  f^{N,\tl\th^v}_t\sim C_{1}\lambda(t,\tl\th^v_t)\bigg(bN^{\g(\a-1)}\bigg(\int_0^t|\tl\th^v_s|^{p\a}\dd s\bigg)^{1/p}+N^{-\g}O(1)+\max\bigg\{0,c\int_0^t\tl\theta^{v}_{s}(v-\tl P^\star_{\gamma})\dd s\bigg\}\bigg).
\end{equation}
If $\alpha>1$, then from (\ref{fNorder1}), $f^{N,\tl\th^v}_t\overset{\rm a.s.}{\to}\infty$ as $N\to\infty$, and we have $\tl J_{\g}(\tilde\th,v)=\infty$ for all $\tl\th^v\neq\mathbf{0}$. Thus, let $\alpha=1$, in which case (\ref{fNorder1}) becomes
\begin{equation}\label{f1}
  f^{N,\tl\th^v}_t\overset{\rm a.s.}{\to}f_t^{\tl\th^v}:=C_{1}\lambda(t,\tl\th^v_t)\bigg(b\bigg(\int_0^t|\tl\th^v_s|^{p}\dd s\bigg)^{1/p}+\max\bigg\{0,c(v-\tl P^\star_{\g})\int_0^t\tl\theta^{v}_{s}\dd s\bigg\}\bigg),
\end{equation}
where the limit is well-defined and satisfies the requirements in Definition \ref{leq.def}. This case is compatible with (\ref{og}) for $\alpha=1$. Furthermore, by (\ref{f1}), as $N\to\infty$, $|f^{N,\tl\th^v}_t-f_t^{\tl\th^v}|^2=O(N^{-2\g})$ $\dbP$-\text{a.s.} for any $t\in[0,T]$, and thus actually $\lim_{N\to\infty}\E\big[\int_0^T|f^{N,\tl\th^v}_t-f_t^{\tl\th^v}|^2\dd t\big]=0$.

In Subcase 2.2, applying Lemma \ref{XYUI} again gives
\begin{equation}\label{J-tl.2}
\begin{aligned}
  \tilde{J}_{\g}(\tilde\th^v,v)=&\lim_{N\to\infty}\E\bigg[X_T^{N^{\g}\tl\th^v} \int^{T}_{0}e^{-\Lambda^{\tl\theta^{v}}_{t}}\bigg(\tl\theta^{v}_{t}(v-\tilde P^\star_\g)-f^{N,\tl\th^v}_t\bigg)\dd t\bigg]\\
  =&\E\bigg[\int^{T}_{0}e^{-\Lambda^{\tl\theta^{v}}_{t}}\bigg(\tl\theta^{v}_{t}(v-\tilde P^\star_\g)-C_{1}\lambda(t,\tl\th^v_t)\bigg(\bigg(\int_0^t|\tl\th^v_s|^{p}\dd s\bigg)^{1/p}+\max\bigg\{0,c(v-\tl P^\star_{\g})\int_0^t\tl\theta^{v}_{s}\dd s\bigg\}\bigg)\bigg)\dd t\bigg],
\end{aligned}
\end{equation}
whose maximization can be similarly considered a deterministic control problem in $\tl\cA$ from \eqref{ladm}, which, with the maximum operator in \eqref{J-tl.2} dropped, is specified as
\begin{equation}\label{DCP-2}
  \sup_{\tl\th^v\in\tl\cA\setminus\{\mathbf{0}\}}\tilde{J}_{\g}(\tilde\th^v,v) =\sup_{\tl\th^v\in\tl\cA\setminus\{\mathbf{0}\}}\bigg[\int^{T}_{0}e^{-\Lambda^{\tl\theta^{v}}_{t}}\bigg(\tl\theta^{v}_{t}(v-\tilde P^\star_\g)-C_{1}\lambda(t,\tl\th^v_t)\bigg(\bigg(\int_0^t|\tl\th^v_s|^{p}\dd s\bigg)^{1/p}+c(v-\tl P^\star_{\g})\int_0^t\tl\theta^{v}_{s}\dd s\bigg)\bigg)\dd t\bigg].
\end{equation}
% namely
% \begin{align}\label{DCP-2}
%   \sup_{\tl\th^v}\tilde{J}_{\g}(\tilde\th^v,v)&=\sup_{\tl\th^v}U(0,0,0,\tl\th^v), \nonumber\\
%   U(x_1,x_2,x_3,\tl\th^v)&=\int_0^T e^{-\xi_{3,t}^{\tl\th^v}}\bigg(\tilde{\theta}_t^v(v-P^\star_{\g})-\lambda(t,\tl\th^v_t)\bigg(\1_{\{\alpha=1\}} (\xi_{1,t}^{\tl\th^v})^{1/p}+c(v-\tl P^\star_{0})\xi_{2,t}^{\tl\th^v}\bigg)\bigg)\dd t, \nonumber\\
%   \xi_{1,t}^{\tl\th^v}&=x_1+\int_0^t |\tl\th_s^v|^{p}\dd s,\quad\xi_{2,t}^{\tl\th^v}=x_2+\int_0^t \tl\th_s^v\dd s,\quad\xi_{3,t}^{\tl\th^v}=x_3+\int_0^t \l(s,\tl\th_s^v)\dd s,
% \end{align}
% for $x_{1},x_{3}\geq0$, $x_2\in\mathbb{R}$, and $t\in[0,T]$, from where using the same argument as in Subcase 1.2 yields the existence of $\tl\th^{\star v}$ such that $\tilde{J}_{\g}(\tilde\th^{\star v},v)=\sup_{\tl\th^v\in\tl\cA_v}\tilde{J}_{\g}(\tilde\th^v,v)$.

\smallskip

% $f^{N,\tl\th^v}_t\overset{\rm a.s.}{\to}0$. In this case, $\tl J_{\g}(\tilde\th,v)=\E[\int^{T}_{0}\tl\theta^{v}_{t}(v-P^{\star}_\gamma)\dd t]$ for all $\tl\th$, and as a result $\sup_{\tilde{\theta}}\tilde{J}_{\gamma}(\tilde{\theta},v)=\infty$, also invalid. \smallskip

\noindent\underline{Subcase 2.3}\quad Suppose $\g<\b$. Then, since for any $t\in[0,T]$, $\lim_{N\to\infty}\lambda^{N^{\gamma}\tl\th^v}_{t}=\lim_{N\to\infty}\lambda(t,N^{\g-\beta}\tl\theta^{v}_{t})=0$ ($\PP$-a.s.) and thus $e^{-\L_t^{\th^v}}\overset{\rm a.s.}{\to}1$, the first integral in (\ref{IT.lo2}) tends to $\int^{T}_{0}\tl\theta^{v}_{t}(v-\tilde P^\star_\g)\dd t$ as $N\to\infty$, and so we consider the second integral as well. By the conditions on $\lambda$ and $W$ in Assumption \ref{as:3} again, we have that as $N\to\infty$ ($\PP$-\text{a.s.} for any $t\in[0,T]$),
\begin{align}\label{fNorder2}
  f^{N,\tl\th^v}_t&\sim \kappa C_{1}\bigg(bN^{\g(\eta+\a-1)-\b\eta}|\tl\th_t^v|^{\eta}\bigg(\int_0^t|\tl\th^v_s|^{p\a}\dd s\bigg)^{1/p}+N^{(\g-\b)\eta-\g}O(1) \nonumber\\
  &\quad+\max\bigg\{0,cN^{(\g-\b)\eta}|\tl\th_t^v|^{\eta}\int_0^t\tl\theta^{v}_{s}(v-\tl P^\star_{\gamma})\dd s\bigg\}\bigg).
\end{align}
By comparing the powers of $N$ in (\ref{fNorder2}), since $\g<\b$, we must have $\a>1$, or else the last term dominates and $\g=\b$. Assuming $\a>1$, if $\g(\eta+\a-1)>\b\eta$, then from (\ref{fNorder2}), clearly $f^{N,\tl\th^v}_t\overset{\rm a.s.}{\to}\infty$ as $N\to\infty$, and since $\g<\b$, it follows that $\tl J_{\g}(\tilde\th^v,v)=\infty$ for all $\tl\th^v\neq\mathbf{0}$, hence invalid. If $\g(\eta+\a-1)<\b\eta$, then $f^{N,\tl\th^v}_t\overset{\rm a.s.}{\to}0$. In this case, $\tl J_{\g}(\tilde\th^v,v)=\E\big[\int^{T}_{0}\tl\theta^{v}_{t}(v-\tl P^{\star}_\gamma)\dd t\big]$ for all $\tl\th$, and as $v\neq\E[V]=\tl P^{\star}_\gamma$, $\tl\th^{\star v}=\pm\infty$, also invalid. Thus, it can only be that $\g(\eta+\a-1)=\b\eta$, i.e., (\ref{og}) holds for $\a>1$, and we have from (\ref{fNorder2})
\begin{equation}\label{f2}
  f^{N,\tl\th^v}_t\overset{\rm a.s.}{\to}C_{2}|\tl\th_t^v|^{\eta}\bigg(\int_0^t |\tl\th^v_s|^{p\a}\dd s\bigg)^{1/p},
\end{equation}
where $C_{2}=\kappa bC_{1}$. The last limit is also well-defined and satisfies the requirements in Definition \ref{leq.def}.

In Subcase 2.3, with $\g<\b$ and $\a>1$, again, $e^{-\L_t^{\th^v}}\overset{\rm a.s.}{\to}1$ as $N\to\infty$, making the corresponding limit of \eqref{IT.lo2} even simpler. Using the same argument as in Subcase 1.2 and Subcase 2.2, Lemma \ref{XYUI} gives
\begin{equation}\label{J-tl.3}
  \tilde{J}_{\g}(\tilde\th^v,v)=\E\bigg[\int^{T}_{0}\bigg(\tl\theta^{v}_{t}(v-\tilde P^\star_\g)-C_{2}|\tl\th_t^v|^{\eta}\bigg(\int_0^t |\tl\th^v_s|^{p\a}\dd s\bigg)^{1/p}\bigg)\dd t\bigg],
\end{equation}
the maximization of which is equivalent to the deterministic control problem
\begin{equation}\label{DCP-3}
  \sup_{\tl\th^v\in\tl\cA\setminus\{\mathbf{0}\}}\tilde{J}_{\g}(\tilde\th^v,v) =\sup_{\tl\th^v\in\tl\cA\setminus\{\mathbf{0}\}}\int^{T}_{0}\bigg(\tl\theta^{v}_{t}(v-\tilde P^\star_\g)-C_{2}|\tl\th_t^v|^{\eta}\bigg(\int_0^t |\tl\th^v_s|^{p\a}\dd s\bigg)^{1/p}\bigg)\dd t.
\end{equation}
% \begin{align}\label{DCP-3}
%   \sup_{\tl\th^v\in\tl\cA_v}\tilde{J}_{\g}(\tilde\th^v,v)&=\sup_{\tl\th^v\in\tl\cA_v}U(0,\tl\th^v), \nonumber\\
%   U(x,\tl\th^v)&=\int_0^T\big(\tilde{\theta}_t^v(v-P^\star_{\g})-C_2|\tl\th_t^v|^{\eta}(\xi_t^{\tl\th^v})^{1/p}\big)\dd t, \nonumber\\
%   \xi_t^{\tl\th^v}&=x+\int_0^t |\tl\th_s^v|^{p\a}\dd s,\quad x\geq0,\;t\in[0,T],
% \end{align}
% and similarly as before, there exists $\tl\th^{\star v}\in\tl\cA_v$ such that $\tilde{J}_{\g}(\tilde\th^\star,v)=\sup_{\tl\th\in\tl\cA_v}\tilde{J}_{\g}(\tilde\th^v,v)$.

\smallskip

Combining Subcase 1.2, Subcase 2.2, and Subcase 2.3, we have obtained that for all $\beta\geq0,\eta\geq1,\alpha\geq1$,
\begin{equation*}
  \gamma=
  \begin{cases}
    0,\quad&\text{if }\beta=0, \\
    \begin{cases}
      \beta,\quad&\text{if }\alpha=1, \\
      \displaystyle \frac{\beta\eta}{\eta+\alpha-1},\quad&\text{if }\alpha>1, \\
    \end{cases}
    \quad&\text{if }\beta>0,
  \end{cases}
\end{equation*}
which simplifies to (\ref{og}), with $\gamma<1/2$. This completes the proof.
\end{proof}

\medskip

\noindent\textbf{Proof of Theorem \ref{theorem:s}.}

\begin{proof}
First, for the pricing rule, since $\gamma<1/2$, from (\ref{MM.lp}) we know $\tilde{P}^\star_{\gamma}=\E[V]$. Also, by (\ref{dp}), we have that for any $\tl\theta^{v}\in\tl\cA$ deterministic, $v\in\mathcal{V}$,
\begin{equation}\label{dp.b}
  |X^{N^{\gamma}\tilde\theta^{v}}_{t}-1|\leq K^{v}_{t}N^{\gamma-1/2},\quad\PP\text{-a.s.},\quad t\in[0,T],\
\end{equation}
where $K^{v}_{t}\in\mathbb{L}^{0}_{\mathscr{F}_{t}}(\Omega;\mathbb{R}_{++})$ is a positive random variable depending on $v$; for example, we take
\begin{equation}\label{Kc}
  K^{v}_{t}=\frac{\exp\big|\log\mathcal{E}\big(\int^{\cdot}_{0}\tl\th^v_{s}/\sigma_{s}\dd B_{s}\big)_{t}\big|-1}{\big|\log\mathcal{E}\big(\int^{\cdot}_{0}\tl\th^v_{s}/\sigma_{s}\dd B_{s}\big)_{t}\big|}.
\end{equation}
% Then, since $\tilde{\theta}^{v}\in\tilde{\cA}_{v}$, we see that $K^{v}_{t}$ is also bounded.
Then, (\ref{MM.p2}) implies that for any $t\in[0,T]$, $\PP$-a.s.,
\begin{align}\label{MM.pb}
  |P^{\star}_{N,t}-\tilde{P}^\star_{\gamma}|=|P^{\star}_{N,t}-\E[V]| &=\frac{\big|\int_{\mathcal{V}}(v-\E[V])X^{N^{\gamma}\tilde\theta^{v}}_{t}\dd\PP[V\leq v]\big|}{\int_{\mathcal{V}}X^{N^{\gamma}\tilde\theta^{v}}_{t}\dd\PP[V\leq v]} \nonumber\\
  &=\frac{\big|\int_{\mathcal{V}}(v-\E[V])(X^{N^{\gamma}\tilde\theta^{v}}_{t}-1)\dd\PP[V\leq v]\big|}{\int_{\mathcal{V}}X^{N^{\gamma}\tilde\theta^{v}}_{t}\dd\PP[V\leq v]} \nonumber\\
  &\leq\frac{\int_{\mathcal{V}}|v-\E[V]|K^{v}_{t}N^{\gamma-1/2}\dd\PP[V\leq v]}{\int_{\mathcal{V}}X^{N^{\gamma}\tilde\theta^{v}}_{t}\dd\PP[V\leq v]} \nonumber\\
  &\leq\frac{N^{\gamma-1/2}\int_{\mathcal{V}}|v-\E[V]|K^{v}_{t}\dd\PP[V\leq v]}{\min\big\{\int_{\mathcal{V}}X^{\tilde\theta^{v}}_{t}\dd\PP[V\leq v],1\big\}},
  % &\leq\frac{\int_{\mathcal{V}}|v-\E[V]|K^{v}_{t}\dd\PP[V\leq v]}{\int_{\mathcal{V}}\min\{X^{\tilde\theta^{v}}_{t},1\}\dd\PP[V\leq v]},
\end{align}
where the second last inequality uses (\ref{dp.b}) and the last follows from the monotonicity of the sequence $\{X^{N^{\gamma}\tilde\theta^{v}}_{t}:\;N\in\mathbb{Z}_{++}\}$ of positive random variables (for every $v\in\mathcal{V}$); regarding the last bound, the Cauchy--Schwarz inequality gives that
\begin{align}\label{MM.pb2}
  \E\bigg[\frac{\int_{\mathcal{V}}|v-\E[V]|K^{v}_{t}\dd\PP[V\leq v]}{\min\{\int_{\mathcal{V}}X^{\tilde\theta^{v}}_{t}\dd\PP[V\leq v],1\big\}}\bigg]&\leq\bigg(\E\bigg[\int_{\mathcal{V}}(|v-\E[V]|K^{v}_{t})^{2}\dd\PP[V\leq v]\bigg] \nonumber\\
  &\quad\times\E\bigg[\max\bigg\{\bigg(\int_{\mathcal{V}}X^{\tilde\theta^{v}}_{t}\dd\PP[V\leq v]\bigg)^{-1},1\bigg\}^{2}\bigg]\bigg)^{1/2}.
\end{align}
Since $\tl\th$ is deterministic, the random variable $\log X^{\tl\th^{v}}_{t}=\log\mathcal{E}\big(\int^{\cdot}_{0}\tl\th^v_{s}/\sigma_{s}\dd B_{s}\big)_{t}$ (given $t\in[0,T]$) is normally distributed with mean $-1/2\int^{t}_{0}(\tl\th^v_{s}/\sigma_{s})^{2}\dd s$ and variance $\int^{t}_{0}(\tl\th^v_{s}/\sigma_{s})^{2}\dd s$. Thus, with \eqref{Kc}, by applying the Fubini--Tonelli theorem,
%as the function $(e^{x}-1)/x$, $x\geq0$, is strictly increasing,
the first expectation on the right-hand side of \eqref{MM.pb2} is finite if
\begin{align*}
  &\quad\int_{\cV}\E\big[(v-\E[V])^{2}e^{2|\log\mathcal{E}(\int^{\cdot}_{0}\tl\th^v_{s}/\sigma_{s}\dd B_{s})_{t}|}\big]\dd\PP[V\leq v] \\
  &\leq\int_{\cV}\E\big[(v-\E[V])^{2}e^{2|\int^{t}_{0}\tl\th^v_{s}/\sigma_{s}\dd B_{s}|+\int^{t}_{0}(\tl\th^v_{s}/\sigma_{s})^{2}\dd s}\big]\dd\PP[V\leq v] \\
  &\leq\int_{\cV}(v-\E[V])^{2}e^{3\int^{t}_{0}(\tl\th^v_{s}/\sigma_{s})^{2}\dd s}\dd\PP[V\leq v]<\infty.
\end{align*}
On the other hand, the second expectation is finite if
\begin{equation*}
  \E\bigg[\bigg(\int_{\mathcal{V}}X^{\tilde\theta^{v}}_{t}\dd\PP[V\leq v]\bigg)^{-2}\bigg]\leq \int_{\mathcal{V}}\E[(X^{\tilde\theta^{v}}_{t})^{-2}]\dd\PP[V\leq v]=\int_{\mathcal{V}}e^{3\int^{t}_{0}(\tl\th^v_{s}/\sigma_{s})^{2}\dd s}\dd\PP[V\leq v]<\infty.
\end{equation*}
Therefore, if \eqref{V.ic} holds, then \eqref{MM.pb2} is finite (denoted by $C>0$), and it follows from \eqref{MM.pb} that $\E[|P^{\star}_{N,t}-\tilde{P}^\star_{\gamma}|]\leq CN^{\gamma-1/2}$, which fits into \eqref{MM.ep}.

% Denote $f_1(x)=\cL(\int_0^t\si_sdB_s)$ and $f_v\sim\cL(V)$
% \textcolor{red}{
% \begin{equation}
%     \begin{aligned}
%         P^\star_t=&\E\bigg[V|\int_0^tN^\g\tl\th_s(V)\dd s+\sqrt{N}\int_0^t\si_sdB_s\bigg]\\
%         =&\int vf_{v|q}(v)dv \\
%         =&\int v \frac{f_{q|v(q)}(q)\times f_v(v)}{f_q(q)}dv\\
%         =&\int v \frac{f_1(N^{-1/2}(q-\int_0^tN^\g\th_s(v)\dd s))\times f_V(v)}{\int_vf_1(N^{-1/2}(q-\int_0^tN^\g\th_s(v)\dd s))f_V(v)\dd v}dv\\
%         =&\int v \frac{f_1(0)\times f_V(v)}{\int_vf_1(0)f_V(v)\dd v}dv=\E V
%     \end{aligned}
% \end{equation}
% }
\smallskip

Next, we consider the strategy $N^{\gamma}\tilde{\theta}^{\star}$ taking the pricing rule $\tilde{P}^{\star}_{\gamma}=\E[V]$ as given. % Based on (\textcolor{blue}{Definition of $\epsilon$-equilibria}), we aim to show that for arbitrary $\e>0$ there exists $N_\e\in\dbN$ such that for all $N>N_\e$,
%\begin{equation*}
%  |N^{-\gamma}J(\tilde{P}^\star_{\gamma};N^{\gamma}\tilde{\theta}^v,v))-\tilde{J}_{\g}(\tilde\th^v,v)|\leq\frac{\e}{2},\quad\tl\th^v\in\tl\cA_v,\; v\in\{0,1\}.
%\end{equation*}
%\textcolor{blue}{Then apply triangle equality we get $N^{\gamma}\tilde{\theta}^{\star}$ is a $\e$-equilibrium.}
Based on \eqref{IT.lo2} and \eqref{fN}, by consulting Zhang (\citeyear{Z17}) \text{Prop.} 4.1.2, we know that $N^{-\gamma}J(\tilde{P}^\star_{\gamma};N^{\gamma}\tilde{\theta}^v,v)=\breve Y_0^{N,\g,\tl\th^v}$ for any $\tl\theta^{v}\in\cA$, $v\in\cV$, where $(\breve Y^{N,\g,\tl\th^v},\breve Z^{N,\g,\tl\th^v})$ solves the BSDE
\begin{equation*}
  \breve Y_t^{N,\g,\tl\th^v}=\int_t^T\bigg[\tl\th_s^v(v-\tl P^\star_\g)-f_s^{N,\tl\th^v}-\l_s^{N^\g\tl\th^v}\breve Y_s^{N,\g,\tl\th^v}+\frac{N^{\g}\tl\th^v_s\breve Z_s^{N,\g,\tl\th^v}}{\sqrt{N}\si_s}\bigg]\dd s-\int_t^T\breve Z_s^{N,\g,\tl\th^v}\dd B_s,\quad t\in[0,T],
\end{equation*}
while for $\tilde{J}_{\g}(\tilde\th^v,v)=\tl Y_0^{\tl\th^{v}}$, a similar BSDE-type representation based on $(\tl Y^{\tl\th^v},\tl Z^{\tl\th^v})$ can be considered; following the proof of Theorem \ref{theorem:p} statement (ii), we only need to focus on Subcase 1.2, Subcase 2.2, and Subcase 3.2. In the following, $C>0$ denotes a generic constant which depends on the $\tl\th^v$ but is independent of $N$ and whose value can vary from line to line. \smallskip

\noindent\underline{Subcase 1.2}\quad With $\g=0$, we have \eqref{J-tl.1}. In this case, we have $\tilde{J}_{\g}(\tilde\th^v,v)=\tl Y_0^{\tl\th^{v}}$, with $\tl Y^{\tl\th^v}$ solving the BSDE (with no diffusion term)
\begin{equation*}
  \tl Y_t^{\tl\th^v}=\int_t^T\big(\tl\th_s^v(v-\tl P^\star_\g)-f_s^{1,\tl\th^v}-\l_s^{\tl\th^v}\tl Y_s^{\tl\th^v}\big)\dd s,\quad t\in[0,T],
\end{equation*}
% \begin{equation*}
%   \tl Y_t^{\tl\th^v}=\int_t^T\big[\tl\th_s^v(v-\tl P^\star_\g)-f_s^{1,\tl\th^v}-\l_s^{\tl\th^v}\tl Y_s^{\tl\th^v}\big]\dd s-\int_t^T\tl Z_s^{\tl\th^v}\dd B_s,\quad t\in[0,T],
% \end{equation*}
where recall (\ref{fN}).
%note that since $\g=0$, $f_s^{N,\tl\th^v}=f_s^{1,\tl\th^v}$.
Thus, by an a-priori estimate of BSDEs (e.g., Zhang \citeyear{Z17} Theorem 4.2.1), we obtain
\begin{equation}\label{BSDE.est1}
  \big|\breve Y_0^{N,\g,\tl\th^v} - \tl Y_0^{\tl\th^{v}}\big|^2\leq\dbE\bigg[\int_0^T\bigg|\frac{\tl\th^v_t\breve Z_t^{N,\g,\tl\th^v}}{\sqrt{N}\si_t}\bigg|\dd t\bigg]^2\leq\bigg(\int_0^T|\tl\th^v_t |^2 \dd r\bigg)\dbE\bigg[\int_0^T\bigg|\frac{\breve Z_t^{N,\g,\tl\th^v}}{\sqrt{N}\si_t}\bigg|^2\dd t\bigg]\leq\frac{C}{N},
\end{equation}
where the last inequality is due to Novikov's condition in (\ref{wf.adm.t}), and by the a-priori estimate of BSDEs and Novikov's condition,
\begin{equation}\label{apriori}
  \dbE\bigg[\sup_{t\in[0,T]}\breve Y_t^{N,\g,\tl\th^v}+\int_0^T |\breve Z_t^{N,\g,\tl\th^v}|^2\dd t\bigg]\leq\dbE\bigg[\int_0^T\big|\tilde{\theta}_t^v(v-\tl P^\star_\g)-f_t^{1,\tl\th^v}\big|^2\dd t\bigg]\leq C.
\end{equation}
Based on \eqref{IT.eop} ($\epsilon$-optimality) in Definition \ref{eeq.def}, we have $\epsilon_N\leq CN^{-1/2}$ in this case. \smallskip

\noindent\underline{Subcase 2.2}\quad With $\g=\b$ and $\a=1$, we have \eqref{J-tl.2}. In this case, similarly, $\tilde{J}_{\g}(\tilde\th^v,v)=\tl Y_0^{\tl\th^{v}}$, and $\tl Y^{\tl\th^v}$ solves the BSDE
\begin{equation*}
  \tl Y_t^{\tl\th^v}=\int_t^T\big(\tl\th_s^v(v-\tl P^\star_\g)- f_s^{\tl\th^v}-\l_s^{\tl\th^v}\tl Y_s^{\tl\th^v}\big)\dd s,\quad t\in[0,T],
\end{equation*}
where $ f^{\tl\th^v}_s:=C_{1}\lambda(s,\tl\th^v_s)\big(b\big(\int_0^s|\tl\th^v_r|^{p}\dd r\big)^{1/p}+c(v-\tl P^\star_{\g})\int_0^s\tl\theta^{v}_{r}\dd r\big)$ based on \eqref{f1}. Then, a similar a-priori estimate implies
\begin{equation*}
  \big|\breve Y_0^{N,\g,\tl\th^v}-\tl Y_0^{\tl\th^{v}}\big|^2 \leq\dbE\bigg[\int_0^T\bigg(\big|f_t^{N,\tl\th^v}-f_t^{\tl\th^v}\big|^2+\bigg|\frac{N^{\gamma}\tl\th^v_t\breve Z_t^{N,\g,\tl \th^v}}{\sqrt{N}\si_t}\bigg|^2 \bigg)\dd t\bigg].
\end{equation*}
By \eqref{fNorder1}, it is clear that $|f_t^{N,\tl\th^v}-f_t^{\tl\th^v}|\leq CN^{-\g}$, while similar to \eqref{apriori},
\begin{equation}\label{EZ.est}
  \E\bigg[\int_0^T\bigg|\frac{N^{\gamma}\tl\th^v_t\breve Z_t^{N,\g,\tl \th^v}}{\sqrt{N}\si_t}\bigg|^2 \dd t\bigg]\leq C N^{2\g-1}.
\end{equation}
Since $\a=1$, by \eqref{og} we have $(\g-\b)\eta=\g(1-\a)=0$. Thus, combining the last two bounds yields
\begin{equation}\label{BSDE.est2}
  \big|\breve Y_0^{N,\g,\tl\th^v}-\tl Y_0^{\tl\th^{v}}\big|^2\leq C (N^{-2\g}+N^{2\g-1})=CN^{\max\{-2\gamma,2\gamma-1\}},
\end{equation}
showing that $\epsilon\leq CN^{\max\{\gamma-1/2,-\gamma\}}$ in Definition \ref{eeq.def}. \smallskip

\noindent\underline{Subcase 2.3}\quad Now, $\g<\b$, and $\a>1$, and we have \eqref{J-tl.3}. As before, $\tilde{J}_{\g}(\tilde\th^v,v)=\tl Y_0^{\tl\th^{v}}$, where $(\tl Y^{\tl\th^v},\tl Z^{\tl\th^v})$ solves the following BSDE:
\begin{equation*}
  \tl Y_t^{\tl\th^v}=\int_t^T\big(\tl\th_s^v(v-\tl P^\star_\g)-\tl f_s^{\tl\th^v}\big)\dd s,\quad t\in[0,T],
\end{equation*}
where $\tl f^{\tl\th^v}_s:=C_{2}|\tl\th_s^v|^{\eta}\big(\int_0^s |\tl\th^v_r|^{p\a}\dd r\big)^{1/p}$ by \eqref{f2}. Thus, we have
\begin{equation*}
  \big|\breve Y_0^{N,\g,\tl\th^v}-\tl Y_0^{\tl\th^{v}}\big|^2 \leq\dbE\bigg[\int_0^T\bigg(\big|f_t^{N,\tl\th^v}- f^{\tl\th^v}_t\big|^2+\big|\lambda(t,N^{\g-\b}\tl\theta^{v}_{t})\breve Y_t^{N,\g,\tl\th^v}\big|^2+\bigg|\frac{N^{\gamma}\tl\th^v_t\breve Z_t^{N,\g,\tl \th^v}}{\sqrt{N}\si_t}\bigg|^2 \bigg)\dd t\bigg].
\end{equation*}
Using \eqref{fNorder2} and (\ref{og}), we see that
\begin{equation*}
  \big|f_s^{N,\tl\th^v}-f^{\tl\th^v}_t\big|\leq C(N^{(\g-\b)\eta-\g}+N^{(\g-\b)\eta})=C(N^{-\g(\a-1)}+N^{-\a\g})=CN^{-\g(\a-1)}.
\end{equation*}
Also, since $\g<\b$, $\lambda(s,N^{\g-\b}\tl\theta^{v}_{s})=O(N^{(\g-\b)\eta})=O(N^{-\g(\a-1)})$ as $N\to\infty$. Together with the estimate in \eqref{EZ.est}, these imply
\begin{equation}\label{BSDE.est3}
  \big|\breve Y_0^{N,\g,\tl\th^v}-\tl Y_0^{\tl\th^{v}}\big|^2\leq C N^{\max\{2\gamma-1,-2\gamma(\a-1)\}},
\end{equation}
and so $\epsilon_N\leq C N^{\max\{\gamma-1/2,-\gamma(\a-1)\}}$ in this case.

Combining (\ref{BSDE.est1}), (\ref{BSDE.est2}), and (\ref{BSDE.est3}), along with (\ref{MM.pb}), we arrive at (\ref{eps.rate}) as desired.
\end{proof}

\medskip

\noindent\textbf{Proof of Theorem \ref{theorem:sup}.}

\begin{proof}
First, let us note that by \eqref{MM.p2} in Proposition \ref{pro:2} -- under the alternative formulation -- changing the form of $\varpi_{0}$ has no effect on the pricing rule $P$ for any given strategy $\theta$. This means that $\epsilon=0$ in (\ref{MM.ep}) in Definition \ref{eeq.def} once an equilibrium strategy $\th^\star_N$ is fixed, and so we only need to consider the $\epsilon$-optimality in (\ref{IT.eop}).
% Notice that when apply $(\th^{\star}, P^{\star})$ as the equilibrium of the model with sup norm to the model with the $p$-norm or as the $\e_N$-equilibrium of the model with $p$-norm to the model with the sup norm, since the $\th^{\star}$ didn't change, optimal pricing rule remains to be $P^\star$ and thus the \eqref{MM.ep} in Definition \ref{eeq.def} is automatically satisfied, thus suffice to consider the $\e$-optimality of IT \eqref{IT.eop}.

% In proving both statements, we shall establish an estimate.
Based on Proposition \ref{pro:3}, for any $v\in\cV$, $\th^v\in\cA$, and $P=P^{\th^v}$ (recalling Definition \ref{as.def}), let us write $(Y^{\th^v},Z^{\th^v})$ from \eqref{BSDE1} as $(Y^{p},Z^{p})$ and $(Y^\infty,Z^\infty)$, respectively, with $p\geq1$, i.e., $Y^p_0$ and $Y^{\infty}_0$ are the IT's objective function \eqref{IT.o4} under \eqref{crpe.s} and \eqref{crpe.sl}, respectively. Then, $(Y^{p},Z^{p})$ and $(Y^\infty,Z^\infty)$ respectively solve the following BSDEs:\footnote{In the remainder of the proof, with $N\geq1$ fixed, we suppress the subscript $N$ in all notation for conciseness.}
\begin{align}\label{BSDEp}
  Y_t^{p}&=\int_t^T\bigg(\th_s^v(v-P_{s})-\l_s^{\th^v}W\bigg(\bigg(\int^{s}_{0}(\varpi_{0}(\theta^{v}_{r}))^{p}\dd r\bigg)^{1/p},\max\bigg\{0,c\int^{s}_{0}\theta^{v}_{r}(v-P_{r})\dd r\bigg\}\bigg)  \nonumber\\
  &\quad-\l_s^{\th^v}Y_s^{p}+\frac{\th^v_sZ_s^{p}}{\sqrt{N}\si_s}\bigg)\dd s-\int_t^TZ_s^{p}\dd B_s,\quad t\in[0,T],
\end{align}
and
\begin{align}\label{BSDEinfty}
  Y_t^{\infty}&=\int_t^T\bigg(\th_s^v(v-P_s)-\l_s^{\th^v}W\bigg(\sup_{r\in[0,s]}\varpi_{0}(\theta_{r}), \max\bigg\{0,c\int_0^s\th_r^v(v-P_r) \dd r\bigg\}\bigg)  \nonumber\\
  &\quad-\l_s^{\th^v}Y_s^{\infty}+\frac{\th^v_sZ_s^{\infty}}{\sqrt{N}\si_s}\bigg)\dd s-\int_t^TZ_s^{\infty}\dd B_s,\quad t\in[0,T].
\end{align}
Similar to the proof of Theorem \ref{theorem:p} statement (i), by the a-priori estimate of BSDEs and the fact that function $W$ is Lipschitz-continuous in the first argument (Assumption \ref{as:1}), we have that for some positive constant $C$,
\begin{equation}\label{eps.p}
  |Y_0^{\infty}-Y_0^{p}|^2\leq C\E\bigg[\int_0^T\bigg|\sup_{r\in[0,s]}\varpi_{0}(\theta_{r})-\bigg(\int_0^s(\varpi_{0}(\theta_{r}))^p\dd r\bigg)^{1/p}\bigg|^2\dd s\bigg].
\end{equation}
% Suppose we can find a set of $S$ such that $|\int_0^T|f(s)|^pds|^\frac{1}{p}$ converge to $\sup_{s\in[0,T]}|f(s)|$ uniformly for all $f\in S$, and $\varpi_{0}(\th(\o,\cdot))\in S$ $\O-$almost surely then we are done. We claim that it is suffice to let $S$ be the set of Lipchitz continuous with a common Lipchitz constant $L$ and uniformly bounded functions.

%It remains to argue that the set $\cA'$ is compact. Since for all $\th^{v}\in\cA$, $\|\varpi_0(\th^v)\|_p\nearrow\|\varpi_0(\th^v)\|_\infty$ and the norms $\|\cdot\|_{p}$ and $\|\cdot\|_{\infty}$ are all continuous functionals on $\cA'$, Dini's Theorem implies that \textcolor{blue}{$\|\cdot\|_p$ converge to $\|\cdot\|_\infty$ uniformly, but since $\cA'$ ensures Lipschitz continuity in $t$ with a common Lipschitz constant $L$ and uniformly bounded, then by Arzela--Ascoli, $\cA'$ is compact. By assumption, all $\th^v\in\cA_{N,v}$ has a common Lipscthiz constant and bounded, $\varpi_{0}$ is also Lipchitz, thus we are done. \emph{Cite the previous two BSDEs in the arguments.}}

Assume that the set $\cA'$ is compact. Since for any $\th^{v}\in\cA'$,
\begin{equation}\label{pconv}
  \int_0^T\bigg(\int^t_0(\varpi_0(\th_t^v))^{p}\dd s\bigg)^{1/p}\dd t\nearrow\int_0^T\sup_{s\in[0,t]}\varpi_0(\th_s^v)\dd t,\quad \PP\text{-a.s.},
\end{equation}
and by Assumption \ref{as:2}, the two (outer) integrals %$\th^v(\o)\mapsto\int_0^T\|\varpi_0(\th_t^v(\o))\|_{p}\dd t$ and $\th^v(\o)\mapsto\int_0^T\|\varpi_0(\th_t^v(\o))\|_{\infty}\dd t$
on both sides of the last convergence are continuous functionals (of $\theta^{v}$) on $\cA'$, an application of Dini's Theorem implies that the convergence in \eqref{pconv} is also uniform, i.e., for any $\e>0$, there exists $p_0\geq1$ such that for all $p\geq p_0$ and any $\th^{v}\in\cA$,
\begin{equation*}
  \int_0^T\sup_{s\in[0,t]}\varpi_0(\th_s^v)\dd t-\int_0^T\bigg(\int^t_0(\varpi_0(\th_t^v))^{p}\dd s\bigg)^{1/p}\dd t\leq\e,\quad \dbP\text{-a.s.}
\end{equation*}
On the other hand, by the imposed condition of uniform boundedness and Lipschitz continuity (in time) with a common Lipschitz constant $L$ in $\cA'$, the compactness of $\cA'$ results from the Arzela--Ascoli theorem.

Therefore, (\ref{eps.p}) implies that $|Y_0^{\infty}-Y_0^{p}|\leq\e_p$, where $\e_p$ does not depend on the choice of $\th^{v}$ and $P$, and $\lim_{p\to\infty}\e_p=0$, and we can set $\th$ to be the (presumably existent) equilibrium strategy $\th^{\star}_N$ (either under \eqref{crpe.s} or under \eqref{crpe.sl}) to establish both statements in the theorem. %Thus by triangle inequality, (i) and (ii) follows trivially.
Indeed, for statement (i), by including $\th^{\star}_N$ under \eqref{crpe.sl} in the superscripts, %denote $(Y_t^{p,\th^v},Z_t^{p,\th^v})$ and $(Y_t^{\infty,\th^v},Z_t^{\infty,\th^v})$ as the solution of BSDE \eqref{BSDEp} and \eqref{BSDEinfty} with strategy $\th^v\in\cA'$,
we have from \eqref{eps.p} that
\begin{equation*}
  \big|Y_0^{\infty,\th_N^{\star v}}-Y_0^{p,\th_N^{\star v}}\big|+Y_0^{p,\th_N^{\star v}}\geq Y_0^{\infty,\th_N^{\star v}}\geq Y_0^{\infty}\geq Y_0^{p}-|Y_0^{p}- Y_0^{\infty}|,
\end{equation*}
where $Y^{p}_0\equiv Y^{p,\th^v}_0$ and $Y^{\infty}_0\equiv Y^{\infty,\th^v}_0$ and which holds for any $\th^v\in\cA'$, which shows that $Y_0^{p,\th_N^\star}\geq Y_0^{p,\th^v}-2\e_p$ for any $\th^v\in\cA'$. The same argument goes for statement (ii) with $\th^{\star}_N$ taken under \eqref{crpe.s}.
\end{proof}

\medskip

\section{Proofs in Section \ref{S:4}}\label{C}

\renewcommand{\theequation}{C.\arabic{equation}}

\noindent\textbf{Proof of Proposition \ref{pro:4}.}
\begin{proof}
With $\eta=1$ and $\alpha>1$, by (\ref{og}) we immediately have $\gamma=\beta/\alpha$, and we consider the problem in (\ref{DCP-3}) from Subcase 2.3 in the proof of Theorem \ref{theorem:p} statement (ii), where, for $v\neq\E[V]$ given, the objective function is essentially
\begin{equation}\label{J-tl.3a}
  \tilde{J}_{\g}(\tilde\th^v,v)=\int^{T}_{0}\bigg(\tl\theta^{v}_{t}(v-\E[V])-C_{2}|\tl\th_t^v|^{\eta}\bigg(\int_0^t |\tl\th^v_s|^{p\a}\dd s\bigg)^{1/p}\bigg)\dd t.
\end{equation}
Since (\ref{DCP-3}) is deterministic optimization, we employ Pontryagin's maximum principle (see, e.g., Zhou \citeyear{Z90}) to solve it. The idea is to obtain a first-order condition that must hold along an optimal trajectory for the state variable $x=\xi(t)=\int^{t}_{0}|\tl\theta^{v}_{s}|^{p\alpha}\dd s\geq0$, for $t\in[0,T]$.

Let us denote $q\equiv q(p):=p\alpha>1$. In the following, based on \eqref{wf.adm.t}, we restrict to the case $v-\E[V]>0$, with $\tl\theta^{v}_{t}\geq0$ for $t\in[0,T]$, as if $v-\E[V]<0$, the same arguments apply to $\tl\theta^{v}_{t}\leq0$. Further, we assume without loss of generality that $\tl\theta^{v}_{t}>0$ for \text{a.e.} $t\in[0,T]$, so that $x$ is strictly increasing on $[0,T]$.\footnote{Indeed, according to the below arguments, any interval on which
$\tl\theta^{v}\equiv0$ strictly reduces the objective function and so any such $\tl\theta^{v}$ cannot be optimal.} Then, we can re-parameterize the original objective function in (\ref{J-tl.3a}) with the state variable $x$, i.e.,
\begin{equation}\label{o_rep}
  \tl J_{\gamma}(\vartheta,v)=\int^{x(T)}_{0}(v-\E[V]-C_{2}y^{1/p})\vartheta(y)^{1-q}\dd y.
\end{equation}
Since $x(0)=\xi(0)=0$, the state equation $\dd x/\dd t=\vartheta^{q}$ gives the relation
\begin{equation}\label{o_rep_t}
  t=\int^{x(t)}_{0}(\vartheta(y))^{-q}\dd y,\quad t\leq T.
\end{equation}
Also, with the increase of $x$ on $[0,T]$, the transversality condition ensures that under optimality, $x(T)=\bar{x}$ with $(v-\E[V]-C_{2}\bar x^{1/p})=0$, or
\begin{equation}\label{x_bar}
  \bar x:=\bigg(\frac{v-\E[V]}{C_{2}}\bigg)^{p},
\end{equation}
namely the maximal attainable state value. Thus, we also have the terminal condition coupled to (\ref{o_rep}),
\begin{equation}\label{o_rep_T}
  T=\int^{\bar{x}}_{0}(\vartheta(y))^{-q}\dd y.
\end{equation}

By forming the Lagrangian for \eqref{o_rep} with \eqref{o_rep_T}, namely
\begin{equation*}
  \mathcal{L}(\vartheta,x):=\int^{\bar x}_{0}\big((v-\E[V]-C_{2}y^{1/p})(\vartheta(y))^{1-q}-\mu(\vartheta(y))^{-q}\big)\dd y,
\end{equation*}
with multiplier $\mu\geq0$, we observe that the integrand is local in $y$, and so perturbing $\vartheta$ directly yields the pointwise first-order condition
\begin{equation*}
  (v-\E[V]-C_{2}x^{1/p})(1-q)(\vartheta(x))^{-q}+\mu q(\vartheta(x))^{-q-1}=0,
\end{equation*}
which, upon simplification, gives the optimal control in terms of $x$,
\begin{equation}\label{varth_x}
  \vartheta(x)=K(v-\E[V]-C_{2}x^{1/p})^{-1},
\end{equation}
where $K>0$ is a constant involving $\mu$ and which is clearly strictly positive for $x<\bar x$.
%The maximum state $\bar x$ is determined by the condition $v-\E[V]-C_{2}\bar x^{1/p}=0$, or $\bar x=((v-\E[V])/C_{2})^{p}$.

Now, if we define
\begin{equation}\label{gv:int}
  g_{v}(x):=\int^{\bar x}_{x}(v-\E[V]-C_{2}y^{1/p})^{q}\dd y,\quad x\in[0,\bar x],
\end{equation}
which is strictly decreasing in $x$, with $g_{v}(\bar x)=0$, then by plugging \eqref{varth_x} into the terminal condition \eqref{o_rep_T} we have
\begin{equation}\label{K}
  K=\bigg(\frac{g_{v}(0)}{T}\bigg)^{1/q}.
\end{equation}
Using (\ref{o_rep_T}) and rearranging terms gives
\begin{equation}\label{x_t}
  x(t)=g^{-1}_{v}\bigg(\frac{T-t}{T}g_{v}(0)\bigg),
\end{equation}
noting that the inverse function $g^{-1}_{v}$ is well-defined and positive on $[0,g_{v}(0))$.
%We can verify that under (\ref{P1:th}), the value function
%\begin{equation*}
%  \cJ(t,x)=\int^{T}_{t}\big((v-\E[V])\vartheta(x(s))-C_{2}(\vartheta(x(s)))^{q}(x(s))^{1/p}\big)\dd t
%\end{equation*}
%is indeed a classical solution of the PDE (?)
Putting \eqref{K} and \eqref{x_t} into \eqref{varth_x} and simplifying and setting $\tl\theta^{\star v}_{t}=\vartheta$, also recalling $q=p\alpha$, we arrive at the expression in (\ref{P1:th}).

The integral in (\ref{gv:int}) fits into the integral representation for the incomplete Beta function. In particular, using the substitution $C_{2}y^{1/p}/(v-\E[V])\mapsto u$,
\begin{align*}
  \int^{\bar x}_{x}(v-\E[V]-C_{2}y^{1/p})^{q}\dd y&=\bigg(\int^{\bar x}_{0}-\int^{x}_{0}\bigg)(v-\E[V])^{q}\bigg(1-\frac{C_{2}y^{1/p}}{v-\E[V]}\bigg)^{q}\dd y \\
  &=\frac{p(v-\E[V])^{q+p}}{C^{p}_{2}}\bigg(\int^{C_{2}\bar{x}^{1/p}/(v-\E[V])}_{0}-\int^{C_{2}x^{1/p}/(v-\E[V])}_{0}\bigg)u^{p-1}(1-u)^{q}\dd u \\
  &=\frac{p(v-\E[V])^{q+p}}{C^{p}_{2}}\big(\mathrm{B}_{C_{2}\bar{x}^{1/p}/(v-\E[V])}(p,q+1)-\mathrm{B}_{C_{2}x^{1/p}/(v-\E[V])}(p,q+1)\big),
\end{align*}
which with (\ref{x_bar}) and $q=p\alpha$ yields the explicit formula in (\ref{P1:gv}).

Moreover, based on (\ref{varth_x}), it is clear that $\tl\theta^{\star v}_{t}=\vartheta$ is continuous and increasing in $t\in(0,T)$. With $\kappa:=C^{p}_{2}(v-\E[V])^{1-p}/p$, note that $(v-\E[V]-C_{2}x^{1/p})=\kappa(\bar x-x)$ as $x\nearrow\bar x$, and thus,
\begin{equation*}
  g_{v}(x)=\int^{\bar x}_{x}(v-\E[V]-C_{2}y^{1/p})^{q}\dd y\sim\kappa^{q}\int^{\bar x}_{x}(\bar x-y)^{q}\dd y=\frac{\kappa^{q}}{q+1}(\bar x-x)^{q+1},\quad\text{as }x\nearrow\bar{x},
\end{equation*}
which with (\ref{K}) and (\ref{x_t}) implies that, as $t\nearrow T$,
\begin{equation*}
  \bar{x}-x(t)=\bigg(\frac{K^{q}(q+1)}{\kappa^{q}}\bigg)^{1/(q+1)}(T-t)^{1/(q+1)},
\end{equation*}
and therefore,
\begin{equation*}
  \tl\theta^{\star v}_{t}=K(v-\E[V]-C_{2}(x(t))^{1/p})^{-1}\sim\frac{K}{\kappa}\bigg(\frac{K^{q}(q+1)}{\kappa^{q}}\bigg)^{-1/(q+1)}(T-t)^{-1/(q+1)}.
\end{equation*}
Since $q>1$, this verifies that $\tl\theta^{\star v}\in\tl\cA$ indeed (square-integrable), hence being a valid equilibrium strategy according to Definition \ref{leq.def}, completing the proof.
\end{proof}

\medskip

\noindent\textbf{Proof of Proposition \ref{pro:5}.}
\begin{proof}
Again, we restrict to the case $v-\E[V]>0$ and, by the maximum principle, re-parameterize the original objective function in (\ref{J-tl.3a}) with state $x$; that is, with $\vartheta\equiv\vartheta(x)$,
\begin{equation}\label{o_rep2}
  \tl J_{\gamma}(\vartheta,v)=\int^{\bar x}_{0}\big((v-\E[V])(\vartheta(y))^{1-\eta}-C_{2}y^{1/p}\big)\dd y\quad\text{subject to}~~T=\int^{\bar x}_{0}(\vartheta(y))^{-\eta}\dd y,
\end{equation}
where $\bar{x}>0$ corresponds to the maximal attainable value of $x$, which equals $x(T)$ as $x$ is increasing. We form the Lagrangian
\begin{equation*}
  \mathcal{L}(\vartheta,x):=\int^{\bar x}_{0}\big((v-\E[V])(\vartheta(y))^{1-\eta}-C_{2}y^{1/p}-\mu(\vartheta(y))^{-\eta}\big)\dd y,
\end{equation*}
with multiplier $\mu\geq0$, which shows that the $\vartheta$-terms are separate from $y$ in the integrand, and thus perturbation yields constant optimal $\vartheta$, namely
\begin{equation}\label{mu}
  \mu=(v-\E[V])\bigg(1-\frac{1}{\eta}\bigg)\vartheta.
\end{equation}
Under the transversality condition, $\bar{x}$ is such that $(v-\E[V])\vartheta^{1-\eta}-C_{2}\bar{x}^{1/p}-\mu\vartheta^{-\eta}=0$, which along with \eqref{mu} further gives
\begin{equation}\label{varth_xb2a}
  \frac{v-\E[V]}{\eta}\vartheta^{1-\eta}=C_{2}\bar x^{1/p}.
\end{equation}
On the other hand, from the terminal condition in (\ref{o_rep2}) we have
\begin{equation}\label{varth_xb2b}
  \vartheta=\bigg(\frac{\bar x}{T}\bigg)^{1/\eta}.
\end{equation}
Solving (\ref{varth_xb2a}) and (\ref{varth_xb2b}) we obtain
\begin{equation*}
  \tl\theta^{\star v}\equiv\vartheta=\bigg(\frac{v-\E[V]}{\eta C_{2}}\bigg)^{p/(p(\eta-1)+\eta)}T^{p(\eta-1)/(\eta(p(\eta-1)+\eta))},
\end{equation*}
the same as (\ref{P2:th}) with $\eta=p\alpha$, recalling $v-\E[V]>0$ here. %\footnote{One can also verify, optionally, that the value function $\cJ$ under the constant optimal control $\vartheta$ is indeed the (classical) solution to the foregoing PDE.}
\end{proof}

\medskip

\noindent\textbf{Proof of Proposition \ref{pro:6}.}
\begin{proof}
Since $\alpha=1$ and $\eta=1$, we have $\gamma=\beta$ from (\ref{og}). Here, we consider the problem in (\ref{DCP-2}) from Subcase 2.2 in the proof of Theorem \ref{theorem:p} statement (ii). With $v\neq\E[V]$ given, the objective function is now
\begin{equation}\label{J-tl.2a}
  \tilde{J}_{\g}(\tilde\th^v,v)=\int^{T}_{0}e^{-\kappa \int^{t}_{0}|\tl\theta^v_{s}|\dd s}\bigg(\tl\theta^{v}_{t}(v-\E[V]) -\kappa C_{1}|\tl\th^v_t|\bigg(b\bigg(\int_0^t|\tl\th^v_s|^{p}\dd s\bigg)^{1/p}+c(v-\E[V])\int_0^t\tl\theta^{v}_{s}\dd s\bigg)\bigg)\dd t.
\end{equation}
Focusing again on $v-\E[V]>0$, with $\tl\th^{v}_{t}\geq0$ for all $t\in[0,T]$, we apply the maximum principle, considering two separate cases. \smallskip

\noindent\underline{Case 1}\quad $p=1$ or $b=0$. In this case, \eqref{J-tl.2a} can be significantly simplified with the state variable $x=\int^{t}_{0}\tilde{\theta}^{v}_{s}\dd s$. Re-parameterizing \eqref{J-tl.2a} with $x$, we have
\begin{equation}\label{o_rep3}
  J(\vartheta,v)=\int^{x_{1}(T)}_{0}e^{-\kappa y}\big(v-\E[V]-\kappa C_{1}(b+c(v-\E[V]))y\big)\dd y,
\end{equation}
which happens to be independent of $\vartheta$. This means that (\ref{o_rep3}) is valid even if $\vartheta=0$ on some intervals within $[0,x_{1}(T)]$. Thus, the original problem becomes to maximize $J(\vartheta,v)$ with respect to the only controllable variable $x_{1}(T)\equiv\bar x$. However, since the integrand in (\ref{o_rep3}) clearly has to be positive under optimality, $\bar x$ is such that $v-\E[V]-\kappa C_{1}(b+c(v-\E[V]))\bar x=0$, or
\begin{equation}\label{x_bar2}
  \bar{x}=\frac{v-\E[V]}{\kappa C_{1}(b+c(v-\E[V]))},
\end{equation}
and $\tl\theta^{\star v}\geq0$ only needs to satisfy
\begin{equation*}
  \bar{x}=\int^{T}_{0}\tl\theta^{\star v}_{t}\dd t,
\end{equation*}
verifying (\ref{P3:th}).

\smallskip

\noindent\underline{Case 2}\quad $p>1$ and $b>0$.\quad Assuming that $\tl\theta^{v}_{t}>0$ for \text{a.e.} $t\in[0,T]$ and re-parameterizing the objective function in \eqref{J-tl.2a} with the state variable $x=\int^{t}_{0}\tilde{\theta}^{v}_{s}\dd s$, we now have
\begin{equation}\label{o_rep3'}
  J(\vartheta,v)=\int^{\bar x}_{0}e^{-\kappa y}\big(v-\E[V]-\kappa C_{1}(b(h(y))^{1/p}+c(v-\E[V])y)\big)\dd y,
\end{equation}
where $\bar x=x(T)>0$ is, again, the maximal attainable value of $x$, and
\begin{equation}\label{h}
  h(x):=\int^{x}_{0}(\vartheta(y))^{p-1}\dd y.
\end{equation}
Also, from the state equation $\dd x/\dd t=\vartheta$ we have that
\begin{equation}\label{o_rep_t'}
  t=\int^{x}_{0}(\vartheta(y))^{-1}\dd y,
\end{equation}
which yields the terminal condition coupled to (\ref{o_rep3'}),
\begin{equation}\label{o_rep_T'}
  T=\int^{\bar x}_{0}(\vartheta(y))^{-1}\dd y.
\end{equation}

Forming the Lagrangian for (\ref{o_rep3'}) and (\ref{o_rep_T'}), we have
\begin{equation}\label{Lgn}
  \mathcal{L}(\vartheta,x)=\int^{x}_{0}\big(e^{-\kappa y}(v-\E[V]-\kappa C_{1}(b(h(y))^{1/p}+c(v-\E[V])y))-\mu(\vartheta(y))^{-1}\big)\dd y,
\end{equation}
with multiplier $\mu\geq0$. Since $h$ involves integration of the control $\vartheta$, by perturbation with respect to $\vartheta$ we have that for any fixed $\Delta\vartheta$ supported on $[0,\bar x]$,
\begin{align*}
  D(h(x))^{1/p}&=\frac{1}{p}(h(x))^{1/p-1}Dh(x) \\ &=\frac{1}{p}(h(x))^{1/p-1}\bigg[\frac{\dd}{\dd\e}\int^{x}_{0}(\vartheta(y)+\e\Delta\vartheta(y))^{p-1}\dd y\bigg]\bigg|_{\e=0} \\ &=\frac{p-1}{p}(h(x))^{1/p-1}\int^{x}_{0} (\vartheta(y))^{p-2}\Delta\vartheta(y)\dd y,
\end{align*}
and thus, for (\ref{Lgn}),
\begin{align*}
  D\mathcal{L}(\vartheta,x)&=-\kappa bC_{1}\int^{\bar x}_{0}e^{-\kappa y}D(h(y))^{1/p}\dd y+\mu\int^{\bar x}_{0}(\vartheta(y))^{-2}\Delta\vartheta(y)\dd y \\
  &=-\kappa bC_{1}\int^{\bar x}_{0}e^{-\kappa y}\bigg(\frac{p-1}{p}(h(y))^{1/p-1}\int^{y}_{0}(\vartheta(z))^{p-2}\Delta\vartheta(z)\dd z\bigg)\dd y +\mu\int^{\bar x}_{0}(\vartheta(y))^{-2}\Delta\vartheta(y)\dd y \\
  &=\int^{\bar x}_{0}\Delta\vartheta(z)\bigg(\mu(\vartheta(z))^{-2}-\kappa bC_{1}\frac{p-1}{p}(\vartheta(z))^{p-2}\int^{\bar x}_{z}e^{-\kappa y}(h(y))^{1/p-1}\dd y\bigg)\dd z.
\end{align*}
Since $\Delta\vartheta$ is arbitrary, the last integrand must vanish, and this gives us the following first-order condition:
\begin{equation*}
  \mu=\kappa bC_{1}\frac{p-1}{p}(\vartheta(x))^{p}\int^{\bar x}_{x}e^{-\kappa y}(h(y))^{1/p-1}\dd y.
\end{equation*}
Noting that $\vartheta=(h')^{1/(p-1)}$ from (\ref{h}), by writing $\varsigma=p\mu/(\kappa bC_{1}(p-1))$, the above becomes
\begin{equation}\label{o_foc}
  (h'(x))^{p/(p-1)}=\varsigma\bigg(\int^{\bar x}_{x}e^{-\kappa y}(h(y))^{1/p-1}\dd y\bigg)^{-1}.
\end{equation}
By differentiating the left-hand side of \eqref{o_foc}, we have
\begin{equation}\label{o_foc_L}
  \frac{\dd}{\dd x}((h'(x))^{p/(p-1)})=\frac{p}{p-1}(h'(x))^{p/(p-1)-1}h''(x),
\end{equation}
while differentiation of the right-hand side gives
\begin{equation}\label{o_foc_R}
  \frac{\dd}{\dd x}\bigg(\varsigma\bigg(\int^{\bar x}_{x}e^{-\kappa y}(h(y))^{1/p-1}\dd y\bigg)^{-1}\bigg)=\frac{\varsigma e^{-\kappa x}(h(x))^{1/p-1}}{\big(\int^{\bar x}_{x}e^{-\kappa y}(h(y))^{1/p-1}\dd y\big)^{2}}=\varsigma^{-1}e^{-\kappa x}(h(x))^{1/p-1}(h'(x))^{2p/(p-1)},
\end{equation}
where the second equality uses that $\big(\int^{\bar x}_{x}e^{-\kappa y}(h(y))^{1/p-1}\dd y\big)^{-2}=\varsigma^{-2}(h'(x))^{2p/(p-1)}$ from (\ref{o_foc}). Substituting (\ref{o_foc_L}) and (\ref{o_foc_R}) into \eqref{o_foc} simplifying, we obtain the following second-order ODE for the function $h$:
\begin{equation}\label{h_ODE}
  h''(x)=\frac{p-1}{p\varsigma}e^{-\kappa x}(h(x))^{1/p-1}(h'(x))^{(2p-1)/(p-1)},
\end{equation}
which is the same as the first equation in (\ref{P3:DS}) and is subject to the immediate boundary condition $h(0)=0$. To determine a second boundary condition, we observe that in order for (\ref{h_ODE}) to admit a real-valued nonzero solution, it must be that $h'(0)=\chi$ for some constant $\chi>0$, which also ensures uniqueness by continuity. In this case, (\ref{h_ODE}) then admits a positive solution $h(x)$ for $x\in(0,\bar x)$, where $\bar x$ has to satisfy the transversality condition, with $\bar x=x(T)$ due to the increase of $x$ under optimality; more precisely, based on the integrand in (\ref{Lgn}), using that $\vartheta=(h')^{1/(p-1)}$, we have
\begin{equation}\label{Tr}
  e^{-\kappa \bar x}(v-\E[V]-\kappa C_{1}(b(h(\bar x))^{1/p}+c(v-\E[V])\bar x))-\mu(h'(\bar x))^{-1/(p-1)}=0.
\end{equation}
On the other hand, for the time constraint, if we define
\begin{equation}\label{H}
  H(x):=\int^{x}_{0}(h'(y))^{-1/(p-1)}\dd y,
\end{equation}
which corresponds to \eqref{o_rep_t'} parameterized with $x$ and satisfies $H'=(h')^{-1/(p-1)}$, then
\begin{equation}\label{TC}
  H(\bar x)=T.
\end{equation}
Together, \eqref{h_ODE}, \eqref{Tr}, and \eqref{TC} uniquely determine the three unknowns $\bar x>0$, $\mu\geq0$, and $\chi>0$.

Given the solved values $\bar x$, $\mu$, and $\chi$, as well as the corresponding unique solution $h$ to the ODE \eqref{h_ODE}, we then have (recalling (\ref{h})) $\vartheta(x)=(h'(x))^{1/(p-1)}$ for $x\in(0,\bar x)$, with $\vartheta(x)=\chi^{1/(p-1)}$. Note further that $H(x)=t$ by (\ref{H}), and $H$ is strictly increasing because $h'(x)>0$ for $x<\bar x$, and therefore by inverting it we can write
\begin{equation}\label{th_varth_x}
  \tl\theta^{\star v}_{t}=\vartheta(x(t))=(h'(H^{-1}(t)))^{1/(p-1)},\quad t\in[0,T),
\end{equation}
which is clearly positive.

Next, we verify that $\tl\theta^{\star v}$ from \eqref{th_varth_x} is square-integrable, namely $\tl\theta^{\star v}\in\tl\cA$. First, from the strict increase of the functions $h'$ and $H$ determined by (\ref{h_ODE}) and \eqref{TC}, \eqref{th_varth_x} implies that $\tl\theta^{\star v}_{t}$ is also strictly increasing in $t\in(0,T)$, besides obvious continuity, noting that $p>1$ here. Hence, for its square-integrability it suffices to look at its behavior as $t\nearrow T$, as in the proof of Proposition \ref{pro:4}.

First, note that based on (\ref{o_foc}), we have $\lim_{x\nearrow\bar x}h'(x)=\infty$, while $\lim_{x\nearrow\bar x}h(x)=h(\bar x)<\infty$, so we seek the asymptotic behavior $h'(x)\sim K(\bar x-x)^{-\delta}$ as $x\nearrow\bar x$, for $K,\delta>0$ independent of $x$. Then, by (\ref{h_ODE}), $h''(x)\sim K\delta(\bar x-x)^{-\delta-1}$, which plugged back in the ODE gives the asymptotic relation
\begin{equation}\label{asym}
  \delta(\bar x-x)^{-\delta-1}\sim \frac{p-1}{p\varsigma}e^{-\kappa \bar x}(h(\bar x))^{1/p-1}(\bar x-x)^{-\delta(2p-1)/(p-1)},\quad\text{as }x\nearrow\bar x.
\end{equation}
Matching the powers on both sides of \eqref{asym}, we must have
\begin{equation*}
  -\delta-1=-\frac{\delta(2p-1)}{p-1}\quad\Longrightarrow\quad \delta=\frac{p-1}{p},
\end{equation*}
so that $h'(x)\sim K(\bar x-x)^{-(p-1)/p}$, as $x\nearrow\bar x$, but since $\vartheta=(h')^{1/(p-1)}$, it follows that in the same limit,
\begin{equation}\label{varth_asym}
  \vartheta(x)\sim K^{1/(p-1)}(\bar x-x)^{-1/p}.
\end{equation}
Then, by \eqref{o_rep_t'} and \eqref{o_rep_T'}, we have that as $x\nearrow\bar x$,
\begin{equation*}
  T-t=\int^{\bar x}_{x}(\vartheta(y))^{-1}\dd y\sim K^{-1/(p-1)}\int^{\bar x-x}_{0}u^{1/p}\dd u=K^{-1/(p-1)}\frac{p}{p+1}(\bar x-x)^{1/p+1}.
\end{equation*}
Inverting the last relation and substituting it into (\ref{varth_asym}), we have that for some (generally different) $K>0$ independent of $t$,
\begin{equation*}
  \vartheta(x(t))\sim K((T-t)^{p/(p+1)})^{-1/p}=K(T-t)^{-1/(p+1)},\quad\text{as }t\nearrow T,
\end{equation*}
which implies the desired square-integrability because $p>1$. The proof is complete.
\end{proof}

\end{appendices}

\end{document}